\tikzset{
	%Define standard arrow tip
	>=stealth',
	%Define style for boxes
	punkt/.style={
		circle,
		rounded corners,
		draw=black, thick, %very thick,
		text width=1.5em,
		minimum height=2em,
		text centered},
	punkts/.style={
		circle,
		rounded corners,
		draw=black, thick, %very thick,
		text width=1em,
		minimum height=1em,
		text centered},
	invisible/.style={
		draw=none,
		text width=1.5em,
		minimum height=0em,
		text centered},
	inv/.style={
		draw=none,
		text width=2.5em,
		minimum height=3em,
		text centered},
	% Define arrow style
	pil/.style={
		->,
		thick,
		shorten <=2pt,
		shorten >=2pt,}
}
\newcommand{\ysi}{\sigma}
\newcommand{\ySis}{\Sigma^\star}
\newcommand{\yal}{\alpha}
\newcommand{\ybe}{\beta}
\newcommand{\yga}{\gamma}
\newcommand{\yGa}{\Gamma}
\newcommand{\yGas}{\Gamma^\star}
\newcommand{\yde}{\delta}
\newcommand{\yDe}{\Delta}
\newcommand{\ytau}{\varsigma}
\newcommand{\yemp}{\emptyset}
\newcommand{\ysse}{\subseteq}
\newcommand{\ypow}[1]{\mathcal{P}(#1)}
\newcommand{\yst}{\,|\,} % such that
\newcommand{\yvptsn}[1]{\mathscr{#1}}  % vpts generic name 
\newcommand{\yvpts}[5]{\langle #1, #2, #3, #4, #5 \rangle}  % vpts < #1,#2,#3,#4,#5> 
\newcommand{\yvptsS}{\yvpts{S}{S_{in}}{L}{\yGa}{T}}  %  vpts < S,S_in,L,\Gamma,T>
\newcommand{\yltsconf}[1]{\yltsn{C}_{#1}}  %  configs of VPTS #1
\newcommand{\yiovpts}[6]{\langle #1, #2, #3, #4,#5, #6\rangle}  % ioplts < #1,#2,#3,#4,#5,#6,#7> 
\newcommand{\yiovptsS}{\yiovpts{S}{S_{in}}{L_I}{L_U}{\yGa}{T}}  %  IOVPTS < S,S_in,L_I,L_U,\Gamma,T>
\newcommand{\yiovp}[2]{\yvptsn{IOVP}(#1,#2)}  %  IOVPTS class over inputs #1 and outputs #2 
\newcommand{\yait}{\sharp}  %  symbol for simple transitions
\newcommand{\yvpa}[6]{\langle #1, #2, #3, #4, #5, #6\rangle}  % vpa < #1,#2,#3,#4, #5, #6>
\newcommand{\yvpaA}{\yvpa{S}{S_{in}}{A}{\yGa}{\rho}{F}}  %  VPA < S,S_in,A,Gamma,rho,F>
\newcommand{\yvpaB}{\yvpa{Q}{Q_{in}}{A}{\yDe}{\mu}{G}}  %  VPA < Q,Q_in,A,Delta,mu,G>
\newcommand{\yltsn}[1]{\mathscr{#1}}  % vpts name #1
\newcommand{\yS}{\yltsn{S}}  % name S
\newcommand{\yA}{\yltsn{A}}  % name A
\newcommand{\yB}{\yltsn{B}}  % name B
\newcommand{\yI}{\yltsn{I}}  % name  I
\newcommand{\yQ}{\yltsn{Q}}  % name  Q
\newcommand{\yP}{\yltsn{P}}  % name P
\newcommand{\yV}{\yltsn{V}}  % name V
\newcommand{\ytr}[3]{#1\overset{#2}{\rightarrow}#3} % #1 -- #2  -> #3
\newcommand{\ytrt}[3]{#1\overset{#2}{\Rightarrow}#3} % #1 -- #2  -> #3
\newcommand{\ytrtf}[4]{#1\overset{#2}{\underset{#3}{\rightarrow}}#4} % #1 -- #2/#3 -> #4 for vpts
\newcommand{\ypdaconf}[1]{\yltsn{C}_{#1}}  %  configs of PDA #1
\newcommand{\ypdatrt}[3]{#1\underset{#2}{\mapsto}#3} % #1 -- #2  -> #3 for pda
\newcommand{\ypdatrtf}[4]{#1\overset{#2}{\underset{#3}{\mapsto}}#4} % #1 -- #2/#3 -> #4 for pdas
\newcommand{\ycfgtrtf}[4]{#1\overset{#2}{\underset{#3}{\leadsto}}#4} % #1 -- #2/#3 -> #4 for cfgs
\newcommand{\ycfgtrtfl}[4]{#1\overset{#2}{\underset{#3}{\hookrightarrow}}#4} % #1 -- #2/#3 -> #4 for cfgs
\newcommand{\yioltsn}[1]{\mathscr{#1}} % IOLTS names
\newcommand{\yT}{\yioltsn{T}}  % name T
\newcommand{\yD}{\yioltsn{D}}  % name D
\newcommand{\yF}{\yioltsn{F}}  % name F
\newcommand{\ytru}[4]{#1\overset{#2}{\underset{#3}{\rightarrow}}#4} % #1 -- #2 #3 -> #4
\newcommand{\ytrut}[4]{#1\overset{#2}{\underset{#3}{\Rightarrow}}#4} % #1 -- #2 #3 -> #4
\newcommand{\yconf}[2]{\,\text{\bf vconf}_{#1,#2}\,\,} % conform relation
\newtheorem{theo}{Theorem}%[section] 
\newtheorem{lemm}[theo]{Lemma} 
\newtheorem{prop}[theo]{Proposition}
\newtheorem{defi}[theo]{Definition} 
\newtheorem{exam}[theo]{Example}
\newcommand{\yfun}[3]{#1:#2\rightarrow #3} % function #1 from #2 to #3
\newcommand{\yoh}[1]{\mathcal{O}(#1)} % big Oh
\newcommand{\yeps}{\varepsilon} % empty string
\newcommand{\yfim}{\hfill$\Box$} % fim de uma prova
\newcommand{\ycomp}[1]{\overline{#1}}
\newcommand{\ypush}[1]{#1_+}
\newcommand{\ypop}[1]{#1_-}
\begin{document}
	
% !TeX spellcheck=en_US
\title{Conformance Checking for Pushdown Reactive Systems based on Visibly Pushdown Languages}

\author{Adilson Luiz Bonifacio\thanks{Computing Department, University of Londrina, Londrina, Brazil.}}

\date{} % An optional date to appear under the author(s)

\maketitle

% !TeX spellcheck=en_US
\begin{abstract}
Testing pushdown reactive systems is deemed important to guarantee a precise and robust software development process. 
Usually, such systems can be specified by the formalism of Input/Output Visibly Pushdown Labeled Transition System (IOVPTS), where the interaction with the environment is regulated by a pushdown memory. 
Hence a conformance checking can be applied in a testing process to verify whether an implementation is in compliance to a specification using an appropriate conformance relation. 
In this work we establish a novelty conformance relation based on Visibly Pushdown Languages (VPLs) that can model sets of desirable and undesirable behaviors of systems. 
Further, we show that test suites with a complete fault coverage can be generated using this conformance relation for pushdown reactive systems.
\end{abstract}
% !TeX spellcheck=en_US
\section{Introduction}

A large number of real-world systems, such as communication protocols (internet), vehicle and aircraft control systems, and most of industrial control systems, are indeed reactive systems, where their behavior is dictated by their interaction with the environment. 
That is, reactive systems receive input stimuli from an external environment and, eventually, send back outputs in responses. 

Reactive systems, in general, are critical applications and then their development process must be precise and robust. 
Automatic supporting techniques are then deemed important, particularly in the testing activity, where high costs are associated to maintenance when the testing step is poorly conducted during the system development process.
Model-based testing meets such requirement relying on formal methods, then supporting several models and testing approaches for reactive systems. 

Usually, reactive behaviors can be appropriately specified by the formalism of Input/Output Labeled Transition Systems (IOLTSs)~\cite{tret-model-2008}, where the exchange of input and output stimuli can occur asynchronously. 
Hence a conformance checking can be applied in a testing process to verify whether an implementation is in compliance to a specification~\cite{bonifacio2020cleiej}. 
This conformance checking process is established over a certain conformance relation~\cite{ananbcc-orchestrated-2013,bonifacio2020cleiej} for IOLTS models, such as the Input/Output Conformance ({\bf ioco}) relation~\cite{simap-generating-2014,tret-model-2008} and its variations~\cite{bonifacio2020cleiej,stg,uppaal,torXAkis}. 

In this work we turn into a more complex reactive systems, where the interaction with the environment is regulated by a pushdown memory through the communication channel.  
We study aspects of conformance testing and test suite generation for pushdown reactive systems that can be treated by Input/Output Visibly Pushdown Labeled Transition Systems (IOVPTSs)~\cite{bonifacio2022cleiej} and also by Visibly Pushdown Automata (VPAs)~\cite{alurm-visibly-2004}. 
In this setting a more general conformance relation has been given in the same spirit of the classical {\bf ioco} relation, where any observable implementation behavior that was not already present in the corresponding specification must also be prevented in a conformance checking process~\cite{bonifacio2022cleiej}. 

Here we go further and address this problem in a more general setting, where particular desirable and undesirable behaviors can be specified by Visibly Pushdown Languages (VPLs)~\cite{alurm-visibly-2004}. 
The class of VPLs is strict more powerful than the classical regular  languages treated by {\bf ioco} relation and some extensions. 
We then propose and prove the correctness of an efficient algorithm that can check conformance in this scenario. 
%Further, we show that our algorithm runs in worst case asymptotic polynomial time in the size of both the given specification and the implementation that are put under test. 
The language-based approach is established on a white-box testing scenario, where the structure of implementations under test (IUTs) are previously known by the tester, and the participating models have an auxiliary stack memory.
%Specifications are assumed deterministic, but IUTSs can be even non-deterministic, and no further restrictions are required to constrain either on the specification or in the IUT models. 

The remaining of this paper is organized as follows. 
%Some more closely related works are described in Section~\ref{sec:related}. 
Section~\ref{sec:models} establishes some notations, definitions and preliminary results over VPAs. 
VPTS and IOVPTS models, and their relationship to VPAs are defined in Section~\ref{sec:conformance}. 
Section~\ref{sec:cfl-suites} defines a visibly conformance relation, a fault model over VPLs, and shows how to obtain complete test suites, of polynomial complexity, for checking this class of models. 
Section~\ref{sec:conclusion} gives concluding remarks. 

% !TeX spellcheck = en_US
\section{Notation and Preliminary Results}\label{sec:models}

This section establishes concepts and notation to ease  the reference when testing complex reactive systems. 
We also give some preliminary results that will be useful later and precisely define all models and formalism studied in this work, such as  Visibly Pushdown Automata (VPA) and closure properties over Visibly Pushdown Languages (VPLs). 
%For the sake of completeness, we also include standard definitions and properties of context-free languages and pushdown automata (PDAs).

%\subsection{Preliminaries and Notation}\label{subsec:notation}

We start with some basic notation. 
So let $X$ and $Y$ be sets we indicate by $\ypow{X}=\{Z\yst Z\ysse X\}$ the power set of $X$, and $X-Y=\{z\yst z \in X \text{ and } z \not\in Y\}$ the set difference. 
Also, we assume that $X_Y=X\cup Y$ and when $Y=\{y\}$ is a singleton we may write $X_y$ for $X_{\{y\}}$.
If $X$ is a finite set, the size of $X$ will be indicated by $|X|$.

An alphabet is any non-empty set of symbols.
Let $A$ be an alphabet, a word over $A$ is any finite sequence $\ysi=x_1\ldots x_n$ of symbols in $A$, that is, $n\geq 0$ and $x_i\in A$, for all $i=1,2,\ldots, n$.
When $n=0$, $\ysi$ is the empty sequence, also indicated by $\yeps$.
The set of all finite sequences, or words, over $A$ is denoted by $A^{\star}$, and the set of all nonempty finite words over $A$ is indicated by $A^+$.
When we write $x_1x_2\ldots x_n\in A^{\star}$, it is implicitly assumed that $n\geq 0$ and that $x_i\in A$, $1\leq i\leq n$, unless explicitly noted otherwise.
The length of a word $\alpha$ over $A$ is indicated by $|\alpha|$.
Hence, $|\yeps|=0$.

Further, let $\ysi=\ysi_1\ldots \ysi_n$ and $\rho=\rho_1\ldots \rho_m $ be two words over $A$. 
The concatenation of $\ysi$ and $\rho$, indicated by $\ysi\rho$, is the word $\ysi_1\ldots\ysi_n\rho_1\ldots\rho_m$.
Clearly, $|\ysi\rho|=|\ysi|+|\rho|$.
A language $G$ over $A$ is any set $G\ysse A^\star$ of words over $A$.
Also let $G_1$, $G_2\ysse A^\star$ be languages over $A$. 
The product of $G_1$ and $G_2$, indicated by $G_1 G_2$, is the language 
$\{\ysi\rho\yst \ysi\in G_1, \rho\in G_2\}$. 
If $G\ysse A^\star$ is a language over $A$, then its complement is the language $\ycomp{G}=A^\star-G$.

We will also use the notion of morphism to treat alphabets.
%\begin{defi}\avm{\yxm}\label{def:morph}
So let $A$, $B$ be alphabets. 
A \emph{homomorphism}, or just a \emph{morphism}, from $A$ to $B$ is any function $h:A\rightarrow B^\star$.
%\end{defi}
A morphism $h:A\rightarrow B^\star$ can be extended in a natural way to a function 
$\widehat{h}:A^\star\rightarrow B^\star$, thus
either $\widehat{h}(\ysi) = \yeps$ if $\ysi=\yeps$ or $\widehat{h}(\ysi) = h(a)\widehat{h}(\rho)$ if $\ysi=a\rho$ with $a\in A$.
%\begin{equation*}
%\widehat{h}(\ysi) = 
%\begin{cases}
%\yeps \quad &\text{if $\ysi=\yeps$}\\
%h(a)\widehat{h}(\rho) & \text{if $\ysi=a\rho$ with $a\in A$}.
%\end{cases}
%\end{equation*}
We can further lift $\widehat{h}$ to a function $\widetilde{h}: \ypow{A^\star}\rightarrow \ypow{B^\star}$ in a natural way, by letting $\widetilde{h}(G)=\bigcup\limits_{\ysi\in G}\widehat{h}(\ysi)$,
for all $G\ysse A^\star$. 
In order to avoid cluttering the notation, we may write $h$ instead of $\widehat{h}$, or of $\widetilde{h}$, when no confusion can arise.
When $a\in A$, we define the simple morphism $h_a:A\rightarrow A-\{a\}$ by letting $h_a(a)=\yeps$, and $h_a(x)=x$ when $x\neq a$.  
Hence, $h_a(\ysi)$ erases all occurrences of  $a$ in the word $\ysi$.
%
%The following result will prove very useful later on.
%\begin{lemm}\avm{\ycm}\alb{\ycm}\label{lemm:finite}
%Let $A$ be an alphabet and let $A_1$, $A_2\ysse A^\star$ be languages over $A$.
%Let $B\ysse A^\star$ be a third language over $A$.
%Then, there is a finite language $C\ysse B$ such that the following holds 
%$$A_2\cap B\ysse A_1\quad\text{if and only if}\quad A_2\cap C\ysse A_1.$$ 
%\end{lemm}
%\begin{proof}
%If $A_2\cap B\ysse A_1$, then let $C=\yemp$ and we immediately get $C\ysse B$ and $A_2\cap C=\yemp\ysse A_1$.
%If $A_2\cap B\not\ysse A_1$, let $\ysi\in A_2\cap B$ and $\ysi\not\in A_1$.
%Define $C=\{\ysi\}$. Clearly, $C\ysse B$ and also $A_2\cap C=\{\ysi\}\not\ysse A_1$. 
%\end{proof}

We now turn to transition systems.

 %--------------- VPA  -----------------
 
 \subsection{Visibly Pushdown Automata}\label{subsec:vpa}
 
 A Visibly Pushdown Automaton (VPA)~\cite{alurm-visibly-2004} is, basically, a Pushdown Automaton (PDA)~\cite{hopcu-introduction-1979}, with a transition relation over an alphabet and a pushdown stack (or just a stack, for short) associated to it. 
 Thus a VPA can make use of a potentially infinite memory as with PDA models. 
 Any alphabet $L$ is always partitioned into three disjoint subsets $L=L_c\cup L_r\cup L_i$.
 Elements in the set $L_c$ are ``call symbols'', or ``push symbols'', and specify push actions on the stack.
 Elements in $L_r$ are ``return symbols'', or ``pop symbols'', and indicate pop actions, and in $L_i$ we find ``simple symbols'',  that do not change the stack%
 \footnote{In~\cite{alurm-visibly-2004} symbols in $L_i$ are called ``internal action symbols''.
 	In this text we will reserve that denomination for another special symbol as will be apparent later.}.%
 
 The next definition is a slight extension of the similar notion appearing in~\cite{alurm-visibly-2004}. 
 Here, we also allow $\yeps$-moves, that is, the VPA can change states without reading any symbol from the input.
 \begin{defi}\label{def:pda}
 	A \emph{Visibly Pushdown Automaton} (VPA)~\cite{alurm-visibly-2004} over a finite input alphabet $A$ is a tuple $\yA = \yvpaA$, where:
 	\begin{itemize}
 		\item[---] $A=A_c\cup A_r\cup A_i$ and $A_c$, $A_r$, $A_i$ are pairwise disjoint; 
 		\item[---] $S$ is a finite set of \emph{states};
 		\item[---] $S_{in}\ysse S$ is set of \emph{initial states};
 		\item[---] $\Gamma$ is a finite \emph{stack alphabet}, with $\bot\not \in \yGa$ the \emph{initial stack symbol};
 		\item[---] The \emph{transition relation} is $\rho=\rho_c\cup\rho_r\cup\rho_i$, where $\rho_c\ysse S\times A_c \times \yGa  \times S$, $\rho_r\ysse S\times A_r \times \yGa_\bot  \times S$, and 
 		$\rho_i\ysse S\times (A_i\cup\{\yeps\}) \times \{\yait\} \times S$,  where $\yait\not\in\yGa_\bot$ is a place-holder symbol;
 		\item[---] $F\ysse S$ is the set of \emph{final states}. 
 	\end{itemize}
 	A transition $(p,\yeps,\yait,q)\in \rho_i$ is called an \emph{$\yeps$-move} of $\yA$. 
 	A \emph{configuration} of $\yA$ is any triple $(p,\ysi,\yal)\in S\times A^\star\times(\yGas\{\bot\})$, and the set of all configurations of $\yA$ it is indicated by $\ypdaconf{\yA}$.
 \end{defi}
 
 A tuple $(p,a,Z,q)\in \rho_c$ specifies a \emph{push-transition}.
 We have a \emph{pop-transition} if $(p,a,Z,q)\in \rho_r$, and a \emph{simple-transition} if $(p,a,Z,q)\in \rho_i$.
 The intended meaning for a push-transition $(p,a,Z,q)\in \rho_c$ is that $\yS$, in  state $p$ and reading input $a$,  changes to state $q$ and pushes $Z$ in the stack. 
 A pop-transition $(p,a,Z,q)\in \rho_r$ makes $\yS$, in state $p$ and reading $a$,  pop $Z$ from the stack and change to state $q$. 
 A simple-transition $(p,a,\yait,q)\in \rho_i$  has the intended meaning of reading $a$ and changing  from state $p$ to state $q$, no matter what the topmost stack symbol is. 
 An $\yeps$-transition  $(p,\yeps,\yait,q)\in \rho_i$ indicates that $\yS$ must move from state $p$ to state  $q$, while reading no symbol from the input. 
 
 \begin{exam}
 	Figure~\ref{vpafig} depicts a VPA with $L_c=\{a\}$, $L_r=\{b\}$ and $L_i=\emptyset$.
 	Also, $S_{in}=\{s_0\}$, $F =\{s_f\}$ and $\yGa=\{A,B\}$. 
 	\begin{figure}[tb]
\center

\begin{tikzpicture}[node distance=1cm, auto,scale=.6,inner sep=1pt]
  \node[ initial by arrow, accepting, initial text={}, punkt] (q0) {$s_0$};
  \node[punkt, inner sep=1pt,right=1.5cm of q0] (q1) {$s_1$};  
    \node[punkt, inner sep=1pt,right=2cm of q1] (q2) {$s_2$};  
  \node[punkt, accepting, inner sep=1pt,below=2cm of q1] (qf) {$s_f$};  
      
\path (q0)    edge [ pil, left=50]
                	node[pil,above]{$a/\ypush{B}$} (q1);
\path (q1)    edge [ pil, left=50]
                	node[pil,above]{$b/\ypop{A}$} (q2);
\path (q1)    edge [ pil, left=50]
					node[pil,left]{$b/\ypop{B}$} (qf);
\path (q2)    edge [ pil, left=50]
					node[pil,right]{$b/\ypop{B}$} (qf);
                          	
\path (q1)    edge [loop above] node   {$a/\ypush{A}$} (q1);
\path (q2)    edge [loop above] node   {$b/\ypop{A}$} (q2);

%\path (cof)    edge [pil,bend right=30]   	node[anchor=north,right]{$\tau$} (s0);

\end{tikzpicture}
\caption{The VPA $\yA$ accepting $a^nb^n$ with $n\geq 0$.}

\label{vpafig}
\end{figure}
 	It is clearly that the language $L(A)=\{a^nb^n\}$ with $n\geq 0$ is accepted by the VPA $A$. \yfim
 \end{exam}

%Exemplo do Even Odd
% \begin{exam}
%	Figure~\ref{evenodd-vpa} depicts a VPA with $L_c=\{a\}$, $L_r=\{b,x\}$ and $L_i=\emptyset$.
%	Also, $S_{in}=\{s_0\}$ and $\yGa=\{A\}$. \yfim
%	\input{figs/evenodd.tex}
%\end{exam}

 We can now define the relation $\ypdatrt{}{}{}\ysse\ypdaconf{\yA}\times\ypdaconf{\yA}$ which captures simple moves of a VPA $\yA$.
 Note that the model can make pop moves when the stack contains only the initial stack symbol $\bot$. 
 The semantics of a VPA is the language comprised by all input strings it accepts.
 
 \begin{defi}\label{def:fsa-move}
 	Let $\yA=\yvpaA$ be a VPA. 
 	For all $\ysi\in A^\star$:
 	\begin{enumerate}
 		\item If $(p,a,Z,q)\in\rho_c$  then $\ypdatrt{(p,a\ysi,\yal\bot)}{}{(q,\ysi,Z\yal\bot)}$;
 		\item If $(p,a,Z,q)\in\rho_r$ then $\ypdatrt{(p,a\ysi, Z\yal\bot)}{}{(q,\ysi,\yal\bot)}$ when $Z\neq \bot$, or  $\ypdatrt{(p,a\ysi, \bot)}{}{(q,\ysi,\bot)}$ when $Z=\bot$; 
 		\item If $(p,a,\yait,q)\in\rho_i$  then $\ypdatrt{(p,a\ysi, \yal\bot)}{}{(q,\ysi, \yal\bot)}$. 
 	\end{enumerate}
 	The set $L(\yA)=\big\{ \ysi\in A^\star\yst \ypdatrtf{(s_0,\ysi,\bot)}{\star}{\yA}{(p,\yeps,\ybe)},\,\,s_0\in S_{in},\,\,p\in F\big\}$ 
 	is the \emph{language accepted} by $\yA$.
 	Two VPAs $\yA$ and $\yB$ are said to be \emph{equivalent} when $L(\yA)=L(\yB)$. 
 \end{defi}
 We may also write $\ypdatrt{(p,a\ysi,\yal)}{\yA}{(q,\ysi,\yga)}$ if it is important to explicitly mention  the VPA $\yA$.
 It is clear that when $\ypdatrt{(p,a\ysi,\yal)}{}{(q,\mu,\ybe)}$ then we get  $(q,\mu,\ybe)\in \ypdaconf{\yA}$, so that $\mapsto$ is a relation on the set $\ypdaconf{\yA}$. 
 The $n$-th power of the relation $\ypdatrt{}{}{}$ will be indicated by $\ypdatrtf{}{n}{}{}$ for all $n\geq 0$, and its reflexive and transitive closure will be indicated by $\ypdatrtf{}{\star}{}{}$.
 
A language is said to be a Visibly Pushdown Language (VPL) when it is accepted by a VPA. 
That is, let $A$ be an alphabet and let $G\ysse A^\star$ be a language over $A$, then $G$ is a \emph{Visibly Pushdown Language} if there is a VPA $\yA$  such that $L(\yA)=G$.
 
% \begin{defi}\label{def:llc}
% 	Let $A$ be an alphabet and let $G\ysse A^\star$ be a language over $A$.
% 	Then $G$ is a \emph{visibly pushdown language} if there is a VPA $\yA$  such that $L(\yA)=G$.
% \end{defi}
 
 Given any VPA, we can always construct and equivalent VPA with no $\yeps$-moves and the same number of states as the original VPA. 
 \begin{prop}\label{prop:no-eps}
 	For any  VPA  we can effectively construct an equivalent VPA with no $\yeps$-moves and with the same number of states.
 \end{prop}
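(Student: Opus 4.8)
The plan is to carry out the $\yeps$-elimination purely at the level of states, which is exactly what makes it possible to preserve the state set. Fix a VPA $\yA=\yvpaA$. The crucial observation is that an $\yeps$-move $(p,\yeps,\yait,q)\in\rho_i$ is a simple-transition: it neither modifies the stack nor consults the topmost stack symbol. Hence the reachability it induces is a plain binary relation on $S$, independent of any stack content: write $p\leadsto q$ when $q$ is reachable from $p$ by a (possibly empty) chain of $\yeps$-moves. This relation is the reflexive–transitive closure of the finite relation $\{(p,q)\yst (p,\yeps,\yait,q)\in\rho_i\}$ and is therefore effectively computable, e.g.\ by a fixed-point iteration.

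I would then define the $\yeps$-free automaton $\yA'=\yvpa{S}{S_{in}}{A}{\yGa}{\rho'}{F'}$ over the same state set, the same initial states, and the same stack alphabet, where $\rho'$ is obtained from $\rho$ by (i) deleting every $\yeps$-move, and (ii) for each surviving transition $(p,a,Z,q)$ of $\rho$ — whether it is a push-, pop-, or simple-transition — and each state $r$ with $r\leadsto p$, adding $(r,a,Z,q)$ as a transition of the same kind; and where the new final-state set is $F'=\{p\in S\yst p\leadsto f\text{ for some }f\in F\}$. Since $\leadsto$ is reflexive, taking $r=p$ shows that every original non-$\yeps$ transition is retained. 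By construction $\yA'$ has no $\yeps$-moves and exactly $|S|$ states, so the whole proof reduces to establishing $L(\yA')=L(\yA)$.

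For the inclusion $L(\yA')\ysse L(\yA)$ I would note that every single move $\ypdatrt{(r,a\ysi,\yal\bot)}{\yA'}{(q,\ysi,\yga\bot)}$ of $\yA'$ is simulated in $\yA$ by the chain $r\leadsto p$ of $\yeps$-moves (which leaves $a\ysi$ and $\yal\bot$ untouched) followed by the original move $\ypdatrt{(p,a\ysi,\yal\bot)}{\yA}{(q,\ysi,\yga\bot)}$; concatenating these simulations along an accepting run of $\yA'$, and then appending the $\yeps$-moves witnessing that the final state belongs to $F'$, yields an accepting run of $\yA$. For the converse I would take an accepting run of $\yA$ on a word $\ysi$ and decompose it into maximal blocks of $\yeps$-moves alternating with non-$\yeps$ moves, obtaining $s_0\leadsto p_1$, then $(p_1,a_1,Z_1,q_1)$, then $q_1\leadsto p_2$, then $(p_2,a_2,Z_2,q_2)$, and so on up to the last real move, which reaches some state $q_k$, followed by a last block $q_k\leadsto f$ with $f\in F$. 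Each of $(s_0,a_1,Z_1,q_1),(q_1,a_2,Z_2,q_2),\dots$ lies in $\rho'$ by construction, so $\yA'$ consumes $\ysi$ and ends in state $q_k$ with the very same stack content; since $q_k\leadsto f\in F$ gives $q_k\in F'$, the word is accepted by $\yA'$. The degenerate run with no real move (so $\ysi=\yeps$ and $s_0\leadsto f$ with $f\in F$) is handled by the fact that then $s_0\in F'$.

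The one point that needs genuine care — and the part I would write out in full — is the treatment of the $\yeps$-moves occurring \emph{after} the last non-$\yeps$ move, the empty word being the extreme case: there is no subsequent transition into which such moves could be folded, and since we may not introduce new states we cannot re-route them either; replacing $F$ by its set of $\yeps$-predecessors is precisely what absorbs them. The delicate verification is that this substitution, together with absorbing $\yeps$-closures only on the source side of transitions, does not spuriously enlarge $L(\yA')$ beyond $L(\yA)$. Everything else is a routine induction on the length of runs, exploiting that $\yeps$-moves commute trivially with all stack operations, and a straightforward check that $\yA'$ still respects the $A=A_c\cup A_r\cup A_i$ partition and remains $\yeps$-free.
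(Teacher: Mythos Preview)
Your construction is correct and your correctness argument is sound; the key observation that $\yeps$-moves are stack-neutral, so that $\yeps$-reachability is a pure state relation, is exactly the right one. Your route, however, differs from the paper's in where the $\yeps$-closure is absorbed. You fold the closure only on the \emph{source} side of each real transition, keep $S_{in}$ unchanged, and enlarge the final states to their $\yeps$-predecessors $F'$. The paper instead folds the closure on \emph{both} source and target of each real transition, enlarges the initial states to $Q_{in}=\bigcup_{s\in S_{in}}E(s)$, and keeps $F$ unchanged. These are dual (and equally standard) variants of $\yeps$-elimination: the trailing block of $\yeps$-moves that you handle by enlarging $F$ is in the paper absorbed into the target side of the last transition (and, for $\ysi=\yeps$, into $Q_{in}$). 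Your version produces fewer new transitions and makes the invariant ``$\yA'$ lands exactly where $\yA$ lands after a non-$\yeps$ step'' slightly cleaner to state; the paper's version keeps $F$ fixed, which is convenient in later arguments that refer to the same final-state set. Either way the state set is preserved, so both establish the proposition.
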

% \begin{proof}
% 	We first install in the resulting VPA $\yB$ all transitions of $\yA$ that are not $\yeps$ transitions.
% 	Then, the main idea is to find sequences of $\yeps$-moves between pairs of states in $\yA$. If there is such a sequence of $\yeps$-moves from $s$ to $r$, and another such sequence from $p$ to $q$, then for every transition of $\yA$ having $r$ as the start state and $p$ as the target state,  we define  a similar transition from $s$ to $q$ in $\yB$.  
% 	A complete proof can be seen in Appendix~\vref{app:prop:no-eps}. 
% \end{proof}
\begin{proof}
	Let $\yA=\yvpaA$.
	Define the mapping $\yfun{E}{S}{\ypow{S}}$ by
	$$E(s)=\{p\yst \ypdatrtf{(s,\yeps,\bot)}{n}{\yA}{(p,\yeps,\bot)}, n\geq 0\},$$ 
	that is, $E(s)$ is the set of all states that can be reached from $s$ through $\yeps$-moves.
	A simple inductive algorithm guarantees that we can effectively compute $E$.
	
	Consider the VPA $\yB=\yvpa{S}{Q_{in}}{A}{\yGa}{\mu}{F}$ where 
	$Q_{in}=\bigcup\limits_{s\in S_{in}} E(s)$, and 
	the set of transitions $\mu$ is obtained from $\rho$ by removing all $\yeps$-transitions from $\rho$, and then adding to $\mu$ all transitions $(r,a,Z,p)$ where $a\neq \yeps$, $(s,a,Z,q)$ is a transition in $\rho$ and, by means of $\yeps$-transitions alone, we can reach $s$ from $r$ and  $p$ from $q$, that is,
	$$\mu=\left[\strut\rho-\{(s,\yeps,\yait,p)\yst s,p\in S\}\right]\bigcup \left\{\strut(r,a,Z,p)\yst (s,a,Z,q)\in \rho, a\neq \yeps, s\in E(r), p\in E(q) \right\}.$$
	Clearly, $\yB$ has no $\yeps$-moves, and has the same number of states as $\yA$.
	
	The next two claims will be used to show that $\yA$ and $\yB$ are equivalent VPAs. 
	\begin{description}
		\item[\sc Claim 1:] If $\ysi\neq \yeps$ and $\ypdatrtf{(s,\ysi,\yal)}{\star}{\yA}{(t,\yeps,\ybe)}$, then 
		$\ypdatrtf{(s,\ysi,\yal)}{\star}{\yB}{(t,\yeps,\ybe)}$.\\
		{\sc Proof:} Let $\ypdatrtf{(s,\ysi,\yal)}{n}{\yA}{(t,\yeps,\ybe)}$, $n\geq 1$ and $\ysi=a\yde$ where $a\in A$. 
		When $n=1$, we have $\yde=\yeps$ and there is a transition $(s,a,Z,t)$ in $\rho$. 
		Since $s\in E(s)$ and $t\in E(t)$, the construction gives $(s,a,Z,t)$ in $\mu$ and the result follows immediately. 
		When $n>1$ we may write
		$$(s,\ysi,\yal)=\ypdatrtf{(s,a\yde,\yal)}{k}{\yA}{(r,a\yde,\yal)}\ypdatrtf{}{}{\yA}{(q,\yde,\yga)}\ypdatrtf{}{m}{\yA}{(t,\yeps,\ybe)},$$
		where $k\geq 0$ and $k+1+m=n$, that is, the first $k$ moves of $\yA$ are $\yeps$-moves, while the next one is a move over $a$, through a transition $(r,a,Z,q)\in\rho$, for some $Z\in\yGa_\bot$. 
		Then, $r\in E(s)$. 
		If $\yde=\yeps$, the last $m$ moves of $\yA$ are also $\yeps$-moves, and we get $\yga=\ybe$ and $t\in E(q)$.
		In this case, the construction gives $(s,a,Z,t)\in\mu$.
		Since $\ysi=a\yde=a$ we have  
		$(s,\ysi,\yal)=\ypdatrtf{(s,a,\yal)}{}{\yB}{(t,\yeps,\yga)}=(t,\yeps,\ybe)$, as desired.
		Now let $\yde\neq \yeps$. Since $q\in E(q)$ and we already have $r\in E(s)$, the construction gives $(s,a,Z,q)\in \mu$, so that $(s,\ysi,\yal)=\ypdatrtf{(s,a\yde,\yal)}{}{\yB}{(q,\yde,\yga)}$.
		But $m<n$, and since $\yde\neq \yeps$,  the induction gives $\ypdatrtf{(q,\yde,\yga)}{\star}{\yB}{(t,\yeps,\ybe)}$.
		Hence, $\ypdatrtf{(s,\ysi,\yal)}{}{\yB}{(t,\yeps,\yga)}$ and the claim holds.
	\end{description}
	Now let $\ysi\in L(\yA)$.
	Then, $\ypdatrtf{(s_0,\ysi,\bot)}{\star}{\yA}{(f,\yeps,\yga)}$ where $s_0\in S_{in}$ and $f\in F$.
	%Since $f\in E(f)$ and $G=F$, we get $f\in G$.
	When $\ysi=\yeps$ we get $\yga=\bot$ and $f\in E(s_0)$.
	Since $s_0\in S_{in}$ we get $f\in Q_{in}$.
	Clearly, $\ypdatrtf{(f,\yeps,\bot)}{0}{\yB}{(f,\yeps,\bot)}$ and so $\ysi=\yeps\in L(\yB)$.
	When $\ysi\neq \yeps$,  Claim 1 gives $\ypdatrtf{(s_0,\ysi,\bot)}{\star}{\yB}{(f,\yeps,\yga)}$.
	Since $s_0\in E(s_0)$, the construction gives $s_0\in Q_{in}$.
	Hence, $\ysi\in L(\yB)$.
	We may now conclude that $L(\yA)\ysse L(\yB)$.
	
	For the converse, we state
	\begin{description}
		\item[\sc Claim 2:] If $\ypdatrtf{(s,\ysi,\yal)}{n}{\yB}{(t,\yeps,\ybe)}$ with $n\geq 0$, then 
		$\ypdatrtf{(s,\ysi,\yal)}{\star}{\yA}{(t,\yeps,\ybe)}$.\\
		{\sc Proof:} When $n=0$ the result follows immediately. Let $n>0$.
		Since $\yB$ has no $\yeps$-moves, we must have $\ysi=a\yde$ with $a\in A$, and 
		$$(s,\ysi,\yal)=\ypdatrtf{(s,a\yde,\yal)}{}{\yB}{(r,\yde,\yga)}\ypdatrtf{}{n-1}{\yB}{(t,\yeps,\ybe)},$$
		where the first move was through a transition $(s,a,Z,r)\in \mu$. 
		By the construction, we must have $(p,a,Z,q)\in \rho$ with $p\in E(s)$ and $r\in E(q)$.
		This gives $\ypdatrtf{(s,a\yde,\yal)}{\star}{\yA}{(p,a\yde,\yal)}$ and $\ypdatrtf{(q,\yde,\yga)}{\star}{\yA}{(r,\yde,\yga)}$.
		Composing we get 
		$$(s,\ysi,\yal)=\ypdatrtf{(s,a\yde,\yal)}{\star}{\yA}{(p,a\yde,\yal)}\ypdatrtf{}{}{\yA}{(q,\yde,\yga)}\ypdatrtf{}{\star}{\yA}{(r,\yde,\yga)}.$$
		Using the induction hypothesis we get $\ypdatrtf{(r,\yde,\yga)}{\star}{\yA}{(t,\yeps,\ybe)}$.
		So, $\ypdatrtf{(s,\ysi,\yal)}{\star}{\yA}{(t,\yeps,\ybe)}$ and the claim holds.
	\end{description}
	Now let $\ysi\in L(\yB)$, so that $\ypdatrtf{(q_0,\ysi,\bot)}{\star}{\yB}{(t,\yeps,\yga)}$, with $q_0\in Q_{in}$ and $t\in F$.
	Using Claim 2, we may write $\ypdatrtf{(q_0,\ysi,\bot)}{\star}{\yA}{(t,\yeps,\yga)}$.
	Also, the construction gives some $s_0\in S_{in}$ with $q_0\in E(s_0)$. 
	Hence,  $\ypdatrtf{(s_0,\ysi,\bot)}{\star}{\yA}{(q_0,\ysi,\bot)}$. 
	Composing we get $\ypdatrtf{(s_0,\ysi,\bot)}{\star}{\yA}{(t,\yeps,\yga)}$, that is $\ysi\in L(\yA)$.
	This shows that $L(\yB)\ysse L(\yA)$.
	
	We now have $L(\yA)=L(\yB)$, and the proof is complete. %\yfim
\end{proof}
 
 Next we define deterministic VPAs.
 Determinism captures the idea that there is at most one computation for a given input string.
 Since our notion of a VPA extends the original~\cite{alurm-visibly-2004}, we also have to deal with $\yeps$-moves.
 \begin{defi}\label{def:vpa-determinism}
 	Let $\yA=\yvpaA$ be a VPA. 
 	We say that $\yA$ is \emph{deterministic} if $\vert S_{in}\vert\le 1$,  and the following conditions hold: 
 	\begin{enumerate}
 		\item $(p,x,Z_i,q_i)\in\rho_c$ with $i=1,2$, implies $Z_1=Z_2$ and $q_1=q_2$; 
 		\item $(p,x,Z,q_i)\in\rho_r\cup \rho_i$ with $i=1,2$,  implies $q_1=q_2$; 
 		\item  $(p,x,Z,q_1)\in\rho$ with $x\neq \yeps$, implies $(p,\yeps,\yait,q_2)\not\in\rho$ for all $q_2\in S$.
 	\end{enumerate}
 	A language $L$ is a \emph{deterministic VPL} if $L=L(\yA)$ for some deterministic VPA $\yA$.
 \end{defi}
 
 It is worth noticing that Definition~\ref{def:vpa-determinism} does not prohibit $\yeps$-moves in deterministic VPAs.
 With deterministic VPAs, however,  computations are always  unique, as expected.
 \begin{prop}\label{prop:vpa-determ}
 	Let $\yA=\yvpaA$ be a deterministic VPA, let $(p,\ysi,\yal)\in\ypdaconf{\yA}$, and take $n\geq 0$.
 	Then $\ypdatrtf{(p,\ysi,\yal)}{n}{}{(q_i,\mu_i,\yga_i)}$, $i=1,2$, implies 
 	$(q_1,\mu_1,\yga_1)=(q_2,\mu_2,\yga_2)$.
 \end{prop}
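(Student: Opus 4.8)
The plan is to prove the statement by induction on $n$, the crux being a one-step determinism lemma: \emph{for a deterministic VPA $\yA$ and any configuration $c\in\ypdaconf{\yA}$ there is at most one $c'$ with $\ypdatrt{c}{}{c'}$}. Granting this, the proposition follows readily. For $n=0$ both triples equal $(p,\ysi,\yal)$. For $n>0$, I would decompose each computation $\ypdatrtf{(p,\ysi,\yal)}{n}{}{(q_i,\mu_i,\yga_i)}$ as a first step $\ypdatrt{(p,\ysi,\yal)}{}{c_i}$ followed by $\ypdatrtf{c_i}{n-1}{}{(q_i,\mu_i,\yga_i)}$; the one-step lemma gives $c_1=c_2$, and the induction hypothesis applied to the remaining $n-1$ moves from this common configuration yields $(q_1,\mu_1,\yga_1)=(q_2,\mu_2,\yga_2)$.

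So the real work is the one-step lemma, which I would prove by a case analysis on the source state $p$ of the configuration $c=(p,\ysi,\yal)$, organized around Definition~\ref{def:vpa-determinism}(3). \textbf{Case A: $p$ has an outgoing $\yeps$-move $(p,\yeps,\yait,q)\in\rho_i$.} Then condition~(3) forbids every non-$\yeps$ transition out of $p$, so by Definition~\ref{def:fsa-move} the only moves from $c$ are $\yeps$-moves $\ypdatrt{(p,\ysi,\yal)}{}{(q',\ysi,\yal)}$ with $(p,\yeps,\yait,q')\in\rho_i$; condition~(2), taken with $x=\yeps$ and $Z=\yait$, forces $q'=q$, so $c'$ is unique. \textbf{Case B: $p$ has no outgoing $\yeps$-move.} If $\ysi=\yeps$ no move is possible and the claim is vacuous. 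Otherwise $\ysi=a\yde$ with $a\in A$, and since $A_c$, $A_r$, $A_i$ are pairwise disjoint, $a$ lies in exactly one of them, so only push-, only pop-, or only simple-transitions can apply. For a push ($a\in A_c$), condition~(1) makes the pushed symbol and the target state unique; for a simple move ($a\in A_i$), condition~(2) makes the target unique and the stack is unchanged; for a pop ($a\in A_r$), Definition~\ref{def:fsa-move}(2) only lets a transition $(p,a,Z,q)$ fire when $Z$ is the topmost symbol of $\yal$ (the $Z=\bot$ clause covering the case $\yal=\bot$), so the stack symbol consumed is forced by $c$ itself, and then condition~(2) pins down $q$. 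In every subcase $c'$ is unique.

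I expect the main obstacle to be the bookkeeping in Case~B for pop-transitions together with the interplay between $\yeps$-moves and ordinary moves. One must be careful that ``$p$ has a non-$\yeps$ transition'' in condition~(3) refers to \emph{any} outgoing transition over \emph{any} symbol, not merely the one that happens to fire, so that a single $\yeps$-move out of $p$ genuinely excludes all symbol moves; and one must appeal to the definition of $\mapsto$, not just to the transition relation, to see that the current stack top determines which pop-transition is applicable. Once these two points are made explicit, the case analysis is routine and the outer induction closes.
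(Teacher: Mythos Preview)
Your proposal is correct and follows essentially the same approach as the paper: both proceed by induction on $n$ and reduce to a one-step determinism argument via the same case analysis on $\yeps$-moves versus push/pop/simple moves dictated by Definition~\ref{def:vpa-determinism}. The only cosmetic difference is that the paper peels off the \emph{last} step (applying the induction hypothesis first and then analyzing the single move), whereas you peel off the \emph{first} step and isolate the one-step determinism as an explicit lemma; this changes nothing of substance.
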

 \begin{proof}
 	By induction on $n\geq 0$. When $n=0$ the result is immediate.
 	Take $n>0$. 
 	Using the induction hypothesis we can write $\ypdatrtf{(p,\ysi,\yal)}{n-1}{}{(q,\mu,\yga)}\ypdatrtf{}{1}{}{(q_i,\mu_i,\yga_i)}$, where the last single steps 
 	used the transitions $(q,x_i,Z_i,q_i)\in\rho$, $i=1,2$.
 	
 	First assume that $x_1=\yeps$. Then Definition~\ref{def:pda} gives $(q,x_1,Z_1,q_1)=(q,\yeps,\yait,q_1)\in\rho_i$.
 	From $\ypdatrtf{(q,\mu,\yga)}{}{}{(q_1,\mu_1,\yga_1)}$ and Definition~\ref{def:fsa-move}, we conclude that $\mu_1=\mu$ and $\yga=\yga_1$.
 	Since $\yA$ is deterministic, Definition~\ref{def:vpa-determinism}(3) forces
 	$x_2=\yeps=x_1$, and then Definition~\ref{def:pda} gives $(q,x_2,Z_2,q_2)=(q,\yeps,\yait,q_2)\in \rho_i$.
 	So, from $\ypdatrtf{(q,\mu,\yga)}{}{}{(q_2,\mu_2,\yga_2)}$ we now get $\mu=\mu_2$ and $\yga=\yga_2$.
 	Since $\yA$ is deterministic, using Definition~\ref{def:vpa-determinism}(2) we get $q_1=q_2$, and then we have $(q_1,\mu_1,\yga_1)=(q_2,\mu_2,\yga_2)$.
 	
 	Likewise when $x_2=\yeps$.
 	We can now assume  $x_1$, $x_2\in A$.
 	From $\ypdatrtf{(q,\mu,\yga)}{}{}{(q_i,\mu_i,\yga_i)}$ and Definition~\ref{def:fsa-move}  we get  $x_1\mu_1=\mu=x_2\mu_2$.
 	Hence, $x_1=x_2=x$, and $\mu_1=\mu_2=\yde$.
 	It suffices to verify that $q_1=q_2$ and $\yga_1=\yga_2$. 
 	We have $\ypdatrtf{(q,x\yde,\yga)}{}{}{(q_i,\yde,\yga_i)}$ using the transitions $(q,x,Z_i,q_i)\in\rho$. 
 	There are three cases:
 	\begin{description}
 		\item[{\sc Case 1:}] $x\in A_c$.
 		Since $\yA$ is deterministic and $(q,x,Z_i,q_i)\in\rho_c$,  Definition~\ref{def:vpa-determinism}(1) guarantees that $Z_1=Z_2$ and $q_1=q_2$.
 		From $\ypdatrtf{(q,x\yde,\yga)}{}{}{(q_i,\yde,\yga_i)}$  and Definition~\ref{def:fsa-move}(1) we get 
 		$\yga_1=Z_1\yga$ and $\yga_2=Z_2\yga$, so that $\yga_1=\yga_2$, and we are done.
 		
 		\item[{\sc Case 2:}] $x\in A_i$.
 		From Definition~\ref{def:pda} we get $(q,x_i,Z_i,q_i)=(q,x,\yait,q_i)\in\rho$, $i=1,2$.
 		Then, Definition~\ref{def:vpa-determinism}(2) forces $q_1=q_2$.
 		Since  $\ypdatrtf{(q,x\yde,\yga)}{}{}{(q_i,\yde,\yga_i)}$ Definition~\ref{def:fsa-move}(3) implies
 		$\yga_1=\yga=\yga_2$, concluding this case.
 		
 		\item[{\sc Case 3:}] $x\in A_r$. 
 		First, assume $\yga=\bot$.
 		Then $\ypdatrtf{(q,x\yde,\bot)}{}{}{(q_i,\yde,\yga_i)}$ using the transitions $(q,x,Z_i,q_i)\in\rho_r$.
 		Definition~\ref{def:fsa-move}(2) gives $\yga_1=\bot=\yga_2$ and $Z_1=\bot=Z_2$.
 		So, $(q,x,\bot,q_1)$, $(q,x,\bot,q_2)\in\rho_r$, and Definition~\ref{def:vpa-determinism}(2) implies $q_1=q_2$.
 		Now, assume $\yga\neq \bot$.
 		Definition~\ref{def:fsa-move}(2) now gives $\yga=Z_1\yga_1$ and $\yga=Z_2\yga_2$, so that $Z_1=Z_2$ and $\yga_1=\yga_2$.
 		Once again  Definition~\ref{def:vpa-determinism}(2) implies $q_1=q_2$, and we conclude this case. 
 	\end{description}
 	The proof is complete. %\yfim
 \end{proof}
 
 The next result shows that we can remove $\yeps$-moves, while still preserving determinism.
 \begin{prop}\label{prop:no-eps-determ}
 	Given a deterministic VPA $\yA$, we can obtain an equivalent deterministic VPA $\yB$ with no $\yeps$-moves and with the same number of states. 
 \end{prop}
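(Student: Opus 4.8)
The plan is to recycle the $\yeps$-elimination construction from the proof of Proposition~\ref{prop:no-eps}, but to modify it so that determinism is not lost. In that construction a new transition is allowed to ``fan out'' over the whole $\yeps$-closure of its target, which destroys determinism; instead I would route each new transition to the target of the original non-$\yeps$ move and absorb the $\yeps$-moves that $\yA$ performs \emph{after} its last input letter into an enlarged set of final states. Concretely, for a deterministic $\yA=\yvpaA$ (so $|S_{in}|\le 1$), I would keep the effectively computable map $\yfun{E}{S}{\ypow{S}}$ with $E(s)=\{p\yst \ypdatrtf{(s,\yeps,\bot)}{n}{\yA}{(p,\yeps,\bot)},\ n\ge 0\}$ and set $\yB=\yvpa{S}{S_{in}}{A}{\yGa}{\mu}{G}$, where
$$\mu=\big\{(r,a,Z,q)\yst a\neq\yeps \text{ and } (s,a,Z,q)\in\rho \text{ for some } s\in E(r)\big\}\quad\text{and}\quad G=\{s\in S\yst E(s)\cap F\neq\yemp\}.$$
By construction $\yB$ has no $\yeps$-moves, has the same state set as $\yA$, and still satisfies $|S_{in}|\le 1$.

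The first thing I would check is that $\yB$ is deterministic. The crucial structural fact is that, for every state $r$, the set $E(r)$ is swept out by a single non-branching $\yeps$-chain $r=r_0,r_1,r_2,\dots$: by Definition~\ref{def:vpa-determinism}(2) each state has at most one outgoing $\yeps$-move, and by Definition~\ref{def:vpa-determinism}(3) a state possessing an outgoing $\yeps$-move has no outgoing non-$\yeps$ move. Hence at most one state of $E(r)$ --- the endpoint of the chain when it is acyclic, and none when it closes into a cycle --- carries ordinary (non-$\yeps$) transitions; call it $s^\star$ when it exists. Consequently the transitions of $\mu$ leaving $r$ are precisely those of $\yA$ leaving $s^\star$ with their source renamed to $r$, so they inherit conditions (1) and (2) of Definition~\ref{def:vpa-determinism} from $\yA$, while condition (3) holds vacuously because $\mu$ has no $\yeps$-moves. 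Together with $|S_{in}|\le 1$ this yields determinism of $\yB$ (so, by Proposition~\ref{prop:vpa-determ}, its computations are unique).

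Next I would prove $L(\yA)=L(\yB)$. For $L(\yA)\ysse L(\yB)$, I would take an accepting run $\ypdatrtf{(s_0,\ysi,\bot)}{\star}{\yA}{(f,\yeps,\ybe)}$ with $\ysi=a_1\cdots a_n$, $s_0\in S_{in}$, $f\in F$, and cut it into blocks: for each $i$ a (possibly empty) stretch of $\yeps$-moves from $s_{i-1}$ to some $t_{i-1}\in E(s_{i-1})$ followed by one move through a transition $(t_{i-1},a_i,Z_i,s_i)\in\rho$, and finally a stretch of $\yeps$-moves from $s_n$ to $f$. Because $\yeps$-moves change neither the remaining input nor the stack, each block ``$\yeps$-stretch then $a_i$'' is simulated by the single transition $(s_{i-1},a_i,Z_i,s_i)\in\mu$, which has the same stack effect; composing gives $\ypdatrtf{(s_0,\ysi,\bot)}{\star}{\yB}{(s_n,\yeps,\ybe)}$, and $f\in E(s_n)\cap F$ gives $s_n\in G$, so $\ysi\in L(\yB)$; the case $\ysi=\yeps$ reduces to $s_0\in G\iff E(s_0)\cap F\neq\yemp\iff\yeps\in L(\yA)$. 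For the converse, each transition used along an accepting run of $\yB$ unfolds, by the definition of $\mu$, into a stretch of $\yeps$-moves followed by one non-$\yeps$ move of $\yA$, and membership $s_n\in G$ unfolds into a trailing stretch of $\yeps$-moves of $\yA$ into $F$; concatenating these (again using stack-neutrality of $\yeps$-moves) produces an accepting run of $\yA$ on the same word. Since $E$, $\mu$ and $G$ are effectively computable, this finishes the proof.

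I expect the determinism argument to be the only real obstacle: unlike in Proposition~\ref{prop:no-eps}, where the extra transitions deliberately range over all of $E(q)$ and all $\yeps$-preimages $E(r)$, here one must keep the extra transition pointed at $q$ and shift the trailing $\yeps$-moves into $G$, and the verification that $\mu$ is still deterministic rests entirely on the observation that Definition~\ref{def:vpa-determinism}(2)--(3) forces $\yeps$-reachability from any state to follow a single chain whose only possible state with ordinary outgoing transitions is its endpoint.
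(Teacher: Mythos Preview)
Your construction is correct and is genuinely different from the paper's proof. The paper proceeds in two phases: it first collapses each $\yeps$-cycle of $\yA$ into a single representative state (with a rather long verification that determinism and language are preserved), and then, on the resulting $\yeps$-acyclic VPA, removes the remaining $\yeps$-transitions one at a time, each time shortcutting $(p,\yeps,\yait,q)$ by copying the non-$\yeps$ transitions of $q$ back to $p$ and adjusting $S_{in}$ and $F$ locally. Your approach instead performs a single global closure: pre-compose every non-$\yeps$ transition with the $\yeps$-chain leading into it, leave $S_{in}$ untouched, and enlarge $F$ to $G=\{s: E(s)\cap F\neq\yemp\}$ to absorb trailing $\yeps$-moves. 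The determinism argument you give --- that conditions~(2) and~(3) of Definition~\ref{def:vpa-determinism} force $E(r)$ to be a non-branching chain whose intermediate states carry no non-$\yeps$ transitions, so that the outgoing $\mu$-transitions at $r$ are exactly those of the unique endpoint $s^\star$ (or none, if the chain is eventually cyclic) --- is the key observation, and it is sound. Your route is shorter and keeps the state set literally unchanged; the paper's route, by collapsing cycles, may yield fewer states but needs the two-stage case analysis. Both rely on the same structural fact about deterministic VPAs, but you exploit it up front rather than recovering it piecemeal.
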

% \begin{proof}
% 	The argument resembles the proof of Proposition~\ref{prop:no-eps}, 
% 	but now we have to maintain the determinism at every step.
% 	The main strategy is, first, to remove $\yeps$-cycles in $\yA$, and then proceed to remove any acyclic sequence of $\yeps$ moves that can still be found in $\yA$.   
% 	A complete proof is given in Appendix~\vref{app:prop:no-eps-determ}. 
% \end{proof}
\begin{proof}
	Let $\yA=\yvpaA$.
	First we eliminate cycles of $\yeps$-transitions, and in a second step we eliminate the remaining $\yeps$-transitions.
	Let 
	\begin{align}\label{eq:2-23a}
		(s_1,\yeps,\yait,s_{2}),   (s_2,\yeps,\yait,s_{3}), \ldots, (s_{k-1},\yeps,\yait,s_{k}),  (s_k,\yeps,\yait,s_{k+1}),
	\end{align}
	with $s_{k+1}=s_1$ and $k\geq 1$, be a $\yeps$-cycle in $\yA$.
	We construct the new VPA $\yB=\yvpa{Q}{Q_{in}}{A}{\yGa}{\mu}{E}$, mapping the cycle into a state that occurs in it, say into $s_1$.
	Let $T=\{ (s_j,\yeps,\yait,s_{j+1}): j=1,\ldots, k\}$, $J=\{s_j: j=1,\ldots, k\}$.
	Start with $\yB=\yA$ and transform $\yB$ as follows:
	\begin{enumerate}
		\item[(a)] Transitions: (i) remove $T$ from $\mu$; (ii) for all $(p,x,Z,q)\in\rho$ with $p\not\in J$ and $q\in J$
		remove  $(p,x,Z,q)$ from $\mu$ and add $(p,x,Z,s_1)$ to $\mu$.
		\item[(b)] States: (i) remove $s_j$ from $Q$ and $E$, $j=2,\ldots, k$; (ii) if $F\cap J\neq \yemp$, then add $s_1$ to $E$; (iii) 
		if $S_{in}\cap J\neq \yemp$, then let $Q_{in}=\{s_1\}$.
	\end{enumerate}
	We have to show that $\yB$ is equivalent to $\yA$, and that $\yB$ is deterministic.
	\begin{description}
		\item[\sf Claim 1:] $\yB$ is deterministic.
		
		{\sc Proof.}
		Since $\yA$ is deterministic, $\vert S_{in}\vert\leq 1$.
		Hence, it is clear that $\vert Q_{in}\vert\leq 1$.
		We now  look at each conditions in Definition~\ref{def:vpa-determinism}.
		
		Suppose that $t_i=(p,x,Z_i,q_i)\in \mu_c$, $i=1,2$.
		If $t_1, t_2\in\rho$, the determinism of $\yA$ immediately gives $Z_1=Z_2$, $q_1=q_2$ and so condition (1) in Definition~\ref{def:vpa-determinism} holds.
		With no loss of generality assume $t_1\not\in \rho$.
		Then item (a) of the construction forces $p\not\in J$, $q_1=s_1$, and $t'_1=(p,x,Z_1,s_j)\in \rho$ for some $s_j\in J$.  
		If $t_2\in\rho$ then, because $\yA$ is deterministic, condition (1) in Definition~\ref{def:vpa-determinism} implies $Z_1=Z_2$ and $q_2=s_j$, so that $t_2=(p,x,Z_2,s_j)\in \mu_c$.
		If $j\geq 2$, we see that $(p,x,Z_1,s_j)\in \mu_c$ contradicts item (a) of the construction.
		Hence, $j=1$ and then $q_2=s_j=s_1=q_1$.
		Together with $Z_1=Z_2$ we see that  condition (1) in Definition~\ref{def:vpa-determinism} holds.
		Assume now $t_2\not\in \rho$.
		We obtain again $q_2=s_1$ and $t'_2=(p,x,Z_2,s_\ell)\in\rho$ with $s_\ell\in J$.
		Now $t'_1,t'_2\in\rho$ and the determinism of $\yA$ forces $Z_1=Z_2$.
		We now have $q_1=s_1=q_2$, $Z_1=Z_2$ and condition (1) in Definition~\ref{def:vpa-determinism} holds again. 
		
		Suppose that $t_i=(p,x,Z,q_i)\in \mu_r\cup\mu_i$, $i=1,2$.
		An argument entirely similar to the one in the preceding paragraph shows that  condition (2) in Definition~\ref{def:vpa-determinism} holds. 
		
		Now let $t_1=(p,x,Z,q_1)\in \mu$ and $t_2=(p,\yeps,\yait,q_2)\in \mu$, with $x\neq \yeps$.
		If $t_1,t_2\in\rho$ we get an immediate contradiction to the determinism of $\yA$.
		Assume $t_1\not\in\rho$. 
		As before the construction gives $t'_1=(p,x,Z,s_j)\in \rho$ for some $s_j\in J$.
		Since  $x\neq \yeps$, if $t_2\in\rho$ we get a contradiction to the determinism of $\yA$.
		Hence, $t_2\not\in\rho$ implies, by the construction, that $t'_2=(p,\yeps,\yait,s_\ell)\in \rho$ for some $s_\ell\in J$.
		Again, $x\neq \yeps$ leads to a contradiction to the determinism of $\yA$.
		Finally, when $t_1\in\rho$ the determinism of $\yA$ will force $t_2\not\in\rho$, and the construction will give $t'_2=(p,\yeps,\yait,s_\ell)\in \rho$
		for some $s_\ell\in J$.
		Then, $t_1, t'_2\in\rho$ contradicts the determinism of $\yA$ again because $x\neq \yeps$.
		We conclude that  $t_1=(p,x,Z,q_1)\in \mu$ and $t_2=(p,\yeps,\yait,q_2)\in \mu$, with $x\neq \yeps$, cannot happen and condition (3) in Definition~\ref{def:vpa-determinism} also holds. 
	\end{description}
	
	Next, we want to argue for language equivalence.	
	First we show that $\yA$ can imitate runs of $\yB$.
	\begin{description}
		\item[\sf Claim 2:] If $\ypdatrtf{(q,\ysi,\yal_1\bot)}{\star}{\yB}{(p,\yeps,\yal_2\bot)}$ then we also have $\ypdatrtf{(q,\ysi,\yal_1\bot)}{\star}{\yA}{(p,\yeps,\yal_2\bot)}$.
		
		{\sc Proof:}	Write $\ypdatrtf{(q,\ysi,\yal_1\bot)}{n}{\yB}{(p,\yeps,\yal_2\bot)}$, where $n\geq 0$.
		If all transitions used in the run over $\yB$ are in $\rho$, there is nothing to prove.
		Let $t=(u,x,Z,v)$, with $x \in A$ and $Z\in\yGa\cup\{\yeps\}$, be the first transition not in $\rho$  used in the run over $\yB$.
		By item (a) of the construction we get $v=s_1$ and $t'=(u,x,Z,s_j)\in\rho$, where $s_j\in J$.
		Then, $\ysi=\ysi_1x\ysi_2$ and we may write
		$\ypdatrtf{(q,\ysi_1x\ysi_2,\yal_1\bot)}{\star}{\yA}{}{(u,x\ysi_2,\ybe\bot)}\ypdatrtf{}{1}{\yB}{(s_1,\ysi_2,\yga\bot)}\ypdatrtf{}{k}{\yB}{(p,\yeps,\yal_2\bot)}$, with $0\leq k<n$.
		Using the transitions in the $\yeps$-cycle at Eq.(\ref{eq:2-23a}) we get $\ypdatrtf{(s_j,\ysi_2,\yga\bot)}{\star}{\yA}{}{(s_1,\ysi_2,\yga\bot)}$.
		Thus, using $t'$ we have 
		$$\ypdatrtf{(q,\ysi_1x\ysi_2,\yal_1\bot)}{\star}{\yA}{}{(u,x\ysi_2,\ybe\bot)}\ypdatrtf{}{1}{\yA}{(s_j,\ysi_2,\yga\bot)}\ypdatrtf{}{\star}{\yA}{(s_1,\ysi_2,\yga\bot)}\ypdatrtf{}{k}{\yB}{(p,\yeps,\yal_2\bot)}.$$
		Inductively, we have $\ypdatrtf{(s_1,\ysi_2,\yga\bot)}{\star}{\yA}{(p,\yeps,\yal_2\bot)}$.
		Putting it together, we now have $\ypdatrtf{(q,\ysi_1x\ysi_2,\yal_1\bot)}{\star}{\yA}{(p,\yeps,\yal_2\bot)}$.
	\end{description}
	Next, we want to show that if we have a run of $\yA$ starting at $S_{in}$ and ending in $F$, then we can extend the run to start at $Q_{in}$ and end in $E$.
	\begin{description}
		\item[\sf Claim 3:] If $\ypdatrtf{(r,\ysi,\bot)}{\star}{\yA}{(f,\yeps,\yal\bot)}$ with $r\in S_{in}$, $f\in F$, then  $\ypdatrtf{(u,\ysi,\bot)}{\star}{\yA}{(v,\yeps,\yal\bot)}$ with $u\in Q_{in}$, $v\in E$.
		
		{\sc Proof.}
		Assume $\ysi \in L(\yA)$ so that  $\ypdatrtf{(r,\ysi,\bot)}{\star}{\yA}{(f,\yeps,\yal\bot)}$ with $r\in S_{in}$, $f\in F$.
		First, we want to show that we also have $\ypdatrtf{(u,\ysi,\bot)}{\star}{\yA}{(v,\yeps,\yal\bot)}$ with $u\in Q_{in}$ and $v\in E$.
		If $r\in Q_{in}$ let $u=r$.
		If $r\not\in Q_{in}$, because  $\vert S_{in}\vert\leq 1$ item (b) of the construction implies that $Q_{in}=\{s_1\}$ and $r=s_\ell\in S_{in}$ for some $s_\ell\in J$.
		Since $\yA$ is deterministic, condition (3) in Definition~\ref{def:vpa-determinism} says that $(s_i,\yeps,\yait,s_{i+1})$ is the unique transition out of $s_i$, for all $s_i\in J$.
		Let $u=s_1$.
		Hence, we must have $\ypdatrtf{(r,\ysi,\bot)}{\star}{\yA}{(u,\ysi,\bot)}\ypdatrtf{}{\star}{\yA}{(f,\yeps,\yal\bot)}$ with $u\in Q_{in}$.
		If $f\in E$, let $v=f$.
		If $f\not\in E$, item (b) of the construction says that $f=s_\ell$, for some $s_\ell\in J$, and $s_1\in E$.
		Let $v=s_1$.
		We now have $\ypdatrtf{(u,\ysi,\bot)}{\star}{\yA}{(f,\ysi,\yal\bot)}\ypdatrtf{}{\star}{\yA}{(v,\ysi,\yal\bot)}$ with $v\in E$.
		We can now assume $\ypdatrtf{(r,\ysi,\bot)}{\star}{\yA}{(f,\yeps,\yal\bot)}$ with $r\in Q_{in}$, $f\in E$. 
	\end{description}
	Now we are ready for language equivalence.
	\begin{description}
		\item[\sf Claim 4:] $L(\yA)=L(\yB)$.
		
		{\sc Proof.} 
		Now let $\ysi\in L(\yB)$, so that we have $\ypdatrtf{(q,\ysi,\bot)}{n}{\yB}{(p,\yeps,\yal\bot)}$ with $q\in Q_{in}$, $p\in E$, $n\geq 0$.
		Using Claim 2, we can write $\ypdatrtf{(q,\ysi,\bot)}{\star}{\yA}{(p,\yeps,\yal\bot)}$. 
		If $q\not\in S_{in}$ then by item (b) of the construction we must have $s_\ell\in S_{in}$ for some $s_\ell\in J$ and $q=s_1$.
		From the $\yeps$-cycle at Eq. (\ref{eq:2-23a}) we have $\ypdatrtf{(s_\ell,\ysi,\bot)}{\star}{\yA}{(s_1,\ysi,\bot)}=(q,\ysi,\bot)$, so that $\ypdatrtf{(s_\ell,\ysi,\bot)}{\star}{\yA}{(p,\yeps,\yal\bot)}$.
		Likewise, if $p\not\in F$ then by item (b) again we must have $p=s_1$ and $s_i\in F$ for some $s_i\in J$.
		Again,  the $\yeps$-cycle  gives $(p,\yeps,\yal\bot)=\ypdatrtf{(s_1,\yeps,\yal\bot)}{\star}{\yA}{(s_i,\yeps,\yal\bot)}$.
		Composing, we obtain $\ypdatrtf{(s_\ell,\ysi,\bot)}{\star}{\yA}{(s_i,\yeps,\yal\bot)}$ with $s_\ell\in S_{in}$ and $s_i\in F$.
		Thus we always have $\ypdatrtf{(u,\ysi,\bot)}{\star}{\yA}{(v,\yeps,\yal\bot)}$, with $u\in S_{in}$ and $v\in F$.
		Hence, $\ysi\in L(\yA)$.
		
		Now let $\ysi\in L(\yA)$.
		Using Claim 3, we can write $\ypdatrtf{(r,\ysi,\bot)}{\star}{\yA}{(f,\yeps,\yal\bot)}$ with $r\in Q_{in}$, $f\in E$. 
		If no state $s_j\in J$ occurs in this run, then all transitions are also in $\mu$, and we  get $\ypdatrtf{(r,\ysi,\bot)}{\star}{\yB}{(f,\yeps,\yal\bot)}$.
		Next, let  $\ysi=\ysi_1\ysi_2$ with $\ypdatrtf{(r,\ysi_1\ysi_2,\bot)}{\star}{\yA}{(s_j,\ysi_2,\ybe\bot)}\ypdatrtf{}{\star}{\yA}{(f,\yeps,\yal\bot)}$, where this is the first occurrence of a state of $J$ in the run  over $\yA$.
		Invoking condition (3) in Definition~\ref{def:vpa-determinism} again, we must have $f=s_i\in J$, $\ysi_2=\yeps$ and $\ybe=\yal$.
		Since $f\in E$, we must have $f=s_1\in E$.
		We now have $\ysi=\ysi_1$ and $\ypdatrtf{(r,\ysi,\bot)}{\star}{\yA}{(s_1,\yeps,\yal\bot)}$, $r\in Q_{in}$, $s_1\in E$.
		When $n=0$ we get  $\ypdatrtf{(r,\ysi,\bot)}{0}{\yB}{(s_1,\yeps,\yal\bot)}$ and we are done.
		Else, we must have $\ysi =\ysi'x$ with $x\in\yGa\cup\{\yeps\}$ and $\ypdatrtf{(r,\ysi'x,\bot)}{\star}{\yA}{(t,x,\yga\bot)}\ypdatrtf{}{1}{\yA}{(s_1,\yeps,\yal\bot)}$ with a transition $(t,x,Z,s_1)\in\rho$ used in the last move.
		Because $s_j$ is the first occurrence of a state in $J$ we get $t\not\in J$, and all transitions in $\ypdatrtf{(r,\ysi'x,\bot)}{\star}{\yA}{(t,x,\yga\bot)}$ are in $\mu$.
		Hence, $\ypdatrtf{(r,\ysi'x,\bot)}{\star}{\yB}{(t,x,\yga\bot)}$.
		Item (a) of the construction readily gives $(t,x,Z,s_1)\in\mu$.
		Thus, $\ypdatrtf{(r,\ysi,\bot)}{\star}{\yB}{(t,x,\yga\bot)}\ypdatrtf{}{1}{\yB}{(s_1,\yeps,\yal\bot)}$, $r\in Q_{in}$, $s_1\in E$. 
		We conclude that $L(\yA)\ysse L(\yB)$.
	\end{description}	
	At this point we know how to remove an $\yeps$-cycle from $\yA$, while maintain language equivalence and determinism. 
	Thus, we can repeat the construction for all $\yeps$-cycles in $\yA$, so that we can now assume that there is no $\yeps$-cycle in $\yA$.
	As a final step, we show how to remove all remaining $\yeps$-transitions from $\yA$.
	
	Let $\yA=\yvpaA$ be deterministic and with no $\yeps$-cycles.
	Let $t=(p,\yeps,\yait,q)\in\rho$. 
	Since $\yA$ has no $\yeps$-cycles, we can assume that for all transitions $(q,x,Z,r)\in \rho$ we have $x\neq \yeps$.
	Start with $\yB=\yA$ and transform $\yB$ as follows.
	\begin{itemize}
		\item[(c)] Transitions: (i) $\mu=\rho-\{t\}$; (ii) for all $(q,a,Z,r)\in\rho$, add $(p,a,Z,r)$ to $\mu$.
		\item[(d)] States: (i) if $p\in S_{in}$, $p\not\in F$, let $Q_{in}=\{q\}$; (ii) if $q\in F$, let $E=F\cup\{p\}$.
	\end{itemize}
	We still have determinism.
	\begin{description}
		\item[\sf Claim 5:] $\yB$ is deterministic.\\
		{\sc Proof.} 
		We want to show that $\yB$ is also deterministic.
		It is clear that $\vert Q_{in}\vert\leq 1$ because $\yA$ is deterministic.
		Assume that $\yB$ is not deterministic.
		Since $\yA$ is already deterministic and the only transitions added to $\mu$ are of the form $(p,a,Z,r)$, we see that the only possibility for $\yB$ not deterministic is that we have two transitions $t_i=(p,a_i,Z_i,r_i)\in\mu$, $i=1,2$ in violation  Definition~\ref{def:vpa-determinism}. 
		Since $t\in\rho$ and $\yA$ is deterministic, according to condition (3) of Definition~\ref{def:vpa-determinism}, this is the \emph{only} transition out of $p$ in $\rho$. 
		Moreover $t$ was removed from $\rho$ according to item (c) of the construction.
		Therefore, $t_1$, $t_2$ are new transitions added to $\mu$ by the construction.
		Hence, we must have transitions $t'_i=(q,a_i,Z_i,r_i)\in\rho$, $i=1,2$.
		But these two transitions in $\rho$ also violate Definition~\ref{def:vpa-determinism}, and so they contradict the determinism of $\yA$.
		We conclude that $\yB$ must be deterministic too.
	\end{description}	
	We complete the proof arguing for language equivalence.
	Again, we must first relate runs in $\yA$ to runs in $\yB$, and vice-versa.
	\begin{description}
		\item[\sf Claim 6:] If $\ypdatrtf{(s,\ysi,\ybe\bot)}{\star}{\yB}{(f,\yeps,\yal\bot)}$, then we also have $\ypdatrtf{(s,\ysi,\ybe\bot)}{\star}{\yA}{(f,\yeps,\yal\bot)}$.\\
		{\sc Proof.} 
		First, we show that for any run in $\yB$ there exists a similar run in $\yA$.
		Let $\ypdatrtf{(s,\ysi,\ybe\bot)}{n}{\yB}{(f,\yeps,\yal\bot)}$.
		If $n=0$ we can immediately write $\ypdatrtf{(s,\ysi,\ybe\bot)}{0}{\yA}{(f,\yeps,\yal\bot)}$.
		Now let $x\in A_\yeps$, $\ysi=x\ysi'$ and write
		\begin{align}\label{eq:vpadet1}
			\ypdatrtf{(s,x\ysi',\ybe\bot)}{1}{\yB}{(r,\ysi',\yga\bot)}\ypdatrtf{}{n-1}{\yB}{(f,\yeps,\yal\bot)},
		\end{align}
		with $t'=(s,x,Z,r)$ the first transition used in this run.
		Inductively,  $\ypdatrtf{(r,\ysi',\yga\bot)}{\star}{\yA}{(f,\yeps,\yal\bot)}$.
		If $t'\in\rho$, we have  $\ypdatrtf{(s,x\ysi',\ybe\bot)}{1}{\yA}{(r,\ysi',\yga\bot)}$, so that 
		$\ypdatrtf{(s,\ysi,\ybe\bot)}{\star}{\yA}{(f,\yeps,\yal\bot)}$.
		If $t'\not\in\rho$, the construction gives $s=p$, $t'=(p,x,Z,r)$ and $(q,x,Z,r)\in\rho$.
		Then, using $t=(p,\yeps,\yait,q)$ we have
		$$(s,x\ysi',\ybe\bot)=\ypdatrtf{(p,x\ysi',\ybe\bot)}{1}{\yA}{(q,x\ysi',\ybe\bot)}\ypdatrtf{}{1}{\yA}{(r,\ysi',\yga\bot)}\ypdatrtf{}{\star}{\yA}{(f,\yeps,\yal\bot)}.$$
	\end{description}	
	Now we examine how $\yB$ can imitate runs in $\yA$.
	\begin{description}
		\item[\sf Claim 7:] 
		Let $\ypdatrtf{(s_1,\ysi,\yal_1\bot)}{n}{\yA}{(s_2,\yeps,\yal_2\bot)}$ with $n\geq 0$.
		We show that $\ypdatrtf{(s_1,\ysi,\yal_1\bot)}{\star}{\yB}{(s'_2,\yeps,\yal_2\bot)}$ where $s'_2=p$ if the last transition in the run from $\yA$ was $t$, else $s_2'=s_2$.\\
		{\sc Proof.}
		If $n=0$ the result is immediate, with $s'_2=s_2$. 
		Next, let $n\geq 1$, $x\in A_\yeps$, $\ysi=\ysi'x$ and
		\begin{align}\label{eq:vpadet3}
			\ypdatrtf{(s_1,\ysi' x,\yal_1\bot)}{n-1}{\yA}{(s_3,x,\yal_3\bot)}\ypdatrtf{}{1}{\yA}{(s_2,\yeps,\yal_2\bot)},
		\end{align}
		where $t'=(s_3,x,Z,s_2)$ is the transition used in the last move in Eq. (\ref{eq:vpadet3}).
		
		The first simple case is when $t'=t$.
		Then, $s_3=p$, $x=\yeps$, $s_2=q$ and $\yal_3=\yal_2$.
		Since $s_3=p$, the last transition in $\ypdatrtf{(s_1,\ysi x,\yal_1\bot)}{n-1}{\yA}{(s_3,x,\yal_3\bot)}$ cannot be $t$, otherwise we would have $s_3=p=q$ and then $t$ would be a simple $\yeps$-cycle in $\yA$, a contradiction.
		Inductively from Eq. (\ref{eq:vpadet3}) we 
		can now write $\ypdatrtf{(s_1,\ysi' x,\yal_1\bot)}{\star}{\yB}{(s_3,x,\yal_3\bot)}=(p,\yeps,\yal_2\bot)$.
		Picking $s'_2=p$, we are done with this case.
		
		Let now that $t'\neq t$.
		This gives $t'=(s_3,x,Z,s_2)\in \yB$.
		If $n-1=0$ in Eq. (\ref{eq:vpadet3}) we get $s_1=s_3$, $\ysi'=\yeps$, $\yal_1=\yal_3$.
		The last move in Eq. (\ref{eq:vpadet3}) gives $(s_3,x,\yal_3\bot)=\ypdatrtf{(s_1,x,\yal_1\bot)}{1}{\yA}{(s_2,\yeps,\yal_2\bot)}$.
		Since $t'\in \yB$ we also have $\ypdatrtf{(s_1,x,\yal_1\bot)}{1}{\yB}{(s_2,\yeps,\yal_2\bot)}$, and the result follows because $t'\neq t$.
		We now proceed under the hypothesis that $n-1>0$.
		Inductively, from Eq. (\ref{eq:vpadet3}) we get
		$\ypdatrtf{(s_1,\ysi' x,\yal_1\bot)}{\star}{\yB}{(s_3',x,\yal_3\bot)}$, where $s'_3=p$ if the last transition, say $t''$, in $\ypdatrtf{(s_1,\ysi' x,\yal_1\bot)}{n-1}{\yA}{(s_3,x,\yal_3\bot)}$ was $t$, otherwise we must have $s'_3=s_3$.
		If $t''\neq t$, then $t'\in \yB$ and $s'_3=s_3$ give  $(s'_3,x,\yal_3\bot)=\ypdatrtf{(s_3,x,\yal_3\bot)}{1}{\yB}{(s_2,\yeps,\yal_2\bot)}$.
		Thus, $\ypdatrtf{(s_1,\ysi' x,\yal_1\bot)}{\star}{\yB}{(s_2,\yeps,\yal_3\bot)}$.
		Picking $s'_2=s_2$ and recalling that $t'$ is not $t$, we have the desired result.
		Lastly, let $t''=t$.
		
		We can rewrite the first $n-1$ moves in Eq. (\ref{eq:vpadet3}) thus
		\begin{align}\label{eq:vpadet4}
			\ypdatrtf{(s_1,\ysi',\yal_1\bot)}{n-2}{\yA}{(p,\yeps,\yal_3\bot)}\ypdatrtf{}{1}{\yA}{(q,\yeps,\yal_3\bot)}=(s_3,\yeps,\yal_3\bot),
		\end{align}
		so that $q=s_3$.
		Inductively, we get 
		$\ypdatrtf{(s_1,\ysi',\yal_1\bot)}{\star}{\yB}{(p,\yeps,\yal_3\bot)}$, so that we also have $\ypdatrtf{(s_1,\ysi' x,\yal_1\bot)}{\star}{\yB}{(p,x,\yal_3\bot)}$.
		We now have $t'=(s_3,x,Z,s_2)=(q,x,Z,s_2)$ in $\yA$.
		By item (c) of the construction, we must have $(p,x,Z,s_2)$ in $\yB$. 
		Since $\ypdatrtf{(s_3,x,\yal_3\bot)}{1}{\yA}{(s_2,\yeps,\yal_2\bot)}$ using $t'$, we also get
		$\ypdatrtf{(p,x,\yal_3\bot)}{1}{\yB}{(s_2,\yeps,\yal_2\bot)}$.
		Composing, we obtain $\ypdatrtf{(s_1,\ysi' x,\yal_1\bot)}{\star}{\yB}{(s_2,\yeps,\yal_2\bot)}$.
		Picking $s'_2=s_2$ and remembering that we assumed $t'\neq t$, we have the result again.
	\end{description}
	The first half of language equivalence now follows. 
	\begin{description}
		\item[\sf Claim 8:] $L(\yB)\ysse L(\yA)$.\\
		{\sc Proof.}
		Let  $\ypdatrtf{(s,\ysi,\bot)}{\star}{\yB}{(f,\yeps,\yal\bot)}$ with
		$s\in Q_{in}$ and $f\in E$.
		Claim 6 gives 
		\begin{align}\label{eq:vpadet2}
			\ypdatrtf{(s,\ysi,\bot)}{\star}{\yA}{(f,\yeps,\yal\bot)}.
		\end{align}
		If $f\in F$, let $f'=f$ and we get $\ypdatrtf{(s,\ysi,\bot)}{\star}{\yA}{(f',\yeps,\yal\bot)}$ with $f'\in F$.
		If $f\not\in F$, item (d) of the construction says that $f=p$ and $q\in F$. Picking $f'=q$ we have $f'\in F$. Using $t$ again, we have
		$(f,\yeps,\yal\bot)=\ypdatrtf{(p,\yeps,\yal\bot)}{1}{\yA}{(q,\yeps,\yal\bot)}=(f',\yeps,\yal\bot)$ with $f'\in F$.
		So, we can assume that $f\in F$ in Eq. (\ref{eq:vpadet2}).
		If $s\in S_{in}$, then Eq. (\ref{eq:vpadet2}) says that $\ysi\in L(\yA)$, as desired.
		Now assume $s\not\in S_{in}$.
		Since $s\in Q_{in}$, item (d) of the construction implies that $s=q$, $p\in S_{in}$.
		Using transition $t$ again, we now have 
		$\ypdatrtf{(p,\ysi,\bot)}{1}{\yA}{(q,\ysi,\bot)}=(s,\ysi,\bot)$.
		Composing with  Eq. (\ref{eq:vpadet2}) we get $\ypdatrtf{(p,\ysi,\bot)}{\star}{\yA}{(f,\yeps,\yal\bot)}$.
		Because $p\in S_{in}$ and $f\in F$, we see that $\ysi\in L(\yA)$.
	\end{description}	
	%%% Claim 9
	The next claim completes the proof.
	\begin{description}
		\item[\sf Claim 9:] $L(\yA)\ysse L(\yB)$.\\
		{\sc Proof.}  
		Let $s\in S_{in}$, $f\in F$, $n\geq 0$, and 		$\ypdatrtf{(s,\ysi,\bot)}{n}{\yA}{(f,\yeps,\yal\bot)}$.
		Assume that $s\not\in Q_{in}$.
		Since $s\in S_{in}$, then item (d) of the construction gives $Q_{in}=\{q\}$, $p\in S_{in}$ and $p\not\in F$.
		Since $\yA$ is deterministic, we get $S_{in}=\{p\}=\{s\}$, so that $p=s$.
		Hence, $s\not\in F$ and then $s\neq f$.
		Thus, $n\geq 1$.
		Since $p=s$ and $\yA$ is deterministic, we see that $t$ is the first transition in the run $\ypdatrtf{(s,\ysi,\bot)}{n}{\yA}{(f,\yeps,\yal\bot)}$.
		Thus, we can write $(s,\ysi,\bot)=\ypdatrtf{(p,\ysi,\bot)}{1}{\yA}{(q,\ysi,\bot)}\ypdatrtf{}{n-1}{\yA}{(f,\yeps,\yal\bot)}$, with $q\in Q_{in}$ and $n-1\geq 0$.
		We can, thus, take $s\in Q_{in}$ in the run  	$\ypdatrtf{(s,\ysi,\bot)}{n}{\yA}{(f,\yeps,\yal\bot)}$.
		
		If $n=0$ we get $\ysi=\yeps$, $\yal=\yeps$ and $s=f$.
		We can then  write 	$(s,\ysi,\bot)=\ypdatrtf{(s,\yeps,\bot)}{0}{\yB}{(f,\yeps,\bot)}$.
		Because $F\ysse E$ we get $f\in E$. 
		Since $s\in Q_{in}$ we get $\ysi=\yeps\in L(\yB)$.
		
		We proceed now with $s\in Q_{in}$, $f\in E$, $n\geq 1$ and 		$\ypdatrtf{(s,\ysi,\bot)}{n}{\yA}{(f,\yeps,\yal\bot)}$.
		Use Claim 7 to write 
		%\begin{align}\label{eq:vpadet5}
		$\ypdatrtf{(s,\ysi,\bot)}{\star}{\yB}{(f',\yeps,\yal\bot)}$,
		%\end{align}
		where $f'=p$ if $t$ was the last transition  in the run over $\yA$, otherwise we have $f'=f$.
		If $f'=f$, we get $f'\in E$ and then $\ysi\in L(\yB)$.
		Assume now that $f'=p$ and $t$ was the last transition  in the run over $\yA$.
		This gives $q=f$, and then item (d) of the construction says that $p\in E$.
		Thus, $f'\in E$ and we get again $\ysi\in L(\yB)$.
	\end{description}	
\end{proof}

 % ----retirado a determinizacao de VPA do Alur 
 %
 %We can also obtain an equivalent deterministic VPA from any non-deterministic VPA, although in this more general case, we do not have an efficient bound on the number of states of the deterministic VPA.
 %\begin{prop}\label{prop:nondet-det}
 	%	Given a VPA $\yA$ we can obtain a deterministic VPA $\yB$ with no $\yeps$-moves, and such that $L(\yA)=L(\yB)$. 
 	%\end{prop}
 %\begin{proof}
 	%First we remove all $\yeps$-moves from $\yA$ using Proposition~\ref{prop:no-eps}. 
 	%Then we recall Theorem~2 of Alur and Madhusudan~\cite{alurm-visibly-2004} to obtain $\yB$. 
 	%\end{proof}
 
 \subsection{The Synchronous Product of VPAs}\label{subsec:vpa-product}
 
 The product of two VPAs captures their synchronous behavior. 
 It will be useful when testing conformance between pushdown memory models. 
 \begin{defi}\label{def:productVPA}
 	Let $\yS=\yvpaA$  and $\yQ=\yvpa{Q}{Q_{in}}{A}{\yDe}{\mu}{G}$ be  VPAs with  a common alphabet. 
 	Their product is the VPA
 	$\yS\times \yQ=\yvpa{S\times Q}{S_{in}\times Q_{in}}{A}{\yGa\times\yDe}{\nu}{F\times G}$, where $((s_1,q_1),a,Z,(s_2,q_2))\in\nu$ if and only if either:
 	\begin{enumerate}
 		\item  $a\neq\yeps$ and we have $(s_1,a,Z_1,s_2)\in \rho$ and  $(q_1,a,Z_2,q_2)\in \mu$ with $Z=(Z_1,Z_2)$ or $Z=Z_1=Z_2\in\{\bot,\yait\}$; or 
 		\item $a=\yeps$, $Z=\yait$ and we have either $s_1=s_2$ and $(q_1,\yeps,\yait,q_2)\in\mu$; or $q_1=q_2$ and  $(s_1,\yeps,\yait,s_2)\in\rho$. 
 	\end{enumerate}
 \end{defi}
 
 We first note that the product construction preservers both determinism and the absence of $\yeps$-moves when the original VPAs satisfy such conditions.
 \begin{prop}\label{prop:epsilon-deterministic}
 	Let $\yS=\yvpaA$  and $\yQ=\yvpaB$ be VPAs and let $\yP=\yS\times\yQ$ be their product as in Definition~\ref{def:productVPA}.
 	Then the following conditions hold:
 	\begin{enumerate}
 		\item If $\yS$ and $\yQ$ have no $\yeps$-moves, then $\yP$ has no $\yeps$-moves;
 		\item If $\yS$ and $\yQ$ are deterministic with no $\yeps$-moves, then $\yP$ is also deterministic with no $\yeps$-moves.
 	\end{enumerate}
 \end{prop}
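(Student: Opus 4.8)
The plan is to read off both statements directly from Definition~\ref{def:productVPA}, after observing that clause~(2) of that definition is the \emph{only} mechanism by which the product $\yP=\yS\times\yQ$ can acquire a transition labelled $\yeps$.

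For part~(1), suppose $\yS$ and $\yQ$ have no $\yeps$-moves and that $((s_1,q_1),\yeps,Z,(s_2,q_2))\in\nu$. Clause~(1) of Definition~\ref{def:productVPA} cannot produce this transition, since it requires the label to be different from $\yeps$, so it must arise from clause~(2); but clause~(2) forces either $(q_1,\yeps,\yait,q_2)\in\mu$ or $(s_1,\yeps,\yait,s_2)\in\rho$, and both are ruled out by hypothesis. Hence $\yP$ has no $\yeps$-moves.

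For part~(2), assume in addition that $\yS$ and $\yQ$ are deterministic. By part~(1), $\yP$ has no $\yeps$-moves, so condition~(3) of Definition~\ref{def:vpa-determinism} holds vacuously, and $\vert S_{in}\times Q_{in}\vert=\vert S_{in}\vert\cdot\vert Q_{in}\vert\le 1$ since each factor has size at most one. For conditions~(1) and~(2) I would take two product transitions $((s_1,q_1),x,Z,(s_2,q_2))$ and $((s_1,q_1),x,Z',(s_2',q_2'))$ sharing the same source state and label $x$ (necessarily $x\neq\yeps$), unfold each via clause~(1) of Definition~\ref{def:productVPA} into a pair of component transitions, one in $\rho$ and one in $\mu$, and then split on the type of $x$: if $x\in A_c$ the product stack symbols are honest pairs in $\yGa\times\yDe$; if $x\in A_r$ they lie in $(\yGa\times\yDe)\cup\{\bot\}$; and if $x\in A_i$ they are both the place-holder $\yait$. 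In each case, equality of the product stack symbols forces equality of the corresponding component stack symbols on the $\yS$-side and on the $\yQ$-side, and then the determinism of $\yS$ and of $\yQ$ — clause~(1) of Definition~\ref{def:vpa-determinism} when $x\in A_c$, clause~(2) when $x\in A_r\cup A_i$ — pins down $s_2=s_2'$ and $q_2=q_2'$, and additionally $Z=Z'$ when $x\in A_c$. This yields conditions~(1) and~(2) for $\yP$ and completes the argument.

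The whole proof is routine bookkeeping; the only point that deserves care is the case split on the product stack symbol inside clause~(1) of Definition~\ref{def:productVPA}, making sure that a shared $\bot$ (which occurs precisely when $x\in A_r$ and both component stacks are at the bottom) and the place-holder $\yait$ (for $x\in A_i$) are treated uniformly with a genuine pair $(Z_1,Z_2)$ when recovering the component transitions. I do not anticipate any genuine obstacle beyond this.
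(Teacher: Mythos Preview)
Your proposal is correct and follows essentially the same approach as the paper's proof: both arguments use part~(1) to dispose of condition~(3) of Definition~\ref{def:vpa-determinism}, then unfold a pair of product transitions via clause~(1) of Definition~\ref{def:productVPA} into component transitions and case-split on whether $x\in A_c$, $x\in A_r$, or $x\in A_i$ (handling the $\bot$/$\yait$ cases separately) before invoking determinism of $\yS$ and $\yQ$. You additionally note that $\vert S_{in}\times Q_{in}\vert\le 1$, which the paper's proof leaves implicit.
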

 \begin{proof}
 	Write $\yP=\yS\times \yQ=\yvpa{S\times Q}{S_{in}\times Q_{in}}{A}{\Gamma \times \yDe}{\nu}{F\times G}$.
 	
 	When $\yS$ and $\yQ$ have no $\yeps$-moves, only item (1) of Definition~\ref{def:productVPA} applies, so that $\yP$ has no $\yeps$-moves too.
 	Hence, assertion (1) holds.
 	
 	It remains to show that $\yP$ is deterministic when $\yS$ and $\yQ$ are also deterministic and have no $\yeps$-moves.
 	We just argued that $\yP$ has no $\yeps$-moves, so that $\yP$ can not violate condition (3) of Definition~\ref{def:vpa-determinism}.
 	For the sake of contradiction, assume $\yP$ has transitions 
 	$((s,q),a,Z_i,(p_i,r_i))\in\nu$, where $Z_i\in\yGa\times \yDe\cup\{\bot,\yait\}$, $i=1,2$, and $a\neq \yeps$.
 	Using Definition~\ref{def:productVPA} (1), from $((s,q),a,Z_1,(p_1,r_1))$ we get  
 	\begin{align}
 		&(s,a,X_1,p_1)\in\rho,\,\,\, (q,a,Y_1,r_1)\in\mu\label{prop2.26a}
 	\end{align}
 	where $X_1\in \yGa\cup{\{\bot,\yait\}}$, $Y_1\in \yDe\cup{\{\bot,\yait\}}$.
 	Likewise, from $((s,q),a,Z_2,(p_2,r_2))$ we get 
 	\begin{align}
 		&(s,a,X_2,p_2)\in\rho,\,\,\, (q,a,Y_2,r_2)\in\mu\label{prop2.26b}
 	\end{align}
 	where $X_2\in \yGa\cup{\{\bot,\yait\}}$, $Y_2\in \yDe\cup{\{\bot,\yait\}}$.
 	
 	If $a\in A_c$, we must have $Z_1,Z_2\in \yGa\times \yDe$, and Definition~\ref{def:productVPA} (1) forces $Z_i=(X_i,Y_i)$, $i=1,2$.
 	Since $a\in A_c$, the determinism of $\yS$, together with Definition~\ref{def:vpa-determinism} (1) applied to Eqs. (\ref{prop2.26a}, \ref{prop2.26b}) gives $X_1=X_2$, $p_1=p_2$.
 	Likewise, the determinism of $\yQ$ gives $Y_1=Y_2$, $r_1=r_2$.
 	But then $Z_1=(X_1,Y_1)=(X_2,Y_2)=Z_2$ and $(p_1,r_1)=(p_2,r_2)$.
 	We conclude that when $a\in A_c$, $\yP$ cannot violate condition (1) of Definition~\ref{def:vpa-determinism}.
 	
 	Now let $a\in A_r\cup A_i$ and $Z_1=Z_2$.
 	If $Z_1=(X,Y)\in \yGa\times \yDe$, Eq. (\ref{prop2.26a}) and Definition~\ref{def:productVPA} (1) imply $X_1=X$ and $Y_1=Y$.
 	Since $Z_2=Z_1$, Eq. (\ref{prop2.26b}) gives $X_2=X$ and $Y_2=Y$.
 	Hence, $X_1=X_2$, $Y_1=Y_2$.
 	Now, the determinism of $\yS$ applied to Eqs. (\ref{prop2.26a}, \ref{prop2.26b}), together with Definition~\ref{def:vpa-determinism} (2) implies $p_1=p_2$.
 	Likewise, $r_1=r_2$.
 	Thus, $(p_1,r_1)=(p_2,r_2)$ and we conclude that $\yP$ does not violate Definition~\ref{def:vpa-determinism}(2).
 	Finally let $Z_1\in \{\bot,\yait\}$.
 	Because $a\neq \yeps$, Definition~\ref{def:productVPA} (1) and Eq. (\ref{prop2.26a}) say that
 	$Z_1=X_1=Y_1$.
 	Likewise, now with Eq. (\ref{prop2.26b}) and knowing that $Z_1=Z_2$, we get and $Z_1=Z_2=X_2= Y_2$.
 	Hence, with $a\in A_r\cup A_i$, using Definition~\ref{def:vpa-determinism} (2) applied to Eqs. (\ref{prop2.26a}, \ref{prop2.26b}), the determinism of $\yS$ and $\yQ$  implies  $p_1=r_1$  and $p_2=r_2$.
 	Again we have $(p_1,r_1)=(p_2,r_2)$ and we conclude that $\yP$ can not violate Definition~\ref{def:vpa-determinism} in any circumstance.
 	That is, $\yP$ is deterministic.
 \end{proof}
 
 It is not hard to see that when $\yS$ and $\yQ$ are deterministic and $\yeps$-moves are allowed in both of them, then the product $\yP$ need not be deterministic.
 %\end{rema}
The following result links moves in the product VPA to moves in the original constituent VPAs.
\begin{prop}\label{prop:product-behavior}
Let $\yS=\yvpaA$  and $\yQ=\yvpa{Q}{Q_{in}}{A}{\yDe}{\mu}{G}$ be VPAs over the same input alphabet. 
Then,  for all $\ysi\in A^\star$, and all $i$, $k\geq 0$:
\begin{quote}
	$$\ypdatrtf{((s,q),\ysi,(X_1,Y_1)\ldots (X_i,Y_i)\bot)}{\star}{\yS\times \yQ}{((p,r),\yeps,(Z_1,W_1)\ldots (Z_k,W_k)\bot)}$$ 
	\begin{center}if and only if\end{center} 
	$$\ypdatrtf{(s,\ysi,X_1\ldots X_i\bot)}{\star}{\yS}{(p,\yeps,Z_1\ldots Z_k\bot)}\,\,\text{and}\,\, \ypdatrtf{(q,\ysi,Y_1\ldots Y_i\bot)}{\star}{\yQ}{(r,\yeps,W_1\ldots W_k\bot)}.$$
\end{quote}
\end{prop}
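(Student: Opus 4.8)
The plan is to prove both implications by induction on the length of a run, exploiting the fact that, because $\yS$ and $\yQ$ read the \emph{same} word $\ysi$ and obey the visibly pushdown discipline (a call pushes one symbol, a return pops one, and simple or $\yeps$-moves leave the stack untouched), the two component stacks stay at exactly the same height along any pair of runs that consume $\ysi$ in the same order. This height-synchronisation is what makes the componentwise pairing of the two stacks in a configuration of $\yS\times\yQ$ well defined, and it keeps Definition~\ref{def:productVPA} applicable at every step.

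For the \emph{only if} direction I would induct on the number $n$ of steps of the run of $\yS\times\yQ$. When $n=0$ the configuration does not change, which forces $\ysi=\yeps$, $(p,r)=(s,q)$ and $k=i$ with the two stacks identical, so the required $0$-step runs of $\yS$ and of $\yQ$ exist trivially. For the inductive step I split off the first product move. If it reads a letter $a\neq\yeps$, then by Definition~\ref{def:productVPA}(1) it comes from moves $(s,a,Z_1,s')\in\rho$ and $(q,a,Z_2,q')\in\mu$; checking the three cases $a\in A_c$, $a\in A_i$, $a\in A_r$ (the last with a sub-case for an empty stack, where the popped symbol is $\bot$), one sees that applying Definition~\ref{def:fsa-move} in $\yS$ and in $\yQ$ yields stacks whose componentwise pairing is exactly the product stack. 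If the first product move is over $\yeps$, then by Definition~\ref{def:productVPA}(2) it either keeps the $\yS$-component fixed and performs an $\yeps$-move in $\yQ$ or vice versa, hence projects to a single $\yeps$-move in one component and to zero moves in the other, stacks unchanged. In every case I then apply the induction hypothesis to the remaining $n-1$ product steps and prepend the projected first move(s), using transitivity of $\ypdatrtf{}{\star}{}{}$.

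For the \emph{if} direction — the more delicate one — I would induct on $n_1+n_2$, where $n_1$ and $n_2$ are the lengths of the given runs of $\yS$ and $\yQ$, and interleave them into a run of $\yS\times\yQ$. If $n_1+n_2=0$ both runs are empty, so $\ysi=\yeps$, $(p,r)=(s,q)$ and the stacks coincide, giving the $0$-step product run. Otherwise I distinguish three cases. (A) If $n_1\geq 1$ and the first move of the $\yS$-run is an $\yeps$-move $(s,\yeps,\yait,s')\in\rho$, then $((s,q),\yeps,\yait,(s',q))\in\nu$ by Definition~\ref{def:productVPA}(2); I perform this product move (input and stack unchanged) and apply the induction hypothesis to the shorter $\yS$-run and the unchanged $\yQ$-run. (B) Symmetrically if $n_2\geq 1$ and the first move of the $\yQ$-run is an $\yeps$-move. (C) Otherwise; here I first observe that necessarily $n_1,n_2\geq 1$ and both first moves read an actual letter, since, e.g., $n_1=0$ would force $\ysi=\yeps$ and make the whole $\yQ$-run consist of $\yeps$-moves, putting us in case (B). As both runs consume $\ysi$ in order, these first letters coincide; writing $\ysi=a\ysi'$ and the moves as $(s,a,Z_1,s')\in\rho$ and $(q,a,Z_2,q')\in\mu$, Definition~\ref{def:productVPA}(1) gives a product move $((s,q),a,Z,(s',q'))\in\nu$ with $Z=(Z_1,Z_2)$ when $a\in A_c$ or $a\in A_r$ with nonempty stack, and $Z=\bot$ (resp.\ $Z=\yait$) in the return-on-$\bot$ case (resp.\ when $a\in A_i$); checking the cases as before shows the product stack after this move is again the componentwise pairing of the two stacks, which have stayed at equal height. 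I then apply the induction hypothesis to the tails of the two runs (lengths $n_1-1$ and $n_2-1$, both with remaining input $\ysi'$ and equal stack heights) and prepend the synchronized move.

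The main obstacle is the stack bookkeeping: in each of the push, pop, and simple cases one must verify that the singleton-pairing rule ``$Z=(Z_1,Z_2)$ or $Z=Z_1=Z_2\in\{\bot,\yait\}$'' of Definition~\ref{def:productVPA} matches the stack-manipulation rules of Definition~\ref{def:fsa-move} precisely, and in particular that the two component stacks never drift apart in height — so that a return move in the product pops a genuine pair $(X_1,Y_1)$ exactly when the return moves of $\yS$ and $\yQ$ both pop from a nonempty stack, and pops the shared symbol $\bot$ exactly when both are empty. Everything else is a routine induction.
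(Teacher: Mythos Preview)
Your proposal is correct and follows essentially the same approach as the paper's proof: both directions are handled by induction on run length (on $n$ for the product run, on $n_1+n_2$ for the pair of component runs), with the same case split on $\yeps$-moves versus letter moves and the same three-way sub-split on $A_c$, $A_i$, $A_r$ (with the empty-stack return handled separately). The only cosmetic differences are that in the ``only if'' direction the paper peels off the \emph{last} move rather than the first, and in the ``if'' direction the paper assumes without loss of generality that $n_1\geq 1$ and treats the $\yQ$-side $\yeps$-move implicitly rather than as a separate symmetric case.
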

%\begin{proof}
%Since movements in VPAs are determined by the input string, except for possible $\yeps$-moves, this result is expected.
%The argument, going from the product $\yP$ to the constituent VPAs $\yS$ and $\yQ$, uses a simple induction in the number of steps in the run over $\yP$.
%In the other direction, we have to allow for any of $\yS$ or $\yQ$ to make independent $\yeps$-moves.
%In this case, we induct on the total number of steps that occur in both the runs over $\yS$ and $\yQ$. 
%Appendix~\vref{app:prop:product-behavior} has the details. 
%\end{proof}
\begin{proof}
	Let $\ypdatrtf{((s,q),\ysi,\yga_0)}{n}{\yS\times \yQ}{((p,r),\yeps,\yga_n)}$, for some $n\geq 0$, and where
	$\yga_0=(X_1,Y_1)\ldots (X_i,Y_i)\bot$ and $\yga_n=(Z_1,W_1)\ldots (Z_k,W_k)\bot$. 
	When $n=0$ the result is immediate.
	Proceeding inductively, let $n\geq 1$, $\ysi=\yde a$ with $a\in A\cup \{\yeps\}$, and 
	\begin{align}\label{prop2.28a}
		\ypdatrtf{((s,q),\yde a,\yga_0)}{n-1}{\yS\times \yQ}{((u,v),a,\yga_{n-1})}\ypdatrtf{}{1}{\yS\times \yQ}{((p,r),\yeps,\yga_n)},
	\end{align}
	where $\yga_{n-1}=(U_1,V_1)\ldots (U_j,V_j)\bot$, for some $j\geq 0$. 
	In order to ease the notation, let $\yal_0=X_1 \ldots X_i\bot$, $\yal_{n-1}=U_1 \ldots U_j\bot$, $\yal_n=Z_1\ldots Z_k\bot$, $\ybe_0=Y_1\ldots Y_i\bot$, $\ybe_{n-1}=V_1 \ldots V_j\bot$,  and $\ybe_n=W_1\ldots W_k\bot$. 
	Then, using the induction hypothesis we get 
	\begin{align}\label{prop2.28b}
		\ypdatrtf{(s,\yde,\yal_0)}{\star}{\yS}{(u,\yeps,\yal_{n-1})},\quad  \ypdatrtf{(q,\yde,\ybe_0)}{\star}{\yQ}{(v,\yeps,\ybe_{n-1})}. 
	\end{align}
	Let $((u,v),a,Z,(p,r))$ be the transition used in the last $\yS\times \yQ$ move.
	From Definition~\ref{def:productVPA}, there are two possibilities: 
	\begin{description}
		\item[\sc Case 1:] 
		$a\neq \yeps$, $(u,a,X,p)\in\rho$, $(v,a,Y,r)\in\mu$, and either $Z=(X,Y)$, or $Z=X=Y\in\{\bot,\yait\}$. 
		
		If $a\in A_c$, then from Eq. (\ref{prop2.28a}) we get $Z\not\in\{\bot,\yait\}$, so that we must have $Z=(X,Y)$.
		With $a\in A_c$, Eq (\ref{prop2.28a}) gives $\yga_n=(X,Y)\yga_{n-1}=(X,Y)(U_1,V_1)\cdots (U_j,V_j)\bot$.
		But now, with $(u,a,X,p)\in\rho$ and Eq. (\ref{prop2.28b}) we also get 
		$\ypdatrtf{(u,a,\yal_{n-1})}{1}{\yS}{(p,\yeps,X\yal_{n-1} )}$.
		Using Eq. (\ref{prop2.28b}) and composing, we get  $\ypdatrtf{(s,\ysi,\yal_0)}{\star}{\yS}{(p,\yeps,X\yal_{n-1})}$, where $X\yal_{n-1}=XU_1\cdots U_j\bot$.
		Likewise, we obtain $\ypdatrtf{(q,\ysi,\ybe_0)}{\star}{\yQ}{(r,\yeps,Y\ybe_{n-1})}$, with 
		$Y\ybe_{n-1}=YV_1\cdots V_j\bot$, as needed.
		
		When $a\in A_r$,  given that  $((u,v),a,Z,(p,r))$ was the last transition used Eq. (\ref{prop2.28a}) with
		$\yga_{n-1}=(U_1,V_1)\ldots (U_j,V_j)\bot$, we get either (i) $j\geq 1$, $Z=(U_1,V_1)$ and $\yga_n=(U_2,V_2)\cdots (U_j,V_j)\bot$, or (ii) $j=0$ and $Z=\bot$ and $\yga_n=\bot$.
		Since $Z=(X,Y)$, the first case gives  $X=U_1$ and $Y=V_1$.
		We can now repeat the argument above when $a\in A_c$ and reach the desired result.
		In the second case, $Z=\bot$ forces $Z=X=Y=\bot$.
		Also, $\yal_{n-1}=U_1\cdots U_j\bot$ reduces to $\yal_{n-1}=\bot$.
		Since $a\in A_r$ and $(u,a,X,p)\in\rho$, we can write $\ypdatrtf{(u,a,\yal_{n-1})}{1}{\yS}{(p,\yeps,
			\bot )}$.
		Using Eq. (\ref{prop2.28b}) and composing, we get $\ypdatrtf{(s,\ysi,\yal_0)}{\star}{\yS}{(p,\yeps,\bot)}$.
		Since $\yga_n=\bot$ we have the desired result for $\yS$.
		A similar reasoning also gives  $\ypdatrtf{(q,\ysi,\ybe_0)}{\star}{\yQ}{(r,\yeps,\bot)}$, as needed.
		
		If $a\in A_i$, since    $((u,v),a,Z,(p,r))$ was the last transition used Eq. (\ref{prop2.28a}), we get $Z=\yait$ and $\yga_n=\yga_{n-1}=(U_1,V_1)\cdots (U_j,V_j)\bot$.
		Since $Z\in\{\bot,\yait\}$, we must also have $Z=X=Y=\yait$.
		Thus, $(u,a,\yait,p)\in\rho$.
		Using Eq. (\ref{prop2.28b}) and composing, we get $\ypdatrtf{(s,\ysi,\yal_0)}{\star}{\yS}{(p,\yeps,\yal_{n-1})}$.
		Since $\yal_{n-1}=U_1 \ldots U_j\bot$, we get the desired result for $\yS$.
		By a similar reasoning we also get $\ypdatrtf{(q,\ysi,\ybe_0)}{\star}{\yQ}{(r,\yeps,\ybe_{n-1})}$, completing this case.
		
		\item[\sc Case 2:] 
		$a=\yeps$, $Z=\yait$ with either $u=p$ and $(v,\yeps,\yait,r)\in\mu$, or $v=r$ and $(u,\yeps,\yait,p)\in\rho$.
		We look at the first case, the other being entirely similar.
		Since $((u,v),\yeps,\yait,(u,r))$ was the transition used in Eq. (\ref{prop2.28a}) we get $\yga_{n-1}=\yga_n=(U_1,V_1)\cdots (U_j,V_j)\bot$.
		From Eq. (\ref{prop2.28b}) and $u=p$, we know that $\ypdatrtf{(s,\ysi,\yal_0)}{\star}{\yS}{(p,\yeps,\yal_{n-1})}$,
		and because $\yal_{n-1}=U_1 \ldots U_j\bot$ we get the desired result for $\yS$.
		Since $(v,\yeps,\yait,r)\in\mu$, from Eq. (\ref{prop2.28b}) we obtain
		$\ypdatrtf{(q,\ysi,\ybe_0)}{\star}{\yQ}{(r,\yeps,\ybe_{n-1})}$, and we have the result also for $\yQ$ because $\ybe_{n-1}=V_1 \ldots V_j\bot$.
		This case is now complete.
	\end{description}
	
	Next, we look at the converse.
	Let $\ypdatrtf{(s,\ysi,\yal_0)}{n}{\yS}{(p,\yeps,\yal_n)}$  and 
	$\ypdatrtf{(q,\ysi,\ybe_0)}{m}{\yQ}{(r,\yeps,\ybe_m)}$, with $n,m\geq 0$, where $\yal_0=X_1\ldots X_i\bot$, $\ybe_0=Y_1\ldots Y_i\bot$, $\yal_n=Z_1\ldots Z_k\bot$ and  $\ybe_m=W_1\ldots W_k\bot$, for some $i,k\geq 0$.
	Write $\yga_0=(X_1,Y_1)\ldots(X_i,Y_i)\bot$ and $\yga=(Z_1,W_1)\ldots(Z_k,W_k)\bot$.
	
	When $n+m=0$ we get $m=0$ and $n=0$, so that  the result is immediate.
	
	With no loss, let $n\geq 1$.
	Then, $\ysi=a\yde$ and  a transition $(s,a,Z,t)\in\rho$ used in the first step in $\yS$. We now have 
	\begin{equation}\label{eq:1.24a}
		\ypdatrtf{(s,a\yde,\yal_0)}{1}{\yS}{(t,\yde,\yal_1)}\ypdatrtf{}{n-1}{\yS}{(p,\yeps,\yal_n)},\quad \ypdatrtf{(q,\ysi,\ybe_0)}{m}{\yQ}{(q,\yeps,\ybe_m)}.
	\end{equation}
	We first look at the case when $a=\yeps$.  
	Then, $\ysi=\yde$ , $\yal_0=\yal_1$ and  Definition~\ref{def:fsa-move} implies $Z=\yait$.
	Because $(s,\yeps,\yait,t)\in \rho$, Definition~\ref{def:productVPA} item (2) says $((s,q),\yeps,\yait,(t,q))$ is in $\yS\times \yQ$.
	Thus,
	\begin{align}\label{prop2.28c} \ypdatrtf{((s,q),\ysi,\yga_0)}{1}{\yS\times\yQ}{((t,q),\yde,\yga_0)}.
	\end{align}
	Now we have $\ypdatrtf{(t,\yde,\yal_0)}{n-1}{\yS}{(p,\yeps,\yal_n)}$ and $\ypdatrtf{(q,\yde,\ybe_0)}{m}{\yQ}{(r,\yeps,\ybe_m)}$.
	Since $n-1+m<n+m$, inductively, we may write  
	$\ypdatrtf{((t,q),\yde,\yga_0)}{\star}{\yS\times\yQ}{((p,r),\yeps,\yga)}$.
	From Eq. (\ref{prop2.28c}) we get   $\ypdatrtf{((s,q),\ysi,\yga_0)}{\star}{\yS\times\yQ}{((p,r),\yeps,\yga)}$.
	Recalling the definitions of $\yal_0$, $\ybe_0$, $\yal_n$, $\ybe_n$, and of $\yga_0$, $\yga$, we see that the result holds in this case.
	
	We now turn to the case $a\neq \yeps$.
	If $m=0$, from $\ypdatrtf{(q,\ysi,\ybe_0)}{m}{\yQ}{(r,\yeps,\ybe_m)}$ we obtain $\ysi=\yeps$. 
	But then $\ysi=a\yde$ gives $a=\yeps$, a contradiction.
	Thus, $m\geq 1$. 
	This gives 
	\begin{equation}\label{eq:1.24b}
		\ypdatrtf{(q,a\yde,\ybe_0)}{1}{\yQ}{(u,\yde,\ybe_1)}\ypdatrtf{}{m-1}{\yQ}{(r,\yeps,\ybe_m)},
	\end{equation}
	and we must have a transition $(q,a,W,u)\in\mu$ which was used in the first step. 
	There are three cases.
	\begin{description}
		\item[\sc Case 3:] $a\in A_i$. 
		Then, Eqs.~(\ref{eq:1.24a}) and~(\ref{eq:1.24b}) imply  $(s,a,\yait,t)\in\rho$, $\yal_0=\yal_1$ and $(q,a,\yait,u)\in\mu$, $\ybe_0=\ybe_1$.
		%The first moves of $\yS$ and $\yQ$ reduce to, respectively, $\ypdatrtf{(s,a\yde,\yal_0)}{1}{\yS}{(t,\yde,\yal_0)}$ and $\ypdatrtf{(q,a\yde,\ybe_0)}{1}{\yQ}{(u,\yde,\ybe_0)}$.
		From the same equations, we now obtain 
		$\ypdatrtf{(t,\yde,\yal_0)}{n-1}{\yS}{(p,\yeps,\yal_n)}$ and $\ypdatrtf{(u,\yde,\ybe_0)}{n-1}{\yQ}{(r,\yeps,\ybe_m)}$.
		The induction hypothesis now implies $\ypdatrtf{((t,u),\yde,\yga_0)}{\star}{\yS\times\yQ}{((p,r),\yeps,\yga)}$.
		Since $(s,a,\yait,t)\in\rho$ and $(q,a,\yait,u)\in\mu$ and $a\neq \yeps$, Definition~\ref{def:productVPA} item (1)  
		gives $((s,q),a,\yait,(t,u))$ in $\yS\times \yQ$, and we may write $\ypdatrtf{((s,q),a\yde,\yga_0)}{1}{\yS\times\yQ}{((t,q),\yde,\yga_0)}$.
		Composing, we get the result.
		
		\item[\sc Case 4:] $a\in A_c$.
		From  Eq.~(\ref{eq:1.24a}) we get $(s,a,X,t)\in\rho$, $\ypdatrtf{(s,a\yde,\yal_0)}{1}{\yS}{(t,\yde,X\yal_0)}$ for some $X\in\yGa$.
		And from Eq. (\ref{eq:1.24b}) we get $(q,a,Y,u)\in\mu$, $\ypdatrtf{(q,a\yde,\ybe_0)}{1}{\yQ}{(u,\yde,Y\ybe_0)}$for some  $Y\in\yDe$.
		
		Definition~\ref{def:productVPA} item (1) says that $((s,q),a,(X,Y),(t,u))$ is in $\yS\times \yQ$, and we may write 
		$\ypdatrtf{((s,q),a\yde,\yga_0)}{1}{\yS\times\yQ}{((t,u),\yde,(X,Y)\yga_0)}$.
		Inductively, Eqs.~(\ref{eq:1.24a}) and~(\ref{eq:1.24b}) also imply $\ypdatrtf{((t,u),\yde,(X,Y)\yga_0)}{\star}{\yS\times\yQ}{((p,r),\yeps,\yga)}$.
		Composing again, the result follows.
		
		\item[\sc Case 5:] $a\in A_r$, and recall that $\yal_0=X_1\ldots X_i\bot$, $\ybe_0=Y_1\ldots Y_i\bot$, for some $i\geq 0$.
		When $i\geq 1$, the reasoning is very similar to Case 4.
		
		Let $i=0$. 
		From Eq. (\ref{eq:1.24a}) we see that $\yal_0=\yal_1=\bot$, $(s,a,\bot,t)\in\rho$ and $\ypdatrtf{(t,\yde,\bot)}{n-1}{\yS}{(p,\yeps,\yal_n)}$.
		Likewise, From Eq. (\ref{eq:1.24b}) gives $\ybe_0=\ybe_1=\bot$, $(q,a,\bot,u)\in\mu$ and $\ypdatrtf{(u,\yde,\bot)}{m-1}{\yQ}{(r,\yeps,\ybe_m)}$.
		Inductively, we have
		$\ypdatrtf{((t,u),\yde,\bot)}{\star}{\yS\times\yQ}{((p,r),\yeps,\yga)}$.
		Because $(s,a,\bot,t)\in\rho$, $(q,a,\bot,u)\in\mu$,
		Definition~\ref{def:productVPA} item (1) says that $((s,q),a,\bot,(t,u))$
		is in $\yS\times \yQ$.
		Since $\ysi=a\yde$, composing we get $\ypdatrtf{((s,q),\ysi,\bot)}{\star}{\yS\times\yQ}{((p,r),\yeps,\yga)}$, which is the desired result. 
	\end{description}
	The proof is now complete.
\end{proof}

% ---------------- closure properties 

\subsection{Closure Properties over Visibly Pushdown Languages}\label{subsec:properties}

Now we look at some closure properties involving VPLs.
Similar results appeared elsewhere~\cite{alurm-visibly-2004}, but here the VPA models are somewhat more general because they allow for $\yeps$-moves,
which can make a difference in the results.
Moreover, it will prove important to investigate if determinism, when present in the participant VPAs, can also be guaranteed for the resulting VPAs.
Further, since later on we will be analyzing the complexity of certain constructions, we also note how the sizes of the resulting models vary 
as a function of the size of the given models.

First we report on a simple result on the stack size during runs of VPAs. 
\begin{prop}\label{prop:stack-size}
Let $\yA=\yvpaA$ and $\yB=\yvpaB$ be VPAs over a common alphabet $A$.
Consider starting configurations $(s_1,\ysi,\yal_1\bot)$ and $(q_1,\ysi,\ybe_1\bot)$ of $\yA$ and $\yB$, respectively,
and where $\vert \yal_1\vert=\vert\ybe_1\vert$.
If $\ypdatrtf{(s_1,\ysi,\yal_1\bot)}{\star}{\yA}{(s_2,\yeps,\yal_2\bot)}$ and $\ypdatrtf{(q_1,\ysi,\ybe_1\bot)}{\star}{\yB}{(q_2,\yeps,\ybe_2\bot)}$, then we must have $\vert \yal_2\vert=\vert\ybe_2\vert$.
\end{prop}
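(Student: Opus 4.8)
The plan is to isolate the following fact: the \emph{height} of the stack reached at the end of a run of a VPA over an input word $\ysi$ depends only on $\ysi$ and on the initial stack height, and not at all on the VPA, on the states visited, or on which stack symbols are actually pushed. Granting this, the proposition follows at once: $\yA$ reads $\ysi$ from a stack $\yal_1\bot$ of height $\vert\yal_1\vert$ and $\yB$ reads the same $\ysi$ from a stack $\ybe_1\bot$ of the same height $\vert\ybe_1\vert=\vert\yal_1\vert$, so the stacks $\yal_2\bot$ and $\ybe_2\bot$ obtained at the end must have equal height, i.e.\ $\vert\yal_2\vert=\vert\ybe_2\vert$.

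To make this precise I would define $\yfun{f}{A^\star\times\ynat}{\ynat}$ by $f(\yeps,h)=h$ and, for $x\in A$, $f(x\yde,h)=f(\yde,h+1)$ when $x\in A_c$, $f(x\yde,h)=f(\yde,\max(h-1,0))$ when $x\in A_r$, and $f(x\yde,h)=f(\yde,h)$ when $x\in A_i$. The claim to prove is that whenever $\ypdatrtf{(p,\ysi,\yal\bot)}{n}{\yA}{(q,\yeps,\ybe\bot)}$ for some $n\ge 0$, then $\vert\ybe\vert=f(\ysi,\vert\yal\vert)$ (and identically for $\yB$). This goes by induction on $n$. The case $n=0$ forces $\ysi=\yeps$ and $\ybe=\yal$, and indeed $f(\yeps,\vert\yal\vert)=\vert\yal\vert$. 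For $n\ge 1$, I split off the first move, which uses some transition $(p,x,Z,r)\in\rho$, and read its effect on the stack off Definition~\ref{def:fsa-move}: if $x\in A_c$ the stack grows from $\yal\bot$ to $Z\yal\bot$, raising the height by $1$; if $x\in A_r$ the new height is $\vert\yal\vert-1$ when $\yal\ne\yeps$ and $0$ when $\yal=\yeps$ (the pop on $\bot$), that is $\max(\vert\yal\vert-1,0)$ in both cases; and if $x\in A_i$, or if the move is an $\yeps$-move, which by Definition~\ref{def:pda} is a transition in $\rho_i$, the stack is left unchanged. In each case the new height matches exactly one unfolding of $f$, so applying the induction hypothesis to the remaining $n-1$ moves (with the word $\ysi$ itself in the $\yeps$-move case, or with its first letter stripped otherwise) completes the step.

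Finishing is then a matter of applying the claim to the two runs in the hypothesis: $\vert\yal_2\vert=f(\ysi,\vert\yal_1\vert)$ and $\vert\ybe_2\vert=f(\ysi,\vert\ybe_1\vert)$, whence $\vert\yal_1\vert=\vert\ybe_1\vert$ gives $\vert\yal_2\vert=\vert\ybe_2\vert$. I do not expect a genuine obstacle here; the one point that must be handled with care is the behaviour of return transitions at the bottom of the stack --- Definition~\ref{def:fsa-move}(2) permits a $\rho_r$-move even when only $\bot$ is present, in which case the height stays at $0$ rather than becoming negative, which is exactly why $f$ uses $\max(h-1,0)$ in the return case --- together with the reminder that $\yeps$-moves are classified among the simple ($\rho_i$) transitions and hence leave the stack height untouched.
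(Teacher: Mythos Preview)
Your argument is correct, and it takes a genuinely different route from the paper's proof. The paper proceeds by induction on $\vert\ysi\vert$ and works with the two runs simultaneously: it writes $\ysi=\yde a$, isolates in each run the single step that consumes the last symbol $a$, applies the induction hypothesis to $\yde$ to equate the intermediate stack heights, and then checks case by case on $a\in A_c\cup A_r\cup A_i$ that the heights after that step agree (the trailing $\yeps$-moves then preserve them). You instead factor the statement through an explicit height function $f$ and induct on the number of moves $n$, proving a stronger intermediate lemma---namely that \emph{any} run over $\ysi$ from height $h$ ends at height $f(\ysi,h)$---and only at the end do you put the two runs side by side. Your decomposition is a little more modular and handles $\yeps$-moves more uniformly (they just consume one unit of $n$ without touching $\ysi$ or the height), whereas the paper's approach avoids introducing $f$ at the cost of slightly more bookkeeping when locating the step that reads the last input symbol. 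Both arguments hinge on the same observation about Definition~\ref{def:fsa-move}(2), which you flagged correctly: a return on $\bot$ leaves the height at $0$, matching the $\max(h-1,0)$ clause.
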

%\begin{proof}
%A simple induction on $n=\vert\ysi\vert$. 
%For details see Appendix~\ref{app:prop:stack-size}, at page~\pageref{app:prop:stack-size}.
%\end{proof}
\begin{proof}
	A simple induction on $n=\vert\ysi\vert$. 
	When $n=0$ we get $\ysi=\yeps$.
	According to Definition~\ref{def:fsa-move},  $\yeps$-moves do not change the stack, and so we get $\yal_1=\yal_2$ and $\ybe_1=\ybe_2$, and the result follows.
	
	Assume $n\geq 1$ and $\ysi=\yde a$, where $a\in A $ and $\vert\yde\vert=n-1$.
	Then computations can be separated thus
	\begin{align*}
		&\ypdatrtf{(s_1,\yde a,\yal_1\bot)}{\star}{\yA}{(s_3,a,\yal_3\bot)}\ypdatrtf{}{1}{\yA}{(s_4,\yeps,\yal_4\bot)}
		\ypdatrtf{}{\star}{\yA}{(s_2,\yeps,\yal_2\bot)}\\ 
		&\ypdatrtf{(q_1,\yde a,\ybe_1\bot)}{\star}{\yB}{(q_3,a,\ybe_3\bot)}\ypdatrtf{}{1}{\yB}{(q_4,\yeps,\ybe_4\bot)}
		\ypdatrtf{}{\star}{\yB}{(q_2,\yeps,\ybe_2\bot)}.
	\end{align*}
	Clearly, we get $\ypdatrtf{(s_1,\yde,\yal_1\bot)}{\star}{\yA}{(s_3,\yeps,\yal_3\bot)}$ and $\ypdatrtf{(q_1,\yde,\ybe_1\bot)}{\star}{\yB}{(q_3,\yeps,\ybe_3\bot)}$.
	The induction hypothesis implies $\vert\yal_3\vert=\vert\ybe_3\vert$.
	The proof will be complete if we can show that $\vert\yal_4\vert=\vert\ybe_4\vert$, because from these configurations onward we have only $\yeps$-moves, which would imply that $\vert \yal_2\vert = \vert\ybe_2\vert$.
	
	Let $(s_3,a,Z,s_4)$ and $(q_3,a,W,q_4)$ be the $\yA$ and $\yB$ transitions, respectively, used in the one-step 
	computations, as indicated above.
	From Definition~\ref{def:fsa-move} we have three simple cases: (i) when $a\in A_i\cup\{\yeps\}$ we get $\yal_3=\yal_4$ and $\ybe_3=\ybe_4$; 
	(ii) when $a\in A_c$ we get $\yal_4=Z\yal_3$ and $\ybe_4=W\ybe_3$; 
	and (iii) when $a\in A_r$, since $\vert \yal_3\vert=\vert\ybe_3\vert$, we get either $\yal_3\neq \yeps\neq\ybe_3$ and then $\yal_3=Z\yal_4$ and $\ybe_3=W\ybe_4$, or 
	$\yal_3= \yeps=\ybe_3$ and then $\yal_4= \yeps=\ybe_4$.
	%$\yal_3= \yeps=\ybe_3$ and we  get $\yal_4= \yeps=\ybe_4$.
	In any case we see that $\vert\yal_4\vert=\vert\ybe_4\vert$, because we already have $\vert\yal_3\vert=\vert\ybe_3\vert$.
\end{proof}

Using the product construction, we can show that VPLs are also closed under intersection.
%\\\avm{Lembrete: Vamos precisar do num de transições no produto?}
\begin{prop}\label{prop:cap-vpa}\label{prop:determ-complement}
Let $\yS=\yvpaA$ and $\yQ=\yvpa{Q}{Q_{in}}{A}{\yDe}{\mu}{G}$ be VPAs with $n$ and $m$ states, respectively. 
Then $L(\yS)\cap L(\yQ)$ can be accepted by a VPA $\yP$ with $mn$ states.
Moreover, if $\yS$ and $\yQ$ are deterministic, then $\yP$ is also deterministic.
\end{prop}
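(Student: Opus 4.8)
The plan is to use the synchronous product $\yP=\yS\times\yQ$ from Definition~\ref{def:productVPA} as the witness VPA. By construction, $\yP$ has state set $S\times Q$, so it has exactly $mn$ states, and its set of final states is $F\times G$. The initial stack symbol is $\bot$ (shared), the stack alphabet is $\yGa\times\yDe$ together with the placeholder conventions already built into the product. So the only real work is to verify that $L(\yP)=L(\yS)\cap L(\yQ)$, and that determinism is preserved.

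For the language equality, I would invoke Proposition~\ref{prop:product-behavior} directly. Take any $\ysi\in A^\star$. Applying that proposition with the empty starting stack content (i.e. $i=0$, so the starting configurations are $((s_0,q_0),\ysi,\bot)$, $(s_0,\ysi,\bot)$, $(q_0,\ysi,\bot)$), we get that
$$\ypdatrtf{((s_0,q_0),\ysi,\bot)}{\star}{\yP}{((p,r),\yeps,\yga)}$$
holds for some stack content $\yga=(Z_1,W_1)\ldots(Z_k,W_k)\bot$ if and only if
$$\ypdatrtf{(s_0,\ysi,\bot)}{\star}{\yS}{(p,\yeps,Z_1\ldots Z_k\bot)}\quad\text{and}\quad\ypdatrtf{(q_0,\ysi,\bot)}{\star}{\yQ}{(r,\yeps,W_1\ldots W_k\bot)}.$$
Now $\ysi\in L(\yP)$ means there is such a run with $(s_0,q_0)\in S_{in}\times Q_{in}$ and $(p,r)\in F\times G$; by the equivalence this is exactly the existence of an accepting run of $\yS$ on $\ysi$ and an accepting run of $\yQ$ on $\ysi$, i.e. $\ysi\in L(\yS)\cap L(\yQ)$. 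The one small bookkeeping point is that Proposition~\ref{prop:product-behavior} forces the two runs to end with stacks of the same length $k$; this is automatic here since both runs start from $\bot$ and Proposition~\ref{prop:stack-size} guarantees equal stack heights are maintained, so it imposes no extra restriction.

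For the determinism claim, if $\yS$ and $\yQ$ are deterministic we first pass to equivalent deterministic VPAs with no $\yeps$-moves using Proposition~\ref{prop:no-eps-determ} (this changes neither the languages nor — importantly — we only need that such models exist); then Proposition~\ref{prop:epsilon-deterministic}(2) tells us their product is deterministic with no $\yeps$-moves, and by the language equality just established this product accepts $L(\yS)\cap L(\yQ)$. Alternatively, if one wants to keep the exact $mn$ state count with the original (possibly $\yeps$-containing) deterministic models, one checks the determinism conditions of Definition~\ref{def:vpa-determinism} for $\yP$ directly; but since $\yeps$-moves in both factors can create nondeterminism in the product, the clean route is the normalization via Proposition~\ref{prop:no-eps-determ} first — and that proposition preserves the number of states, so the $mn$ bound survives.

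The main obstacle is essentially notational: making sure the starting and ending stack contents in the application of Proposition~\ref{prop:product-behavior} are set up correctly (the $i=0$, $\bot$-only case) and that the ``same stack height $k$ on both sides'' built into that proposition is genuinely harmless here. There is no deep combinatorics — the heavy lifting was already done in Propositions~\ref{prop:product-behavior}, \ref{prop:epsilon-deterministic}, and \ref{prop:no-eps-determ} — so this proof should be short, amounting to instantiating those results and reading off the state count $mn$ from the definition of the product.
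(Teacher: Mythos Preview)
Your proposal is correct and follows essentially the same route as the paper: take $\yP=\yS\times\yQ$, read off the $mn$ state count, use Proposition~\ref{prop:product-behavior} (with Proposition~\ref{prop:stack-size} for the stack-height bookkeeping in the converse direction) to get $L(\yP)=L(\yS)\cap L(\yQ)$, and appeal to Proposition~\ref{prop:epsilon-deterministic} for determinism.

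One point worth noting: you are actually more careful than the paper on the determinism clause. The paper simply invokes Proposition~\ref{prop:epsilon-deterministic}, but that proposition's item~(2) requires $\yS$ and $\yQ$ to have no $\yeps$-moves, a hypothesis not present in the statement of Proposition~\ref{prop:cap-vpa}. Your observation that $\yeps$-moves in both factors can break determinism of the product is correct, and your fix---first normalize each factor via Proposition~\ref{prop:no-eps-determ} (which preserves the state count), then take the product---is exactly what is needed to make the argument airtight while keeping the $mn$ bound.
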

\begin{proof}
Let $\yP=\yS\times \yQ=\yvpa{S\times Q}{S_{in}\times Q_{in}}{A}{\Gamma \times \yDe}{\nu}{F\times G}$
be the product of $\yS$ and $\yQ$.
See Definition~\ref{def:productVPA}.
It is clear that $\yP$ has $nm$ states.

For the language equivalence, assume that $\ysi\in L(\yP)$, so that 
$\ypdatrtf{((s,q),\ysi,\bot)}{\star}{\yP}{((p,r),\yeps,\yga\bot)}$ with $(s,q)\in S_{in}\times Q_{in}$ and 
$(p,r)\in F\times G$, for some $\yga\in (\yGa\times\yDe)^\star$.
Using Proposition~\ref{prop:product-behavior} we get $\ypdatrtf{(s,\ysi,\bot)}{\star}{\yS}{(p,\yeps,\yal\bot)}$ 
for some $\yal\in\yGa^\star$. 
Since $s\in S_{in}$ and $p\in F$, we conclude that $\ysi\in\ L(\yS)$.
Likewise, $\ysi\in\ L(\yQ)$, so that $\ysi\in\ L(\yS)\cap L(\yQ)$.
For the converse, assume that $s\in S_{in}$, $p\in F$ and $\ypdatrtf{(s,\ysi,\bot)}{\star}{\yS}{(p,\yeps,\yal\bot)}$ 
for some $\yal\in\yGa^\star$. Likewise, let $q\in Q_{in}$, $r\in G$ and $\ypdatrtf{(q,\ysi,\bot)}{\star}{\yQ}{(r,\yeps,\ybe\bot)}$ 
for some $\ybe\in\yDe^\star$. 
From Proposition~\ref{prop:stack-size} we get $\vert \yal\vert=\vert \ybe\vert$.
We can now apply Proposition~\ref{prop:product-behavior} and write 
$\ypdatrtf{((s,q),\ysi,\bot)}{\star}{\yP}{((p,r),\yeps,\yga\bot)}$ for some $\yga\in (\yGa\times\yDe)^\star$. 
Since $(p,r)\in F\times G$ we get $\ysi\in L(\yP)$, and the equivalence holds.

Finally, Proposition~\ref{prop:epsilon-deterministic} guarantees that $\yP$ is deterministic when $\yS$ and $\yQ$ are deterministic. 
\end{proof}

% --------------

Now, we investigate the closure of VPLs under union. 
But first, in order to complete the argument for the union, we need to consider VPAs that can always read any string of input symbols when started at any state and with any stack configuration.
\begin{defi}\label{def:forward} 
Let $\yA=\yvpaA$ be a VPA.
We say that $\yA$ is a \emph{non-blocking} VPA if, for all $s\in S$, all $\ysi\in A^\star$ and all $\yal\in\yGa^\star$, there are $p\in S$ and $\ybe\in\yGa^\star$ such that $\ypdatrtf{(s,\ysi,\yal\bot)}{\star}{\yA}{(p,\yeps,\ybe\bot)}$.
\end{defi}

Any VPA can be easily turned into a non-blocking VPA, with almost no cost in the number of states.
%\\\avm{Lembrete: vamos usar msis tarde numero de transições?}
\begin{prop}\label{prop:non-blocking}
Let $\yS=\yvpaA$ be a VPA with $n$ states. 
Then we can construct an equivalent VPA $\yQ$ with at most $n+1$ states and which is also a non-blocking VPA.
Moreover, $\yQ$ is deterministic if $\yS$ is deterministic, and $\yQ$ has no $\yeps$-moves if so does $\yS$.
\end{prop}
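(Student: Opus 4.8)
The plan is to add a single new ``sink'' state $d\notin S$ that absorbs all missing transitions and then keeps every input readable from it. Concretely, I would set $\yQ=\yvpa{S_d}{S_{in}}{A}{\yGa}{\mu}{F}$ where $S_d=S\cup\{d\}$, keep the same initial and final states (so $d\notin F$, which guarantees the language is unchanged), and let $\mu$ contain $\rho$ together with: for every $s\in S_d$ and every $a\in A_i$ a simple-transition $(s,a,\yait,d)$ whenever $\rho_i$ has no simple-transition out of $s$ on $a$ and (when $s\neq d$) no $\yeps$-move out of $s$; for every $s\in S_d$, every $a\in A_c$ and a fixed stack symbol $Z_0\in\yGa$ a push-transition $(s,a,Z_0,d)$ whenever $\rho_c$ has no push-transition out of $s$ on $a$; and for every $s\in S_d$, every $a\in A_r$ and every $Z\in\yGa_\bot$ a pop-transition $(s,a,Z,d)$ whenever $\rho_r$ has no pop-transition out of $s$ on $a$ with that $Z$. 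From $d$ itself we add, for each $a$, exactly one transition of the appropriate type into $d$ (pushing $Z_0$ on calls, popping whatever is on top on returns, staying put on internals), so $d$ is a genuine sink from which every input string can be consumed.

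The proof then breaks into three claims. First, $\yQ$ is non-blocking: by induction on $|\ysi|$, using that from any state and any stack the first symbol of $\ysi$ can always be consumed (either an original transition fires, or the newly added transition to $d$ does, the case split being over whether $a\in A_c$, $A_r$ or $A_i$ and, for returns, over whether the top of stack is $\bot$); once in $d$ the induction continues trivially since $d$ has outgoing transitions on every symbol. Second, $L(\yQ)=L(\yS)$: the inclusion $L(\yS)\subseteq L(\yQ)$ is immediate since $\rho\subseteq\mu$; for the converse, any accepting run of $\yQ$ ends in a final state, and since $d$ is not final and once a run enters $d$ it can never leave (all transitions out of $d$ go back to $d$), an accepting run never visits $d$, hence uses only transitions of $\rho$, hence is an accepting run of $\yS$. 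Third, the preservation claims: $\yQ$ adds only one state, so the bound $n+1$ is met; the new transitions are added only where the corresponding transition was \emph{absent}, so $\yQ$ inherits each clause of Definition~\ref{def:vpa-determinism} from $\yS$ (for clause (3) I explicitly declined to add a simple-transition to $d$ out of a state that has an $\yeps$-move); and no $\yeps$-moves are ever introduced, so $\yQ$ has no $\yeps$-moves if $\yS$ does not.

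The main obstacle is the determinism argument in the presence of the stack-symbol data in $\rho_c$ and $\rho_r$: I must be careful that the ``completion'' on call symbols adds at most one new push-transition per $(s,a)$ pair (this is why I fix a single $Z_0$), that the completion on return symbols is indexed by the top-of-stack symbol $Z$ so that determinism clause (2) is not violated, and that a deterministic $\yS$ never simultaneously has, say, a real transition on $a$ and a freshly added one — which holds precisely because I add a new transition only when none of the required type exists. A secondary subtlety is the interaction of clause (3) with $\yeps$-moves already present in $\yS$: if $s$ has an $\yeps$-move I must \emph{not} give $s$ a new transition on any $a\in A_i$, since that would recreate a (3)-violation; but $\yS$ being deterministic already forbids $\yeps$-moves coexisting with real outgoing transitions, so in that case $s$ is automatically handled and no completion at $s$ is needed. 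Once these bookkeeping points are settled the rest is routine.
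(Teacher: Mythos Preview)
Your approach is essentially the same sink-state completion that the paper uses, and most of your argument is fine. There is, however, a genuine gap in the determinism-preservation step.

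You restrict the $A_i$-completion to states $s$ that have no $\yeps$-move out, but you do \emph{not} impose this restriction on the $A_c$- and $A_r$-completions. Your justification is that clause~(3) of Definition~\ref{def:vpa-determinism} concerns only $A_i$-transitions; this is a misreading. Clause~(3) says that if $(p,x,Z,q_1)\in\rho$ with $x\neq\yeps$ then there is no $\yeps$-move out of $p$, and here $\rho=\rho_c\cup\rho_r\cup\rho_i$, so the clause forbids \emph{any} non-$\yeps$ transition---call, return, or internal---from coexisting with an $\yeps$-move at the same source state. Concretely: take a deterministic $\yS$ with a state $s$ whose only outgoing transition is $(s,\yeps,\yait,t)$. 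By clause~(3), $s$ has no push-transition on any $a\in A_c$, so your construction adds $(s,a,Z_0,d)$. Now $\yQ$ has both $(s,\yeps,\yait,t)$ and $(s,a,Z_0,d)$ with $a\neq\yeps$, violating clause~(3). The same happens for $A_r$. So $\yQ$ is not deterministic.

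The fix is exactly what the paper does: skip \emph{all} completion (not just the $A_i$-completion) at any state that already has an outgoing $\yeps$-move. Your closing remark ``no completion at $s$ is needed'' is the right intuition, but your construction as written contradicts it for calls and returns. Once you uniformly exclude completion at $\yeps$-states, you must also revisit the non-blocking argument at those states: your case split ``either an original transition fires, or the newly added transition to $d$ does'' no longer covers them, and you need to argue that the $\yeps$-move lets you reach (in finitely many steps) a state where the first input symbol can be consumed.
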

\begin{proof}
Let $\yQ=\yvpa{S\cup\{p\}}{S_{in}}{A}{\yGa}{\mu}{F}$ where $p\not\in S$ is a new state.
In order to construct the new transition set $\mu$ as an extension of $\rho$, first pick some stack symbol $Z\in\yGa$.
Then, for all $s\in S$ such that there is no $\yeps$-transition out of $s$, that is $(s,\yeps,\yait,q)$ is not in $\rho$ for any $q\in S$, we proceed as follows.
For any input symbol $a\in A$:
\begin{enumerate}
	\item $a\in A_i$: if $(s,a,\yait,r)\not\in\rho$ for all $r\in S$, add the
	transition $(s,a,\yait,p)$ to $\mu$;
	\item $a\in A_c$: if $(s,a,W,r)\not\in\rho$ for all $W\in\yGa$ and all $r\in S$, add the
	transition $(s,a,Z,p)$ to $\mu$;
	\item $a\in A_r$: if $(s,a,W,r)\not\in\rho$ for some $W\in\yGa_\bot$ and all $r\in S$, add the
	transition $(s,a,W,p)$ to $\mu$.
\end{enumerate}
Finally, add to $\mu$ the self-loops $(p,a,\yait,p)$ for all $a\in A_i$, $(p,a,Z,p)$ for all $a\in A_c$, and 
$(p,a,W,p)$ for all $a\in A_r$ and all $W\in\yGa_\bot$. 

It is clear now that $\yS$ has $n+1$ states.
Moreover,  for all $s\in S$ and all $a\in A$ the construction readily allows for a move $\ypdatrtf{(s,a,\yal\bot)}{}{}{(r,\yeps,\ybe\bot)}$ for all $\yal\in \yGa^\star$.
Hence, an easy induction on $\vert \ysi\vert\geq 0$ shows that for all $s\in S$ and all $\yal\in\yGa^\star$ there will always be a computation $\ypdatrtf{(s,\ysi,\yal\bot)}{\star}{\yS}{(r,\yeps,\ybe\bot)}$, for some $r\in S$ and some $\ybe\in\yGa^\star$.
That is, the modified version is a non-blocking VPA.

The construction adds no $\yeps$-moves, so it is  clear that $\yQ$ has no $\yeps$-moves when we start  with a VPA $\yS$ that already has  no $\yeps$-moves.

Assume that $\yS$ is deterministic.
Since the construction adds no $\yeps$-moves, the new VPA $\yQ$ does not violate condition (3) of Definition~\ref{def:vpa-determinism}.
Also, none of the self-loops added at the new state $p$ violate any of the conditions of Definition~\ref{def:vpa-determinism}.
When $a\in A_c$ a new transition $(s,a,Z,p)$ is only added to $\yQ$ when there are no transition $(s,a,X,r)$ already in $\yS$, for any $X\in \yGa$, $r\in S$. 
Hence, $\yQ$ does not violate condition (1) of Definition~\ref{def:vpa-determinism}.
Likewise, we only add $(s,a,\yait,p)$, or $(s,a,W,p)$, to $\mu$ when we find no $(s,a,\yait,r)$, respectively we find no $(s,a,W,r)$, in $\rho$ for any $r$ in $S$. 
Hence, $\yQ$ does not violate condition (2) of Definition~\ref{def:vpa-determinism}. 
We conclude that $\yQ$ is deterministic when $\yS$ is already deterministic.

If  $\ypdatrtf{(s_0,\ysi,\bot)}{}{\yS}{(f,\yeps,\yal\bot)}$, with $s_0\in S_{in}$, $f\in F$, $\yal\in\yGas$ then we also have $\ypdatrtf{(s_0,\ysi,\bot)}{}{\yQ}{(f,\yeps,\yal\bot)}$, because all transitions in $\rho$ are also in $\mu$.
Hence, $L(\yS)\ysse L(\yQ)$.
For the converse, assume $\ypdatrtf{(s_0,\ysi,\bot)}{}{\yQ}{(f,\yeps,\yal\bot)}$, with $s_0\in S_{in}$, $f\in F$, $\yal\in\yGas$.
Note that all new transitions added to $\mu$ have the new state $p$ as a target state.
Thus, since the new state $p$ is not in $S_{in}$ nor in $F$, we see that all transitions used in this run over $\yQ$ are also in $\rho$, and
we then get $\ypdatrtf{(s_0,\ysi,\bot)}{}{\yS}{(f,\yeps,\yal\bot)}$.
Thus we also have $L(\yQ)\ysse L(\yS)$, showing that $L(\yS)= L(\yQ)$.
\end{proof}

% ------------

Now the closure of VPLs under union is at hand.
%\\\avm{Lembrete: número de transições?}
\begin{prop}\label{prop:cup-vpa}
Let $\yS$ and $\yQ$ be two VPAs over an alphabet $A$, with $n$ and $m$ states, respectively. 
Then, we can construct a non-blocking VPA $\yP$ over $A$ with at most $(n+1)(m+1)$ states and such that $L(\yP)=L(\yS)\cup L(\yQ)$. 
Moreover, if $\yS$ and $\yQ$ are deterministic, then $\yP$ is also deterministic and has no $\yeps$-moves.
\end{prop}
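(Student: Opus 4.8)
The plan is to reduce the union to the product construction of Definition~\ref{def:productVPA}, changing only the set of final states. First I would apply Proposition~\ref{prop:non-blocking} to $\yS$ and to $\yQ$, obtaining equivalent \emph{non-blocking} VPAs $\yS'$ and $\yQ'$ with at most $n+1$ and $m+1$ states; by that same proposition, $\yS'$ and $\yQ'$ inherit determinism and the absence of $\yeps$-moves whenever $\yS$ and $\yQ$ have them. (We may also assume that the initial-state sets of $\yS'$ and $\yQ'$ are non-empty, since otherwise one of $L(\yS)$, $L(\yQ)$ is empty and the union equals the other, which is already a VPL.) Then I would take $\yP$ to be the product $\yS'\times\yQ'$ exactly as in Definition~\ref{def:productVPA}, except that the set of final states is replaced by $(F'\times Q')\cup(S'\times G')$, where $F'$, $S'$ are the final- and state-sets of $\yS'$ and $G'$, $Q'$ those of $\yQ'$; thus $\yP$ accepts a word precisely when \emph{one} of the two components would. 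Plainly $\yP$ has at most $(n+1)(m+1)$ states.

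For the identity $L(\yP)=L(\yS)\cup L(\yQ)$ I would argue with Proposition~\ref{prop:product-behavior} in both directions, using non-blocking-ness of $\yS'$ and $\yQ'$ to manufacture synchronized runs. If $\ysi\in L(\yS)=L(\yS')$ there is a run $\ypdatrtf{(s_0,\ysi,\bot)}{\star}{\yS'}{(p,\yeps,\yal\bot)}$ with $s_0$ initial and $p\in F'$; picking any initial $q_0$ of $\yQ'$, non-blocking-ness yields $\ypdatrtf{(q_0,\ysi,\bot)}{\star}{\yQ'}{(r,\yeps,\ybe\bot)}$, and Proposition~\ref{prop:stack-size} forces $\vert\yal\vert=\vert\ybe\vert$, so Proposition~\ref{prop:product-behavior} assembles a run of $\yP$ on $\ysi$ ending in $(p,r)\in F'\times Q'$; hence $\ysi\in L(\yP)$. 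The case $\ysi\in L(\yQ)$ is symmetric. Conversely, a run of $\yP$ on $\ysi$ from an initial state to a final state $(p,r)$ projects, by Proposition~\ref{prop:product-behavior}, onto runs of $\yS'$ and of $\yQ'$ on $\ysi$ ending in $p$ and $r$ respectively; since $(p,r)\in(F'\times Q')\cup(S'\times G')$ we have $p\in F'$ or $r\in G'$, so $\ysi\in L(\yS')$ or $\ysi\in L(\yQ')$, i.e.\ $\ysi\in L(\yS)\cup L(\yQ)$.

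The non-blocking-ness of $\yP$ follows by the same mechanism: given any state $(s,q)$ of $\yP$, any $\ysi\in A^\star$, and any stack content $\yga$ over $\yGa\times\yDe$, project $\yga$ coordinatewise into $\yal$ over $\yGa$ and $\ybe$ over $\yDe$ with $\vert\yal\vert=\vert\ybe\vert=\vert\yga\vert$; non-blocking-ness of $\yS'$ and $\yQ'$ produces runs consuming all of $\ysi$ from $(s,\yal\bot)$ and from $(q,\ybe\bot)$, Proposition~\ref{prop:stack-size} keeps the two resulting stacks equal in length, and Proposition~\ref{prop:product-behavior} reassembles them into a run of $\yP$ from $((s,q),\ysi,\yga\bot)$ consuming $\ysi$, which is exactly Definition~\ref{def:forward}. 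Finally, when $\yS$ and $\yQ$ are deterministic, $\yS'$ and $\yQ'$ are deterministic with no $\yeps$-moves; since altering the set of final states affects none of the conditions in Definition~\ref{def:vpa-determinism}, Proposition~\ref{prop:epsilon-deterministic} gives at once that $\yP=\yS'\times\yQ'$ is deterministic with no $\yeps$-moves.

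I expect the main obstacle to be the bookkeeping needed to invoke Proposition~\ref{prop:product-behavior}: it speaks only of configurations whose stacks are ``zipped'' pairs of equal-length component stacks, so at each application I must first produce synchronized runs of $\yS'$ and $\yQ'$ over the \emph{same} word (which is exactly why the non-blocking reduction is made) and then check, via Proposition~\ref{prop:stack-size}, that the component stacks stay equal in length throughout. Everything else is routine.
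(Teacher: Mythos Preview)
Your proposal is correct and follows essentially the same route as the paper: make both factors non-blocking via Proposition~\ref{prop:non-blocking}, take the product with the enlarged final-state set $(F'\times Q')\cup(S'\times G')$, and use Propositions~\ref{prop:stack-size} and~\ref{prop:product-behavior} for the language identity and for non-blocking-ness, then Proposition~\ref{prop:epsilon-deterministic} for determinism. One small slip: from ``$\yS$ and $\yQ$ deterministic'' alone you cannot conclude that $\yS'$ and $\yQ'$ have no $\yeps$-moves (Proposition~\ref{prop:non-blocking} does not remove them); you must first invoke Proposition~\ref{prop:no-eps-determ} to strip $\yeps$-moves while keeping determinism and the state count, and only then apply Proposition~\ref{prop:non-blocking}---the paper glosses over this step too.
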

%\begin{proof}
%	Applying Proposition~\ref{prop:non-blocking}, we can assume that 
%	$\yS$ and $\yQ$ are non-blocking VPAs with $n+1$ and $m+1$ states, respectively.
%	Next we obtain the product $\yP$ of $\yS$ and $\yQ$ with $(n+1)(m+1)$ states using Definition~\ref{def:productVPA}, and  
%	applying Proposition~\ref{prop:stack-size} we get  that $\yP$ is a non-blocking VPA.
%	Using Propositions~\ref{prop:product-behavior} and~\ref{prop:stack-size} we prove that $L(\yS) \cup L(\yQ)=L(\yP)$. 
%	Finally we show that $\yS$ and $\yQ$ are deterministic, then $\yP$ is also deterministic and has no $\yeps$-moves using Proposition~\ref{prop:epsilon-deterministic} (2), completing the proof.
%
%	See Appendix~\ref{app:prop:cup-vpa}, at page~\pageref{app:prop:cup-vpa} for details.
%\end{proof}
\begin{proof}
	Let $\yS=\yvpaA$ and $\yQ=\yvpaB$.
	Using Proposition~\ref{prop:non-blocking} we can assume that $\yS$ and $\yQ$ are non-blocking VPAs with $n+1$ and $m+1$ states, respectively.
	
	Let $\yP$ be the product of $\yS$ and $\yQ$ as in Definition~\ref{def:productVPA}, except that we redefine the final states of $\yP$ as $(F\times Q)\cup (S\times G)$.
	Clearly, $\yP$ has $(n+1)(m+1)$ states.
	
	We now argue that $\yP$ is also a non-blocking VPA.
	Let $\ysi\in A^\star$, $(s,q)\in S\times Q$, and let $\yga=(Z_1,W_1)\ldots(Z_k,W_k)\in (\yGa\times\yDe)^\star$.
	Since $\yS$ is a non-blocking VPA, we get $n\geq 0$, $p\in S$, $\yal\in\yGa^\star$ such that  
	$\ypdatrtf{(s,\ysi,Z_1\ldots Z_k\bot)}{n}{\yS}{(p,\yeps,\yal\bot)}$. 
	Likewise, $\ypdatrtf{(q,\ysi,W_1\ldots W_k\bot)}{m}{\yQ}{(r,\yeps,\ybe\bot)}$, for some
	$m\geq 0$, $r\in Q$, $\ybe\in\yDe^\star$.
	Applying Proposition~\ref{prop:stack-size} we get $\vert \yal\vert=\vert\ybe\vert$, and then using  Proposition~\ref{prop:product-behavior} we have 
	$\ypdatrtf{((s,q),\ysi,\yga\bot)}{\star}{\yP}{((p,r),\yeps,\yde\bot)}$ where $\yga=(Z_1,W_1)\ldots (Z_k,W_k)$ and  $\yde\in (\yGa\times\yDe)^\star$.
	This shows that $\yP$ is a non-blocking VPA.
	
	Now suppose that $\ysi\in L(\yP)$, that is $\ypdatrtf{((s_0,q_0),\ysi,\bot)}{\star}{\yP}{((p,r),\yeps,\yal\bot)}$, where $(s_0,q_0)\in S_{in}\times Q_{in}$,  $(p,r)\in (F\times Q)\cup(S\times G)$, and $\yal\in(\yGa\times\yDe)^\star$.
	Take the case when $(p,r)\in (F\times Q)$.
	We get $p\in F$ and $s_0\in S_{in}$. 
	Using Proposition~\ref{prop:product-behavior} we can also write $\ypdatrtf{(s_0,\ysi,\bot)}{\star}{\yS}{(p,\yeps,\ybe\bot)}$, for some $\ybe\in \yGa^\star$.
	This shows that $\ysi\in L(\yS)$.
	By a similar reasoning, when $(p,r)\in (S\times G)$ we get $\ysi\in L(\yQ)$.
	Thus, $L(\yP)\ysse L(\yS)\cup L(\yQ)$. 
	
	Now let $\ysi\in  L(\yS)\cup L(\yQ)$.
	Take the case $\ysi\in L(\yS)$, the case $\ysi\in L(\yQ)$ being similar.
	Then we must  have $\ypdatrtf{(s_0,\ysi,\bot)}{n}{\yS}{(p,\yeps,\yal\bot)}$ for some $n\geq 0$, some $\yal\in \yGa^\star$, and some $p\in F$.
	Pick any $q_0\in Q_{in}$.
	Since $\yQ$ is a non-blocking VPA, Definition~\ref{def:forward} gives some $r\in Q$ and some $\ybe\in \yDe^\star$ such that $\ypdatrtf{(q_0,\ysi,\bot)}{m}{\yQ}{(r,\yeps,\ybe\bot)}$ for some $m\geq 0$, some $r\in Q$ and some $\ybe\in\yDe^\star$. 
	Using Proposition~\ref{prop:stack-size} we conclude that $\vert \yal\vert=\vert\ybe\vert$.
	The only-if part of Proposition~\ref{prop:product-behavior} now yields  
	$\ypdatrtf{((s_0,q_0),\ysi,\bot)}{\star}{\yP}{((p,r),\yeps,\yga\bot)}$, where $\yga\in (\yGa\times\yDe)^\star$.
	Clearly, $(s_0,q_0)\in S_{in}\times Q_{in}$ is an initial state of $\yP$ and $(p,r)\in  (F\times Q)\cup (S\times G)$ is a final state of $\yP$.
	Hence $\ysi\in L(\yP)$. 
	Thus $ L(\yS)\cup L(\yQ)\ysse L(\yP)$, and so $ L(\yS)\cup L(\yQ)=L(\yP)$.
	
	Applying Proposition~\ref{prop:epsilon-deterministic} (2) we see that when $\yS$ and $\yQ$ are deterministic, then $\yP$ is also deterministic and has no $\yeps$-moves.
	
	The proof is now complete.
\end{proof}

% ----------

We can also show that deterministic VPLs are closed under complementation.
A related result appeared in~\cite{alurm-visibly-2004}, but here we also allow for arbitrary $\yeps$-moves in any model.
%\\\avm{Lembrete: número de transições?}
\begin{prop}\label{prop:compl-vpa}
Let $\yS=\yvpaA$ be a deterministic VPA  with $n$ states. % and no $\yeps$-moves. 
Then, we can construct a non-blocking and deterministic VPA $\yQ$ over $A$ with no $\yeps$-moves,  $n + 1$ states, and such that $L(\yQ)=\ycomp{L(\yS)} = \ySis - L(\yS)$.
\end{prop}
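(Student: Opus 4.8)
The plan is to follow the classical complementation recipe for deterministic machines, adapted to the visibly pushdown setting: (i) make the given VPA non-blocking without losing determinism or gaining $\yeps$-moves, (ii) swap final and non-final states, and (iii) verify that the resulting VPA accepts exactly the complement. First I would invoke Proposition~\ref{prop:non-blocking} on $\yS$ to obtain an equivalent non-blocking deterministic VPA with no $\yeps$-moves and at most $n+1$ states; call it $\yS'=\yvpa{S'}{S_{in}'}{A}{\yGa}{\rho'}{F'}$. Then I would define $\yQ=\yvpa{S'}{S_{in}'}{A}{\yGa}{\rho'}{S'-F'}$, i.e. the same VPA with the set of final states replaced by its complement inside $S'$. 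Clearly $\yQ$ is still deterministic (the conditions of Definition~\ref{def:vpa-determinism} do not mention the final-state set), still non-blocking (Definition~\ref{def:forward} does not mention final states either), still has no $\yeps$-moves, and has at most $n+1$ states.

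The core of the argument is the language equality $L(\yQ)=\ySis-L(\yS')=\ySis-L(\yS)$. Fix $\ysi\in A^\star$. Since $\yS'$ is non-blocking, starting from the unique initial configuration $(s_0,\ysi,\bot)$ with $s_0\in S_{in}'$ there exists at least one run $\ypdatrtf{(s_0,\ysi,\bot)}{\star}{\yS'}{(p,\yeps,\ybe)}$ reading all of $\ysi$. Since $\yS'$ is deterministic, Proposition~\ref{prop:vpa-determ} guarantees this terminal configuration $(p,\yeps,\ybe)$ is \emph{unique}; in particular the reached state $p$ depends only on $\ysi$. (Strictly, one must also handle intermediate $\yeps$-moves after $\ysi$ is consumed, but the non-blocking hypothesis together with uniqueness pins down a single reachable configuration with empty remaining input, so ``the state reached on $\ysi$'' is well defined — note $\yS'$ has no $\yeps$-moves, which makes this even cleaner.) Then $\ysi\in L(\yS')$ iff $p\in F'$, and $\ysi\in L(\yQ)$ iff $p\in S'-F'$; since exactly one of $p\in F'$, $p\in S'-F'$ holds, we get $\ysi\in L(\yQ)\iff \ysi\notin L(\yS')$, which is the claim. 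Combined with $L(\yS')=L(\yS)$ from Proposition~\ref{prop:non-blocking}, this yields $L(\yQ)=\ySis-L(\yS)$.

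The main obstacle — and the only place requiring care — is the ``exactly one run'' step: one must be sure that for every input word there is a computation that consumes the \emph{whole} word and halts, and that no input word admits two computations landing in different states. The first is precisely non-blocking-ness (Proposition~\ref{prop:non-blocking}), and the second is determinism plus Proposition~\ref{prop:vpa-determ}. A subtle point worth spelling out is that a pop-transition is always enabled even on the bottom symbol $\bot$ (by Definition~\ref{def:fsa-move}(2)), so the machine never blocks merely because the stack is ``empty''; hence making $\yS$ non-blocking genuinely suffices, and the complement construction goes through without the extra dead-state gymnastics sometimes needed for ordinary PDAs. The $n+1$ bound is inherited verbatim from Proposition~\ref{prop:non-blocking}, and no new states are introduced in the final-state swap.
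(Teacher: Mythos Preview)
Your approach is essentially the paper's, but there is a genuine gap in step (i). You invoke only Proposition~\ref{prop:non-blocking} and assert that the resulting $\yS'$ has no $\yeps$-moves. However, Proposition~\ref{prop:non-blocking} says that the output has no $\yeps$-moves \emph{only if the input already has none}. The hypothesis of Proposition~\ref{prop:compl-vpa} is merely that $\yS$ is deterministic, and Definition~\ref{def:vpa-determinism} explicitly permits $\yeps$-moves in deterministic VPAs. So $\yS'$ may well retain $\yeps$-moves, and then your uniqueness argument breaks: Proposition~\ref{prop:vpa-determ} gives uniqueness only for a \emph{fixed} number $n$ of steps, whereas with $\yeps$-moves the same word $\ysi$ can be fully consumed in runs of different lengths, ending in different states. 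If one such state is in $F'$ and another is in $S'-F'$, then $\ysi$ lies in both $L(\yS')$ and $L(\yQ)$, and the complement equality fails.

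The fix is exactly what the paper does: first apply Proposition~\ref{prop:no-eps-determ} to obtain an equivalent deterministic VPA with no $\yeps$-moves and the same number of states, and only then apply Proposition~\ref{prop:non-blocking}. With no $\yeps$-moves, every run consuming $\ysi$ has length exactly $\vert\ysi\vert$, so Proposition~\ref{prop:vpa-determ} does yield a unique terminal state, and your final-state swap goes through as written. Apart from this missing preprocessing step, your argument matches the paper's.
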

%\begin{proof}
%Applying Propositions~\ref{prop:no-eps-determ} and~\ref{prop:non-blocking}, we can assume that $\yS$ is a non-blocking and deterministic VPA with $n+1$ states and no $\yeps$-moves.
%
%Let $\yQ=\yvpa{S}{S_{in}}{A}{\yGa}{\rho}{S-F}$, that is, we switch the final states of $\yS$.
%Clearly, since $\yS$ and $\yQ$ have the same set of initial states and the same transition relation, 
%we see that $\yQ$ is also a non-blocking and deterministic VPA with $n+1$ states and no $\yeps$-moves.
%
%The proof can be concluded with a simple argument to show that $L(\yQ)=\ycomp{L(\yS)} = \ySis - L(\yS)$. 
%See Appendix~\ref{app:prop:compl-vpa}, at page~\pageref{app:prop:compl-vpa} for details.
%\end{proof}
\begin{proof}
	Applying Propositions~\ref{prop:no-eps-determ} and~\ref{prop:non-blocking}, we can assume that $\yS$ is a non-blocking and deterministic VPA with $n+1$ states and no $\yeps$-moves.
	
	Let $\yQ=\yvpa{S}{S_{in}}{A}{\yGa}{\rho}{S-F}$, that is, we switch the final states of $\yS$.
	Clearly, since $\yS$ and $\yQ$ have the same set of initial states and the same transition relation, 
	we see that $\yQ$ is also a non-blocking and deterministic VPA with $n+1$ states and no $\yeps$-moves.
	
	Pick any $\ysi\in L(\yQ)$, and let $n=\vert \ysi\vert$.
	Since $\yQ$ has no $\yeps$-moves, we must have $\ypdatrtf{(q_0,\ysi,\bot)}{n}{\yQ}{(r,\yeps,\yal\bot)}$ for some $q_0\in S_{in}$, $\yal\in \yGa^\star$, and some $r\in S-F$.
	For the sake of contradiction, assume that $\ysi\in L(\yS)$.
	Then, because $\yS$ has no $\yeps$-moves, we must have $\ypdatrtf{(s_0,\ysi,\bot)}{n}{\yS}{(p,\yeps,\ybe\bot)}$ for some $s_0\in S_{in}$, $\ybe\in \yGa^\star$, and some $p\in F$.
	Since $\yS$ is deterministic we have $\vert S_{in}\vert=1$, so that $q_0=s_0$.
	Since $\yS$ and $\yQ$ have exactly the same set of transitions, we can now write
	$\ypdatrtf{(s_0,\ysi,\bot)}{n}{\yS}{(r,\yeps,\yal\bot)}$.
	Together with $\ypdatrtf{(s_0,\ysi,\bot)}{n}{\yS}{(p,\yeps,\ybe\bot)}$ and Proposition~\ref{prop:vpa-determ}, we conclude that $p=r$, which is a contradiction since $r\in S-F$ and $p\in F$.
	Thus, $L(\yQ)\ysse \ycomp{L(\yS)}$.
	
	Now let $\ysi\not\in L(\yS)$.
	Since $\yS$ is non-blocking, we have $\ypdatrtf{(s_0,\ysi,\bot)}{\star}{\yS}{(p,\yeps,\yal\bot)}$ for some $p\in S$ and $\yal\in \yGa^\star$.
	Because $\ysi\not\in L(\yS)$ we need $p\in S-F$.
	Since $\yS$ and $\yQ$ have the same set of transitions, we can write $\ypdatrtf{(s_0,\ysi,\bot)}{n}{\yQ}{(p,\yeps,\yal\bot)}$.
	Thus, $\ysi\in L(\yQ)$ because $p\in S-F$. 
	Hence, $ \ycomp{L(\yS)}\ysse L(\yQ)$, and the proof is complete. %\yfim
\end{proof}

% !TeX spellcheck=en_US
\section{Reactive Pushdown  Models}\label{sec:conformance}

In this section we introduce the Visibly Pushdown Labeled Transition System (VPTS), and its variation Input/Output VPTS, both  models that can properly specify pushdown reactive systems. 
We also discuss the notion of contracted VPTSs, an important property to allow the model-based testing process, and state the relationship with its associated VPA. 
%For the sake of completeness, we also include standard definitions and properties of context-free languages and pushdown automata (PDAs).

\subsection{The formalism of VPTS}\label{subsec:lts}\label{subsec:pda}

Visibly Pushdown Labeled Transition System (VPTS) is an appropriate formalism with a potentially infinite memory that allow us to model pushdown reactive systems. 
VPTS models can specify the asynchronous exchange of messages between a system and the environment, where the outputs do not have to occur synchronously with inputs, \emph{i.e.}, output messages are generated as separated events. 
%The definition of a VPTS is inspired by the close notion of a Visibly Pushdown Automaton~\cite{alurm-visibly-2004}. 
%As will be apparent,  any VPTS naturally induces a VPA. 
%In particular, internal transitions in the VPTS will correspond to $\yeps$-moves in the associated VPA.  
Next we formally define VPTS models. 
\begin{defi}\label{def:plts}
	A \emph{Visibly Pushdown Labeled Transition System} (VPTS)  over an input alphabet $L$ is a tuple $\yS=\yvptsS$, where:
	\begin{itemize}
		\item[---] $S$ is a finite set of \emph{states} or \emph{locations};
		\item[---] $S_{in}\ysse S$ is the set of \emph{initial states}; 
		\item[---] %$L$ is a finite set of \emph{labels}, or \emph{actions}.
		There  is a special symbol $\ytau\notin L$, the \emph{internal action symbol}; 
		\item[---] $\Gamma$ is a  set of \emph{stack symbols}.
		There is a special symbol $\bot\not \in \Gamma$, the \emph{bottom-of-stack symbol}; 
		\item[---] 
		$T=T_c \cup T_r \cup T_i$, where $T_c\ysse S\times L_c \times \yGa  \times S$, $T_r\ysse S\times L_r \times \yGa_\bot  \times S$, and $T_i\ysse S\times (L_i\cup\{\ytau\}) \times \{\yait\} \times S$,  where $\yait\not\in\yGa_\bot$ is a place-holder symbol.
	\end{itemize}
%	The class of VPTSs over a set of inputs $L$ will be indicated by $\yvptsc{L}$.
\end{defi}
%Let $L$ be comprised by the partition $(L_c,L_r,L_i)$. 
Assume that $t=(p,x,Z,q)$ is a transition of $T$. 
We call by \emph{push-transition} when $t\in T_c$ and the meaning is reading an input $x$ when moving from the state $p$ to $q$ in $\yS$, and pushes $Z$ onto the stack. 
We also call \emph{pop-transition} if $t\in T_r$, and in this case, the intended meaning is that, changing $\yS$ from $p$ to $q$, $x\in L_r$ is read in $\yS$,  and pops $Z$ from the stack. 
Notice that a pop move can be taken when the stack is empty, leaving the stack unchanged, if the pop symbol is $\bot$. 
Finally, we call by a \emph{simple-transition} when $t\in T_i$ and $x\in L_i$, 
and by an \emph{internal-transition} when $t\in T_i$ and $x=\ytau$. 
A simple-transition $t$ that reads $x$ when moving from $p$ to $q$ does not change the stack. 
Likewise an internal-transition does not change the stack, but further does not read any symbol from the input.

Next we give the notion of configurations and precisely define elementary moves in VPTS models. 
\begin{defi}\label{def:simplemove}
	Let $\yS=\yvptsS$ be a VPTS.
	A \emph{configuration} of $\yS$ is a pair $(p,\yal)\in S\times (\yGas\{\bot\})$. %\avm{(sempre termina com $\bot$ e $\bot$ só aparece no fim da pilha)} 
	When $p\in S_{in}$ and $\yal=\bot$, we say that  $(p,\yal)$ is an \emph{initial configuration} of $\yS$.
	The set of all configurations of $\yS$ is indicated by $\yltsconf{\yS}$.
	Let $(q,\yal)\in\yltsconf{\yS}$, and let $\ell \in L_\ytau$. 
	Then we write $\ytr{(p,\yal)}{\ell}{(q,\ybe)}$ if there is a transition $(p,\ell,Z,q)\in T$, and either:
	\begin{enumerate}
		\item $\ell \in L_c$, and $\ybe=Z\yal$; 
		\item  $\ell \in L_r$, and  either (i) $Z\neq \bot$ and $\yal=Z\ybe$, or (ii) $Z=\yal=\ybe=\bot$; 
		\item $\ell \in L_i \cup \{\ytau\}$ and $\yal=\ybe$. 
	\end{enumerate}
	Then an \emph{elementary move} of $\yS$ is represented by $\ytr{(p,\yal)}{\ell}{(q,\ybe)}$ when the transition $(p,\ell,Z,q)\in T$ is \emph{used} in this move. 
	Further, after any elementary move $\ytr{(p,\yal)}{\ell}{(q,\ybe)}$, $(q,\ybe)\in\yltsconf{\yS}$ is also a configuration of $\yS$.
\end{defi} 

From now on, when graphically depicting VPTSs, we will represent a push-transition $(s,x,Z,q)$ by {\rm $x/\ypush{Z}$} next to the corresponding arc from $s$ to $q$ in the figure. 
Similarly, a pop-transition $(s,x,Z,q)$ will be indicated by the label {\rm $x/\ypop{Z}$} next to the arc from $s$ to $q$.
Simple- or internal-transitions over $(s,x,\yait,q)$ will be indicated by the label {\rm $x$} next to the corresponding arc. 	

\begin{exam}\label{example-vpts1}
	Figure~\ref{vpts1} represents a VPTS $\yS$ where the set of states is $S=\{s_0,s_1\}$, $S_{in}=\{s_0\}$.
	Also, we have $L_c=\{b\}$, $L_r=\{c,t\}$, $L_i=\{\}$, and $\yGa=\{Z\}$.
	\begin{figure}[tb]
\center

\begin{tikzpicture}[node distance=1cm, auto,scale=.6,inner sep=1pt]
  \node[ initial by arrow, initial text={}, punkt] (q0) {$s_0$};
  \node[punkt, inner sep=1pt,right=2.5cm of q0] (q1) {$s_1$};  
\path (q0)    edge [ pil, left=50]
                	node[pil,above]{$c/\ypop{Z}$} (q1);
\path (q0)    edge [ pil, right=50]
                	node[pil,below]{$t/\ypop{Z}$} (q1);
                	
\path (q0)    edge [loop above] node   {$b/\ypush{Z}$} (q0);
\path (q1)    edge [loop above] node   {$c/\ypop{Z},t/\ypop{Z}$} (q1);

\path (q1)    edge [ pil, bend left=50]
                	node[pil]{$\ytau$} (q0);

%\path (cof)    edge [pil,bend right=30]   	node[anchor=north,right]{$\ytau$} (s0);

\end{tikzpicture}
\caption{A VPTS $\yS_1$, with $L_c=\{b\}$, $L_r=\{c,t\}$, $L_i=\yemp$.}

\label{vpts1}
\end{figure}
	We have a push-transition $(s_0,b,Z,s_0)$, the pop-transitions  
	$(s_0,c,Z,s_1)$, $(s_0,t,Z,s_1)$, $(s_1,c,Z,s_1)$, $(s_1,t,Z,s_1)$, and the internal-transition $(s_1,\ytau,\yait,s_0)$. 
	Intuitively, the behavior of $\yS$ says that we can have the symbol $b$ as many as we want, while pushing the symbol $Z$ on the stack. 
	Next at least one corresponding $c$ or $t$ must occur, and then one symbol $Z$ must be popped from the stack, or several symbols $c$ and $t$ can occur while the stack memory is not empty. 
	After that, this process can be restarted moving $\yS$  back to state $s_0$, over the internal label $\ytau$. \yfim
\end{exam} 

When a sequence of events is induced over a VPTS model we can observe its behavior. 
That is, we can obtain the semantics of a VPTS by its traces, or behaviors. 
%But first we need the notion of paths in VPTS models which are just chains of elementary moves.
\begin{defi}\label{def:path}
	Let $\yS=\yvptsS$ be a VPTS and let $(p,\yal),(q,\ybe)\in\yltsconf{\yS}$ . 
	\begin{enumerate}
		\item Let $\ysi=l_1,\ldots,l_n$ be a word in $L_\ytau^\star$. 
		We say that $\ysi$ is a \emph{trace} from  $(p,\yal)$ to $(q,\ybe)$ if there are configurations  $(r_i, \yal_i)\in\yltsconf{\yS}$ , $0\leq i\leq n$, such that $\ytr{(r_{i-1},\yal_{i-1})}{l_i}{(r_i,\yal_i)}$, $1\leq i\leq n$,  with $(r_0,\yal_0)=(p,\yal)$ and $(r_n,\yal_n)=(q,\ybe)$.
		\item Let $\ysi\in L^\star$. 
		We say that $\ysi$ is an \emph{observable trace}  from $(p,\yal)$ to $(q,\ybe)$ in $\yS$ if there is a trace $\mu$ from $(p,\yal)$ to $(q,\ybe)$ in $\yS$ such that $\ysi=h_\ytau(\mu)$.
	\end{enumerate}
	In both cases we also say that the trace starts at $(p,\yal)$ and ends at $(q,\ybe)$, and we say that the configuration $(q,\ybe)$ is \emph{reachable} from $(p,\yal)$.
	We also say that $(q,\ybe)$ is \emph{reachable in} $\yS$ if it is reachable from an initial configuration of $\yS$.
\end{defi}

%\begin{exam}
%Consider the VPTS depicted at Figure~\ref{fig1}.
%The following sequences are paths starting at $(s_0,\bot)$: $\yeps$, $b$, $bc\ytau$, $b,c\ytau t \ytau bbbc\ytau b$.
%We then get, among others, the following observable paths starting at $(s_0,\bot)$: $\yeps$, $b$, $bc$, $bctbbbcb$. 
%%There are observable paths $\ysi$ from $s_2$ to $s_1$ of length $\vert\ysi\vert>n$, for all $n\geq 0$.
%\yfim
%\end{exam}

Note that moves with internal symbol $\ytau$ can occur in a trace, but when $\ytau$-labels are removed we just say that it is an observable trace. 
If $\ysi$ is a trace from $(p,\yal)$ to $(q,\ybe)$, we can also write  $\ytr{(p,\yal)}{\ysi}{(q,\ybe)}$.
%When $\vert\ysi\vert=1$ this has exactly the same meaning as indicated in Definition~\ref{def:simplemove}, so that no confusion can arise with this notation.
We may also write $\ytr{(p,\yal)}{\ysi}{}$, $\ytr{(p,\yal)}{}{(q,\ybe)}$, and $\ytr{(p,\yal)}{}{}$ when $(q,\ybe)\in \yltsconf{\yS}$, $\ysi\in L_\ytau^*$, or both, respectively, are not important. 
Also we write $\ytru{(p,\yal)}{\ysi}{\yS}{(q,\ybe)}$ to emphasize that the underlying VPTS is $\yS$. 
If $\ysi$ is an observable trace from $(p,\yal)$ to $(q,\ybe)$, we may also write $\ytrt{(p,\yal)}{\ysi}{(q,\ybe)}$, with similar shorthand notation also carrying over to the $\ytrt{}{}{}$ relation.

We call the traces of $(p,\yal)$, or the traces starting at $(p,\yal)$, for all traces starting at a given configuration $(p,\yal)$. 
Now we can define the semantics of a VPTS by all traces starting at an initial configuration. 
\begin{defi}\label{def:trace}
	Let $\yS=\yvptsS$ be a VPTS and let $(p,\yal)\in \yltsconf{\yS}$. 
	\begin{enumerate}
		\item The set of \emph{traces} of $(p,\yal)$ is   $tr(p,\yal)=\{\ysi\yst \ytr{(p,\yal)}{\ysi\,\,}{}\}$.
		The set of \emph{observable traces} of $(p,\yal)$ is $otr(p,\yal)= \{\ysi\yst \ytrt{(p,\yal)}{\ysi\,\,}{}\}$.
		\item The \emph{semantics} of $\yS$ is $\bigcup\limits_{q\in S_{in}}\!\!\!\!tr(q,\bot)$, and the \emph{observable semantics} of $\yS$ is 
$\bigcup\limits_{q\in S_{in}}\!\!\!\!otr(q,\bot)$.
	\end{enumerate}
\end{defi}
We will also indicate the semantics and, respectively, the observable semantics, of $\yS$ by $tr(\yS)$ and $otr(\yS)$.
%In any VPTS $\yS=\yvptsS$, 
If $\ytrt{(s,\yal)}{}{(p,\ybe)}$ then we also have $\ytr{(s,\yal)}{}{(p,\ybe)}$ in $\yS$, for all $(s,\yal)$, $(p,\ybe)\in \yltsconf{\yS}$.
Moreover, $otr(\yS)=h_\ytau(tr(\yS))$ and when $\yS$ has no internal transitions we already have $otr(\yS)=tr(\yS)$.
%\end{rema}
%We also note that $\ytau$-labeled self-loops can also be eliminated in a VPTS since they play no role when considering any system behaviors.  
%We formalize these observations in the following remark.
%\begin{rema}\label{rema:lte-finite}
%	Let $\yS=\yvptsS$ be a VPTS.
%	For any $s\in S$ we postulate that  $\ytr{(s_0,\bot)}{\ysi}{(s,\yal\bot)}$, for some $\yal\in\yGas$,
%	$\ysi\in L_\ytau^\star$,  and $s_0\in S_{in}$. 
%	Also, if $(s,\ytau,\yait,q)\in T$ then $s\neq q$.
%\end{rema}

We also note that $\ytau$-labeled self-loops can also be eliminated in a VPTS since they play no role when considering any system behaviors.  
So given a VPTS $\yS=\yvptsS$, for any $s\in S$ we postulate that $\ytr{(s_0,\bot)}{\ysi}{(s,\yal\bot)}$, for some $\yal\in\yGas$, $\ysi\in L_\ytau^\star$,  and $s_0\in S_{in}$. 
Further, if $(s,\ytau,\yait,q)\in T$ then $s\neq q$.
In general, $\ytau$-moves indicate that a VPTS can autonomously move along $\ytau$-transitions, without consuming any input symbol.
But in some cases such moves may be undesirable, or  simply we might want no observable behavior leading to two distinct states.
Then we need the notion of determinism in VPTS models.
\begin{defi}\label{def:vpts-determinism}
Let $\yS=\yvptsS$ be a VPTS. 
We say that $\yS$ is \emph{deterministic} if, for all   $s$, $p\in S_{in}$, $s_1$, $s_2\in S$, $\ybe_1,\ybe_2 \in \yGas$, and $\ysi\in L^\star$,  
we have that 	$\ytrut{(s,\bot)}{\ysi}{}{(s_1,\ybe_1\bot)}$ and $\ytrut{(p,\bot)}{\ysi}{}{(s_2,\ybe_2\bot)}$ imply	$s_1=s_2$ and $\ybe_1=\ybe_2$.
\end{defi}
Net we show that deterministic VPTSs do not have internal moves.
\begin{prop}\label{prop:vpts-deterministic}
	Let $\yS=\yvptsS$ be a deterministic VPTS.
	Then $\yS$ has no $\ytau$-labeled transitions.
\end{prop}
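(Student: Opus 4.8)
The plan is to argue by contradiction, using the standing assumptions recorded right after Definition~\ref{def:trace}: that every state of a VPTS is reachable from an initial configuration, and that there are no $\ytau$-labeled self-loops. So suppose, for contradiction, that $\yS$ has a $\ytau$-labeled transition $(s,\ytau,\yait,q)\in T$; then $s\neq q$ by the no-self-loop assumption. First I would invoke reachability to fix $s_0\in S_{in}$, a word $\ysi\in L_\ytau^\star$, and a stack content $\yal\in\yGas$ with $\ytr{(s_0,\bot)}{\ysi}{(s,\yal\bot)}$; putting $\rho=h_\ytau(\ysi)\in L^\star$, this gives the observable trace $\ytrt{(s_0,\bot)}{\rho}{(s,\yal\bot)}$.

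Next I would use the $\ytau$-transition itself. By Definition~\ref{def:simplemove}(3) (internal moves leave the stack unchanged), $\ytr{(s,\yal\bot)}{\ytau}{(q,\yal\bot)}$ is an elementary move, so appending it to the trace above yields $\ytr{(s_0,\bot)}{\ysi\ytau}{(q,\yal\bot)}$. Since $h_\ytau(\ysi\ytau)=h_\ytau(\ysi)=\rho$, by Definition~\ref{def:path}(2) this is again an observable trace, namely $\ytrt{(s_0,\bot)}{\rho}{(q,\yal\bot)}$. Thus the single observable word $\rho$ leads, from the same initial configuration $(s_0,\bot)$, both to $(s,\yal\bot)$ and to $(q,\yal\bot)$.

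Finally I would apply Definition~\ref{def:vpts-determinism} with the choices $s=p=s_0$, $s_1=s$, $s_2=q$, $\ybe_1=\ybe_2=\yal$, and the word $\rho$: determinism forces $s=q$ (and $\yal=\yal$), the first of which contradicts $s\neq q$. Hence no $\ytau$-labeled transition can exist in $\yS$. The argument is quite short; the only step requiring a little care is the observation that extending a trace by an internal $\ytau$-move does not change its associated observable trace, so that both target configurations are reached by the \emph{same} observable word --- which is exactly the hypothesis needed to fire determinism. I do not anticipate any real obstacle here.
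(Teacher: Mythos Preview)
Your proposal is correct and follows essentially the same argument as the paper's own proof: assume a $\ytau$-transition $(s,\ytau,\yait,q)$, use the standing no-self-loop assumption to get $s\neq q$, use reachability of $s$ to obtain an observable trace $\rho$ reaching $(s,\yal\bot)$, extend by the $\ytau$-move to reach $(q,\yal\bot)$ with the same observable word, and conclude by contradiction with Definition~\ref{def:vpts-determinism}. If anything, your write-up is slightly more careful in taking $\ysi\in L_\ytau^\star$ (as the standing reachability assumption actually provides) before passing to $\rho=h_\ytau(\ysi)$.
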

\begin{proof}
	By contradiction, assume that $(s,\ytau,\yait,q)\in T$. 
	Since there is no self-loop with label $\ytau$ we get $s\neq q$ and then $\yal\in\yGas$, $\ysi\in\ L^\star$ such that $\ytr{(s_0,\bot)}{\ysi}{(s,\yal\bot)}$, with $s_0\in S_{in}$.
	Hence, $\ytr{(s_0,\bot)}{\ysi}{(s,\yal\bot)}\ytr{}{\ytau}{(q,\yal\bot)}$. 
	Using Definition~\ref{def:trace} we get $\ytrt{(s_0,\bot)}{\mu}{(s,\yal\bot)}$ and $\ytrt{(s_0,\bot)}{\mu}{(q,\yal\bot)}$, where $\mu=h_\ytau(\ysi)$. 
	Since $s\neq q$, this contradicts Definition~\ref{def:vpts-determinism}. 
\end{proof}

%%%%%%%%%%%%%%%%% VPTS contracted

\subsection{Contracted VPTSs}\label{subsec:contracted}

The syntactic description of VPTSs can also be reduced, without loosing any semantic capability, by removing states that are not reachable from any initial state. 
Moreover, we can remove transitions in a VPTS model that cannot be exercised by some trace. 
Since every transition, except possibly for pop transitions, can always be taken, we concentrate on the pop transitions.
\begin{defi}\label{def:vpts-reduced}
	We say that a VPTS $\yS=\yvpts{S}{S_{in}}{L}{\yGa}{T}$ is \emph{contracted} if for every transition $(p,b,Z,r) \in T$ with $b\in L_r$, there are some $s_0\in S_{in}$, $\yal\in\yGa^\star$ and $\ysi\in L ^\star$ such that $\ytrt{(s_0,\bot)}{\ysi}{(p,\yal\bot)}$, where either (i) $\yal=Z\ybe$ for some $\ybe\in\yGa^\star$, or (ii) $\yal=\yeps$ and $Z=\bot$.
\end{defi}

We  can obtain contracted VPTSs using the next Proposition~\ref{prop:contracted-vpts}. 
The idea is to construct a context free grammar (CFG) based on the given VPTS, in such a way that the CFG generates strings where terminals represent VPTS transitions.
The productions of the CFG will indicate the set of transitions that can be effectively used in a trace over the VPTS.
%\\\avm{Não vai precisar saber num de transições em $\yQ$?}
\begin{prop}\label{prop:contracted-vpts} 
	Let $\yS=\yvpts{S}{S_{in}}{L}{\yGa}{T}$ be a VPTS.
	We can effectively construct a contracted VPTS $\yQ=\yvpts{Q}{Q_{in}}{L}{\yGa}{R}$ with $\vert Q\vert\leq \vert S\vert$, and such that $tr(\yS)=tr(\yQ)$. 
	% and such that $otr(\yS)=otr(\yQ)$. 
	Moreover, if $\yS$ is deterministic, then $\yQ$ is also deterministic.
\end{prop}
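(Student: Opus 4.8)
The plan is to encode, for each VPTS state, the set of stack symbols that can legitimately appear at the top of the stack when that state is reached from an initial configuration, together with the information of whether an empty stack is reachable there. The cleanest way to capture this is via a context-free grammar $G$ whose terminals correspond to transitions of $\yS$ and whose nonterminals are tagged by a pair (state, stack-symbol-or-$\bot$); a nonterminal is then ``useful'' (i.e., generates some terminal string or is nullable-reachable in the appropriate sense) exactly when the corresponding configuration pattern is actually exercised by some trace. Concretely, I would introduce nonterminals $A_{p,Z}$ meaning ``starting in state $p$ with $Z$ on top of the stack, $\yS$ can run and eventually pop that $Z$'', plus start productions that launch runs from $S_{in}$ with the empty stack. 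Push-transitions $(p,a,Z,q)\in T_c$ produce productions that open a matching pair; pop-transitions $(q,b,Z,r)\in T_r$ close them; simple- and internal-transitions produce linear productions; and the $\bot$-case of pop-transitions is handled by a separate family of ``bottom'' nonterminals that never get popped. Standard CFG ``useless symbol'' elimination (reachability plus generating-ness, both computable) then tells us precisely which transitions — in particular which pop-transitions $(p,b,Z,r)$ — satisfy the condition in Definition~\ref{def:vpts-reduced}.

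The construction of $\yQ$ itself is then immediate: take $Q$ to be the set of states of $\yS$ reachable from $S_{in}$ via some trace (this is a straightforward graph reachability over $\ytr{}{}{}$, ignoring stack contents, which is sound because pushes and $\ytau$-moves are always enabled and pop-moves on $\bot$ are always enabled), set $Q_{in}=S_{in}$, and let $R$ consist of all push-, simple-, and internal-transitions among states of $Q$, together with exactly those pop-transitions flagged as useful by the grammar analysis. Clearly $\vert Q\vert\le\vert S\vert$, and by construction $\yQ$ is contracted: every pop-transition surviving in $R$ has, by the grammar, a derivation witnessing a trace $\ytrt{(s_0,\bot)}{\ysi}{(p,\yal\bot)}$ with $\yal=Z\ybe$ or $\yal=\yeps,Z=\bot$, which is precisely Definition~\ref{def:vpts-reduced}.

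For $tr(\yS)=tr(\yQ)$: the inclusion $tr(\yQ)\subseteq tr(\yS)$ is trivial since $R\subseteq T$ and $Q\subseteq S$, so any run of $\yQ$ is verbatim a run of $\yS$. For $tr(\yS)\subseteq tr(\yQ)$ I would argue by induction on the length of a trace $\ytru{(s_0,\bot)}{\ysi}{\yS}{(p,\yal\bot)}$ that the same run exists in $\yQ$: every state visited is reachable hence in $Q$, every push/simple/internal transition used survives into $R$, and every pop-transition actually used in such a run automatically satisfies the Definition~\ref{def:vpts-reduced} condition — that is the whole point of the grammar, which is designed so that a transition is ``useful'' iff it lies on some genuine run from an initial configuration — hence it too survives into $R$. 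So the trace is reproducible in $\yQ$, and taking all $\ysi$ gives equality of the trace sets (and hence of observable semantics as well).

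Finally, determinism is preserved because $\yQ$ is obtained from $\yS$ purely by deleting states and transitions, never adding or redirecting anything; if two runs of $\yQ$ on the same $\ysi\in L^\star$ reached different configurations, those same two runs would exist in $\yS$, contradicting Definition~\ref{def:vpts-determinism} for $\yS$. I expect the main obstacle to be getting the grammar bookkeeping exactly right — in particular the treatment of the $\bot$-pop case (which must be matched by a trace with $\yal=\yeps$, not by a push/pop pair) and making sure ``useful nonterminal'' in the grammar corresponds biconditionally to ``transition exercised by a real trace of $\yS$'', since the VPA-style matched-call-return structure is what makes the grammar encoding faithful; the rest is routine induction and standard grammar-cleanup facts.
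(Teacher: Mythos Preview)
Your plan is essentially the paper's proof: encode traces of $\yS$ as leftmost derivations of a CFG whose terminals are transitions, run useful-symbol analysis to detect which pop-transitions are ever exercised, keep exactly those (together with all push/simple/internal transitions out of reachable states), and argue $tr(\yQ)\subseteq tr(\yS)$ by inclusion of transitions and $tr(\yS)\subseteq tr(\yQ)$ by noting that every transition used in a genuine trace survives. The determinism argument by ``we only deleted'' is also what the paper does.

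The one concrete gap is your nonterminal encoding. Pairs $A_{p,Z}$ meaning ``from $p$ with $Z$ on top, run until that $Z$ is popped'' are too coarse to write correct productions: for a push-transition $(p,a,W,p')\in T_c$ you would need $A_{p,Z}\to t\,A_{p',W}\,A_{?,Z}$, but the state $?$ reached after $W$ is popped depends on the particular derivation chosen for $A_{p',W}$, so no single $?$ is right and quantifying over all $?$ overgenerates (it decouples the run that pops $W$ from the state at which the continuation resumes). The paper resolves this in the standard PDA-to-CFG way by using \emph{triples} $[s,Z,q]$ meaning ``from $s$ with $Z$ on top, eventually pop $Z$ landing in $q$'', plus a placeholder $[s,\bot,-]$ for the never-popped bottom; push-transitions then give $[s,Z,p]\to t\,[p',W,r]\,[r,Z,p]$ for all $r\in S$, and the pop case (2c) splits into $Z\neq\bot$ (production $[s,Z,q]\to t$) and $Z=\bot$ (production $[s,\bot,-]\to t\,[q,\bot,-]$), exactly your ``separate bottom family''. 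A nonterminal occurring in \emph{leftmost} position in some derivation from $[s_0,\bot,-]$ is then precisely the witness that the configuration pattern $(s,Z\beta\bot)$ is reachable, which is what Definition~\ref{def:vpts-reduced} asks for. This is exactly the bookkeeping obstacle you flagged; once you pass to triples, the rest of your outline goes through verbatim.
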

%\begin{proof}
%	First we construct a context-free grammar $G$ whose terminals represent transitions of $\yS$. 
%Non-terminals are of the form $[s,Z,p]$ where $s, p\in S$ are states of $\yS$ and $Z\in\yGa_\bot$ is a stack symbol. 
%The main idea  can be grasped as follows.
%Let $t_i=[s_i,a_i,Z_i,p_i]$, $1\leq i\leq n$ be transitions of $\yS$ and let $\ysi=a_1a_2\cdots a_n$ be an input string.
%If  $G$ has a leftmost derivation
%	$$\ycfgtrtfl{[s_0,\bot,-]}{\star}{}{t_1\cdots t_n[r_1,W_1,r_2][r_2,W_2,r_3]\cdots[r_m,W_m,r_{m+1}][r_{m+1},\bot,-]}$$
%	it must be the case that $\yS$, starting at the initial configuration $(s_0,\bot)$, can move along the transitions $t_1, \ldots, t_n$., in that order, to reach the configuration $(r_1,W_1W_2\cdots W_m\bot)$.
%	That is, $\ytr{(s_0,\bot)}{\ysi}{(r_1,W_1W_2\cdots W_m\bot)}$, where $\ysi=a_1a_2\cdots a_n$.
%And vice-versa.
%%	We indicate by $\hookrightarrow$ the leftmost derivation relation induced by $G$over $(V\cup  T)^\star$.
%	We then show that leftmost derivations of $G$ faithfully simulate traces of $\yS$ and, conversely, that any trace of $\yS$ can be simulated by a leftmost derivation of $G$.
%That done, we can easily extract from $G$ a contracted VPTS $\yQ$.
%A simple argument then proves that $L(\yQ)=tr(\yS)$.
%% as well as $otr(\yQ)=otr(\yS)$. 
%
%The complete construction and detailed proofs can be found in Appendix~\vref{app:prop:contracted-vpts}. 	
%\end{proof}
\begin{proof}
	We have $\yS=\yvpts{S}{S_{in}}{L}{\yGa}{T}$.
	We first construct a context-free grammar (CFG)~\cite{sipser,hopcu-introduction-1979} $G$ whose leftmost
	derivations will simulate traces of $\yS$, and vice-versa.
	
	The terminals of $G$ are the transitions in $T$. 
	The non-terminals are of the form $[s,Z,p]$ where $s,p\in S$ are states of $\yS$ and $Z\in\yGa_\bot$ is a stack symbol. 
	If state $p$ is not important, we may write $[s,Z,-]$.
	For  the main idea, let  $t_i=[s_i,a_i,Z_i,p_i]$, $1\leq i\leq n$ be transitions of $\yS$ and let $\ysi=a_1a_2\cdots a_n$.
	Then, if in $G$ we have a leftmost derivation
	$$\ycfgtrtfl{[s_0,\bot,-]}{\star}{}{t_1\cdots t_n[r_1,W_1,r_2][r_2,W_2,r_3]\cdots[r_m,W_m,r_{m+1}][r_{m+1},\bot,-]}$$
	it must be the case that $\yS$ starting at the initial configuration $(s_0,\bot)$ can move along the transitions $t_1, \ldots, t_n$, in that order, to reach the configuration $(r_1,W_1W_2\cdots W_m\bot)$,
	that is, $\ytrtf{(s_0,\bot)}{\ysi}{\yS}{(r_1,W_1W_2\cdots W_m\bot)}$ and vice-versa.
	%Further, state $r_1$ will be the same as the target state $p_n$ of te 
	We are also guessing that, when $\yS$ removes  some $W_i$ from the stack  --- if it eventually does, --- then it will enter state $r_i$, $i=1,\ldots, m$.
	
	Formally, let $G=(V, T,P,I)$ where $V$ is the set of non-terminals, $T$ is the set of terminals $P$ is the set of productions and $I$ is the initial non-terminal of $G$. 
	%In addition, we will use an auxiliary set $NV$ of non-terminals.
	The set of terminals is the same set $T$ of transition of $\yS$.
	%The initial symbol  is $I\in V$.
	The sets $V$ and $P$ are constructed as follows, and where $NV$ is an auxiliary set. 
	Start with $V=\{I\}$, 
	$NV=\{[s_0,\bot,-]:\,\,\text{for all } s_0\in S_{in}\}$, and 
	$P=\{\ycfgtrtf{I}{}{}{[s_0,\bot,-]}:\,\,[s_0,\bot,-]\in NV\}$.
	Next, we apply the following simple algorithm:
	
	\noindent\makebox[\textwidth]{\rule[-4pt]{.4pt}{4pt}\hrulefill\rule[-4pt]{.4pt}{4pt}}\\
	\texttt{\fontsize{8pt}{9pt}\selectfont\newline
		\noindent While $NV\neq\yemp$:
		\begin{enumerate} 
			\item Let $[s,Z,p]\in NV$. Remove $[s,Z,p]$ from $NV$ and add it to $V$.
			\item
			For all $t\in  T$: 
			\begin{enumerate} 
				\item If $t=(s,a,W,q)\in T_c$, add 
				$\ycfgtrtf{[s,Z,p]}{}{}{t[q,W,r][r,Z,p]}$ to $P$, for all $r\in S$. 
				If $[q,W,r] \notin V \cup NV$, add $[q,W,r]$ to $NV$, and if $[r,Z,p] \notin V \cup NV$ add $[r,Z,p]$ to $NV$; 
				\item If $t=(s,a,\yait,q)\in T_i$,  add $\ycfgtrtf{[s,Z,p]}{}{}{t[q,Z,p]}$ to $P$. 
				If $[q,Z,p] \notin V \cup NV$, add $[q,Z,p]$ to $NV$; 
				\item If $t=(s,a,Z,q)\in T_r$, then 
				\begin{enumerate}
					\item if $Z\neq \bot$ and $p=q$ add $\ycfgtrtf{[s,Z,p]}{}{}{t}$ to $P$; and 
					\item if $Z=\bot$ and $p=-$, add $\ycfgtrtf{[s,Z,p]}{}{}{t[q,\bot,-]}$ to $P$, and if
					$[q,\bot,-] \notin V \cup NV$ add $[q,\bot,-]$ to $NV$
				\end{enumerate}
			\end{enumerate}
		\end{enumerate}
	}
	\vspace*{-3ex}\noindent\makebox[\textwidth]{\rule{.4pt}{4pt}\hrulefill\rule{.4pt}{4pt}}
	
	We indicate by $\hookrightarrow$ the leftmost derivation relation induced by $G$
	over $(V\cup  T)^\star$.
	
	%%%CLAIM 1
	The next claim says that leftmost derivations of $G$ faithfully simulate traces of $\yS$.
	\begin{description}
		\item[{\sf Claim 1:}] Let $t_i=(p_i,a_i,Z_i,q_i)$, $1\leq i\leq n$, $n\geq 0$, 
		and $\ysi=a_1a_2\cdots a_n$.
		Assume that 
		\begin{equation}\label{prop2.37c1a}
			\ycfgtrtfl{[s,\bot,-]}{n}{G}{t_1t_2\cdots t_n[u_0,W_1,u_1][u_1,W_2,u_2]\cdots [u_{m-1},W_m,u_m]}.
		\end{equation}
		Then we must have:  
		\begin{align}
			&m\geq 1,  u_m=-, W_m=\bot,\,\text{and}\,\, W_i\neq \bot, 1\leq i<m \label{prop2.37c1i2}\\
			& \ytrtf{(s,\bot)}{\ysi}{\yS}{(u_0,W_1\cdots W_m)} \label{prop2.37c1i}\\
			& \text{either (i) $n=0$ with $s=u_0$; or (ii) $n\geq 1$ with $s=p_1$, $q_n=u_0$}\label{prop2.37c1i3}.
		\end{align}
		\item[{\sf Proof:}] 
		If $n=0$ then $\ysi=\yeps$, $[s,\bot,-]=[u_0,W_1,u_1]$ and $m=1$.
		Hence, $W_m=W_1=\bot$, $u_m=u_1=-$ and $s=u_0$.
		Since we can write $\ytrtf{(s,\bot)}{\ysi}{\yS}{(u_0,W_1)}$,
		the result follows.
		
		Proceeding inductively, fix some $n\geq 1$ and assume the assertive is true for $n-1$.
		Now suppose that Eq.~(\ref{prop2.37c1a}) holds.
		Since derivations in $G$ are leftmost, and each production in $G$ has exactly one terminal symbol as the leftmost symbol in the right-hand side, using the induction hypothesis we can write
		\begin{align}
			\ycfgtrtfl{[s,\bot,-]&}{n-1}{G}{t_1\cdots t_{n-1}[u_0,W_1,u_1][u_1,W_2,u_2]\cdots [u_{m-1},W_m,u_m]}\notag\\
			&\ycfgtrtfl{}{1}{G}{t_1\cdots t_{n-1}t_n\ybe[u_1,W_2,u_2]\cdots [u_{m-1},W_m,u_m]}\label{prop2.37c1e},
		\end{align}
		where $\ycfgtrtf{[u_0,W_1,u_1]}{}{}{t_n\ybe}$ was the production used in the last step. 
		We also get 
		\begin{align}
			&m\geq 1,  u_m=-, W_m=\bot,\,\text{and}\,\, W_i\neq \bot, 1\leq i<m \label{prop2.37c1k2}\\
			& \ytrtf{(s,\bot)}{\ysi_1}{\yS}{(u_0,W_1\cdots W_m)},\,\, \ysi_1=a_1a_2\cdots a_{n-1}\label{prop2.37c1d}\\
			& \text{either (i) $n-1=0$, $s=u_0$; or (ii) $n-1\geq 1$ with $s=p_1$, $q_{n-1}=u_0$}\label{prop2.37c1g}.
		\end{align}
		
		By the construction of $G$ there are three cases.
		
		As a first alternative, assume $a_n\in L_c$, so that $t_n=[p_n,a_n,Z_n,q_n]\in T_c$.
		Then by item (2a) of the construction of $G$ we must have 
		$\ycfgtrtf{[u_0,W_1,u_1]}{}{}{t_n[q_n,Z_n,r][r,W_1,u_1]}$, where $r\in S$, and $p_n=u_0$.
		Thus, $\ybe=[q_n,Z_n,r][r,W_1,u_1]$.
		Together with Eq. (\ref{prop2.37c1e}) we get 
		\begin{align}\label{prop2.37c1f}
			\ycfgtrtfl{[s,\bot,-]&}{n}{G}{t_1\cdots t_{n-1}t_n[q_n,Z_n,r][r,W_1,u_1][u_1,W_2,u_2]\cdots [u_{m-1},W_m,u_m]}.
		\end{align}
		Now define $v_0=q_n$, $X_1=Z_n$, $v_1=r$.
		Also, let $v_{i+1}=u_i$ and $X_{i+1}=W_i$, $1\leq i\leq m$.
		We get
		$$\ycfgtrtfl{[s,\bot,-]}{n}{G}{t_1\cdots t_n[v_0,X_1,v_1][v_1,X_2,v_2]\cdots [v_{m},X_{m+1},v_{m+1}]}.$$
		%	and condition (\ref{prop2.37c1a}) is satisfied.
		Clearly, using condition (\ref{prop2.37c1k2}) we get $m+1\geq 1$, $X_{m+1}=W_m=\bot$ and $v_{m+1}=u_m=-$.
		Also $X_1=Z_n$ and since $t_n\in T_c$ we get $Z_n\neq \bot$, so that $X_1\neq \bot$.
		Using condition (\ref{prop2.37c1k2}) we get $X_{i+1}=W_{i}\neq \bot$, $1\leq i\leq m$, and we conclude that condition (\ref{prop2.37c1i2}) holds.
		Next, we examine condition (\ref{prop2.37c1i3}).
		We already have $n\geq 1$, $p_n=u_0$ and $v_0=q_n$. 
		If condition (\ref{prop2.37c1g}.i) holds, then $n=1$  and $s=u_0$, so that $s=p_n=p_1$, and condition (\ref{prop2.37c1i3}) holds.
		On the other hand, if condition (\ref{prop2.37c1g}.ii) holds, we immediately get $s=p_1$ and, because $v_0=q_n$, we conclude 
		condition (\ref{prop2.37c1i3}) holds again.
		Finally, since  $t_n=[p_n,a_n,Z_n,q_n]\in T_c$ and $p_n=u_0$, together with condition (\ref{prop2.37c1d}), and because $\ysi=\ysi_1a_n$, we can write 
		\begin{align*}
			\ytrtf{(s,\bot)&}{\ysi_1}{\yS}{(u_0,W_1\cdots W_m)=(p_n,W_1\cdots W_m)}\\
			\ytrtf{&}{a_n}{\yS}{(q_n,Z_nW_1\cdots W_m)=(v_0,X_1\cdots X_{m+1})}.
		\end{align*}
		Hence, condition	(\ref{prop2.37c1i}) is verified, and we conclude that the claim holds in this case.
		
		For the second alternative, let $a_n\in L_i\cup\{\ytau\}$, so that $t_n=[p_n,a_n,Z_n,q_n]\in T_i$. 
		The reasoning is entirely similar to the preceding case.
		
		As a final alternative, let  $a_n\in L_r$, so that $t_n=[p_n,a_n,Z_n,q_n]\in T_r$.
		Looking at item (2c) of the construction of $G$, and recalling that $\ycfgtrtf{[u_0,W_1,u_1]}{}{}{t_n\ybe}$ is the production used to get Eq. (\ref{prop2.37c1e}), we need $u_0=p_n$ and $Z_n=W_1$.
		
		The first sub-case is when we followed step (2c.i).
		We must then have $W_1\neq \bot$, $u_1=q_n$ and $\ybe=\yeps$.
		From Eq. (\ref{prop2.37c1e}) we get
		$\ycfgtrtfl{[s,\bot,-]}{n}{G}{t_1\cdots t_{n-1}t_n[u_1,W_2,u_2]\cdots [u_{m-1},W_m,u_m]}$.
		Since $W_m=\bot$ and $W_1\neq \bot$ we must have $m\geq 2$.
		Define
		$v_{i-1}=u_{i}$, $X_{i}=W_{i+1}$, $1\leq i\leq m-1$.
		Now we have 
		\begin{align}\label{prop2.37c1j}
			\ycfgtrtfl{[s,\bot,-]}{n}{G}{t_1\cdots t_n[v_0,X_1,v_1][v_1,X_2,v_2]\cdots [v_{m-2},X_{m-1},v_{m-1}]}.
		\end{align}
		Note that $m\geq 2$ implies $m-1\geq 1$, and $v_{m-1}=u_m=-$ and $X_{m-1}=W_m=\bot$. 
		Together with condition (\ref{prop2.37c1k2}) we see that
		$X_i=W_{i+1}\neq\bot$, $1\leq i\leq m-2$, so that condition (\ref{prop2.37c1i2}) holds.
		Since $m\geq 2$,  $u_0=p_n$, $W_1=Z_n$, $u_1=q_n$, and $[p_n,a_n,Z_n,q_n]\in T_r$ we get 
		$$\ytrtf{(u_0,W_1W_2\cdots W_m)}{a_n}{\yS}{(u_1,W_2\cdots W_m)=(v_0,X_1X_2\cdots X_{m-1})}.$$
		Because $\ysi=\ysi_1a_n$, we can compose with Eq. (\ref{prop2.37c1d}) and get
		$$\ytrtf{(s,\bot)}{\ysi}{\yS}{(u_1,W_2\cdots W_m)=(v_0,X_1X_2\cdots X_{m-1})},$$
		so that the condition (\ref{prop2.37c1i}) is satisfied.
		We already have $v_0=u_1$ and $u_1=q_n$, so that $v_0=q_n$.
		It is clear that $n\geq 1$, so in order to verify condition (\ref{prop2.37c1i3}) we need $s=p_1$.
		If $n=1$, condition (\ref{prop2.37c1g}) gives $s=u_0$, and since we already have $u_0=p_n=p_1$, we get $s=p_1$. 
		If $n\geq 2$ we have $n-1\geq 1$ and condition (\ref{prop2.37c1g}) immediately gives $s=p_1$, 
		concluding this sub-case.
		
		For the last sub-case, assume we followed step (2c.ii).
		In this case we get $u_0=p_n$, $W_1=Z_n=\bot$, $u_1=-$, and $\ybe=[q_n,\bot,-]$.
		Since $W_1=\bot$, condition (\ref{prop2.37c1k2}) says that $m=1$.
		We see that condition (\ref{prop2.37c1i2}) is immediately satisfied.
		From Eq. (\ref{prop2.37c1e}) we get
		$\ycfgtrtfl{[s,\bot,-]}{n}{G}{t_1\cdots t_{n-1}t_n[q_n,\bot,-]}$.
		Let $v_0=q_n$, $v_1=-$, and $X_1=\bot$.
		Thus, 
		%condition (\ref{prop2.37c1a}) reads 
		$\ycfgtrtfl{[s,\bot,-]}{n}{G}{t_1\cdots t_n[v_0,X_1,v_1]}$.
		Since $m=1$, condition (\ref{prop2.37c1d}) reduces to 
		$\ytrtf{(s,\bot)}{\ysi_1}{\yS}{(u_0,W_1)}$.
		Since 	$u_0=p_n$, $W_1=Z_n=\bot$, and $[p_n,a_n,Z_n,q_n]\in  T_r$ we get
		$$\ytrtf{(s,\bot)}{\ysi_1}{\yS}{(u_0,W_1)=(p_n,\bot)}\ytrtf{}{a_n}{\yS}{(q_n,\bot)=(v_0,\bot)},
		$$
		and we see that condition (\ref{prop2.37c1i}) holds.
		As a final step, we verify that condition (\ref{prop2.37c1i3}) also holds.
		Clearly, $n\geq 1$, and we already have $v_0=q_n$.
		Again, if $n=1$, condition (\ref{prop2.37c1g}) gives $s=u_0$, and since we already have $u_0=p_n=p_1$, we get $s=p_1$. 
		If $n\geq 2$ we have $n-1\geq 1$ and condition (\ref{prop2.37c1g}) immediately gives $s=p_1$, 
		concluding the argument for this sub-case.
		This completes the argument for the last alternative.
		
		Since we examined all three alternatives in case (2c), we conclude that the claim holds. \yfim
	\end{description}
	
	%%%%%%%%%%% CLAIM 2
	For the converse, we show that any trace of $\yS$ can be simulated by a leftmost derivation of $G$.
	\begin{description}
		\item[{\sf Claim 2:}] Let $\ytrtf{(s_0,\bot)}{\ysi}{\yS}{(u_0,W_1\cdots W_m\bot)}$ with $s_0\in S_{in}$, $m\geq 0$, 
		$\ysi=a_1a_2\cdots a_n\in L_\ytau^\star$, and $n\geq 0$. 
		Assume that the transitions used in this trace were, in order, $t_i=(p_i,a_i,Z_i,q_i)\in  T$, $1\leq i\leq n$.
		Then for all $u_i\in S$, $1\leq i\leq m$, we have
		\begin{align}
			&\ycfgtrtfl{[s_0,\bot,-]}{n}{G}{t_1t_2\cdots t_n[u_0,W_1,u_1][u_1,W_2,u_2]\cdots [u_{m-1},W_m,u_m][u_m,\bot,-]}\label{prop2.37-claim2a}\\
			%&\text{If $n=0$ then $m=1$, $s_0=p$, $W_m=\bot$, $u_m=-$}\label{prop2.37-claim2b}\\
			&\text{If $n\geq 1$ then  $s_0=p_1$, $u_0=q_n$.} \label{prop2.37-claim2c}
		\end{align}
		\item[{\sf Proof:}]
		Assume first that $n=0$, so that $\ysi=\yeps$.
		Then $\ytrtf{(s_0,\bot)}{\ysi}{\yS}{(u_0,W_1\cdots W_m\bot)}$  implies $s_0=u_0$, $m=0$.
		Since we can write $s_0\in S_{in}$, we get that $[s_0,\bot,-]$ is a non-terminal of $G$.
		Also, 
		Since $\ycfgtrtfl{[s_0,\bot,-]}{0}{G}{[u_0,\bot,-]}$, condition (\ref{prop2.37-claim2a}) holds with $m=0$.
		Condition (\ref{prop2.37-claim2c}) holds vacuously.
		
		Now assume $n\geq 1$. 
		Since $t_n$ was the transition used in the last step, we can write 
		\begin{align}
			&\ytrtf{(s_0,\bot)}{\ysi_1}{\yS}{(p_n,X_1\cdots X_k\bot)}\ytrtf{}{a_n}{\yS}{(q_n,\ybe \bot)},\label{prop2.37-claim2d}
		\end{align}
		where $\ysi_1=a_1\cdots a_{n-1}$, $k\geq 0$, and $u_0=q_n$.
		From the induction hypothesis, condition (\ref{prop2.37-claim2a}), for all $u_i\in S$, $1\leq i\leq k$, we get 
		\begin{align}
			\ycfgtrtfl{[s_0,\bot,-]}{n-1}{G}{t_1\cdots t_{n-1}[p_n,X_1,u_1]\cdots [u_{k-1},X_k,u_k][u_k,\bot,-]}. \label{prop2.37-claim2e}
		\end{align}
		If $n=1$, so that $n-1=0$, Eq. (\ref{prop2.37-claim2e}) says that 
		$s_0=p_n=p_1$.
		If $n-1>0$, the induction hypothesis in Eq. (\ref{prop2.37-claim2e})
		says that $s_0=p_1$.
		In any case, $s_0=p_1$. 
		Since we already have $u_0=q_n$, we conclude that condition (\ref{prop2.37-claim2c}) always holds.
		
		Next, we argue that either $k>0$ and $[p_n,X_1,u_1]$ was added to the set $NV$, or $k=0$ and $[p_n,\bot,-]$ was added to $NV$  during the construction of $G$.
		If $n-1=0$ then $[p_n,X_1,u_1]=[s_0,\bot,-]$. 
		Since $s_0\in S_{in}$ we know that $[p_n,\bot,-]$ was added to the initial $NV$ set. 
		If $n-1>1$ then  $[p_n,X_1,u_1]$ was on the right-hand side of a production of $G$ and so it was also added to $NV$ during the construction of $G$.
		In any case, we can assume that $[p_n,X_1,u_1]$ was added to the set $NV$. 
		
		Following step (2) in the construction of $G$, we break the argument in there cases:
		\begin{description}
			\item[$a_n\in L_c$:] 
			in this case we have $t_n=(p_n,a_n,Z_n,q_n)\in T_c$.
			Pick any $v\in S$.
			Following item (2a) in the construction of $G$,  we must have  
			$\ycfgtrtf{[p_n,X_1,u_1]}{}{}{t_n[q_n,Z_n,v][v,X_1,u_1]}$ as a production of $G$.
			Together with Eq. (\ref{prop2.37-claim2e}) we can now write 	
			\begin{align}
				\ycfgtrtfl{[s_0,\bot,-]&}{n}{G}{t_1\cdots t_{n-1}t_n[q_n,Z_n,v][v,X_1,u_1]\cdots [u_{k-1},X_k,u_k][u_k,\bot,-]
				},\notag
			\end{align}
			for all $v\in S$, and all $u_i\in S$ $i=1,\cdots, k$.
			Because $t_n$ was the last transition used in Eq. (\ref{prop2.37-claim2d}), we must have $\ybe=Z_nX_1\cdots X_k$.
			Hence, we can use Eq. (\ref{prop2.37-claim2d}) and write 
			$\ytrtf{(s_0,\bot)}{\ysi}{\yS}{(q_n,Z_nX_1\cdots X_k\bot)}$, where $\vert \ysi \vert=n$.
			We conclude that condition (\ref{prop2.37-claim2a}) holds.
			\item[$a_n\in L_i\cup\{\ytau\}$:]  we have $t_n=(p_n,a_n,Z_n,q_n)\in T_i$ and we can reason as in the preceding case.
			
			\item[$a_n\in L_r$:]	
			now we have $t_n=(p_n,a_n,Z_n,q_n)\in T_r$.
			
			First assume that $k=0$ in Eq. (\ref{prop2.37-claim2d}).
			Since $t_n$ was the last transition used in that equation, we need
			$Z_n=\bot$ and $\ybe=\yeps$.
			We now have $\ytrtf{(s_0,\bot)}{\ysi}{\yS}{(q_n,\bot)}$.
			We already have $u_0=q_n$ and, with $k=0$, Eq. (\ref{prop2.37-claim2e}) reduces to 
			$\ycfgtrtfl{[s_0,\bot,-]}{n-1}{G}{t_1\cdots t_{n-1}[p_n,\bot,-]}$.
			We now have $[p_n,\bot,-]$ as a non-terminal of $G$, and $t_n=(p_n,a_n,\bot,q_n)\in T_r$.
			By item (2c.ii) of the construction of $G$, it follows that
			$\ycfgtrtf{[p_n,\bot,-]}{}{}{t_n[q_n,\bot,-]}$ is a production of $G$.
			Composing, we get   $\ycfgtrtfl{[s_0,\bot,-]}{n}{G}{t_1\cdots t_{n}[q_n,\bot,-]}$, and condition (\ref{prop2.37-claim2a}) holds with $m=0$.
			
			Lastly, assume $k\geq 1$ in Eq. (\ref{prop2.37-claim2d}).
			From Definition~\ref{def:simplemove}, it is clear that $X_1\neq \bot$.
			Again, because $t_n$ was the last transition used in Eq. (\ref{prop2.37-claim2d}), we must have $Z_n=X_1\neq \bot$ and $\ybe=X_2\cdots X_k$, so that we now have $\ytrtf{(s_0,\bot)}{\ysi}{\yS}{(q_n,X_2\cdots X_k\bot)}$.
			Recall that $[p_n,X_1,u_1]$ is a non-terminal of $G$, and Eq. (\ref{prop2.37-claim2e}) holds for all $u_i\in S$, $1\leq i\leq k$.
			Choose $u_1=q_n$, so that we have $[p_n,X_1,u_1]$ as a non-terminal of $G$, and $t_n=(p_n,a_n,X_1,u_1)$.
			Now,  using item (2c.i) of the construction of $G$ we see that
			$\ycfgtrtf{[p_n,X_1,u_1]}{}{}{t_n}$ is a production of $G$.
			Composing with Eq. (\ref{prop2.37-claim2e}) we obtain 
			$$\ycfgtrtfl{[s_0,\bot,-]}{n}{G}{t_1\cdots t_{n}[q_n,X_2,u_2]\cdots [u_{k-1},X_k,u_k][u_k,\bot,-]},$$ 
			and condition (\ref{prop2.37-claim2a}) holds with 
			$m=k-1\geq 0$, concluding this case.
		\end{description} 
	\end{description}
	
	Now we can extract a contracted VPTS $\yQ=\yvpts{Q}{Q_{in}}{A}{\yGa}{ R}$ from the original VPTS $\yS$.
	First, we determine the set $LN$ of all non-terminals of $G$ that
	appear in the leftmost position in a derivation of $G$, that is,
	\begin{align}
		LN=\big\{[s,Z,p]\in NV\yst \ycfgtrtfl{I}{\star}{G}{t_1t_2\cdots t_n[s,Z,p]\yal, \yal \in V^\star}\big\}.\label{prop2.37n}
	\end{align}
	This can be accomplished by a backward search on the productions of $G$ to find all non-terminals $[s,Z,p]$ of $G$ that generate at least one string
	of terminals, that is, $\ycfgtrtfl{[s,Z,p]}{\star}{G}{t_1t_2\cdots t_n}$ for some $t_i\in  R$, $1\leq i\leq n$.
	Next, a forward search on the productions of $G$ collects all non-terminals in $LN$.
	
	In a second step, we collect the transitions in $R$ as follows.
	
	\noindent\makebox[\textwidth]{\rule[-4pt]{.4pt}{4pt}\hrulefill\rule[-4pt]{.4pt}{4pt}}\\
	\texttt{\fontsize{8pt}{9pt}\selectfont\newline
		Start with $ R=\yemp$. \\
		Next, for all $[s,Z,p]\in LN$ and all 
		$t=(s,a,W,q)\in  T$, add $t$ to $ R$ if:
		\begin{enumerate}
			\item[3.] $a\in L_c\cup L_i\cup\{\ytau\}$, or
			\item[4.] $a\in L_r$ and either (i) $Z=W\neq \bot$ and $p=q$; or (ii) $Z=W=\bot$ and $p=-$.   
		\end{enumerate}
	}
	\vspace*{-3ex}\noindent\makebox[\textwidth]{\rule{.4pt}{4pt}\hrulefill\rule{.4pt}{4pt}}
	
	In the resulting directed graph formed by all productions in $ R$, remove any state that is not reachable from an initial state in $S_{in}$, and name $Q$ the set of  remaining states.
	Finally, let 
	%$F_\yQ=Q\cap F_\yS$ and 
	$Q_{in}=S_{in}$.  
	
	%%%%%%%%%%%%%%%%% CLAIM 3
	We can now show that $\yQ$ is contracted and equivalent to $\yS$, as needed.
	\begin{description}
		\item[{\sf Claim 3:}] $\yQ$ is a contracted VPTS.
		\item[{\sf Proof:}]
		Let $t=(s,a,W,q)\in R_r$, so that $a\in\ L_r$.
		By step (4) above, we need $[s,W,p]$ leftmost in $G$, and either 
		(i) $W\neq \bot$ and $p=q$, or (ii) $W=\bot$ and $p=-$.
		
		Since $[s,W,p]$ is leftmost in $G$, by the form of the productions in $G$, we need $t_i=(p_i,a_i,Z_i,q_i)$, $1\leq i\leq n$, $n\geq 0$, $[u_{j-1},W_j,u_j]$, $1\leq j\leq m$, $m\geq 1$, such that
		$$\ycfgtrtfl{[s_0,\bot,-]}{n}{G}{t_1t_2\cdots t_n[u_0,W_1,u_1][u_1,W_2,u_2]\cdots [u_{m-1},W_m,u_m]},$$
		with $s_0\in Q_{in}$ and $[s,W,p]=[u_0,W_1,u_1]$.
		Hence, $s=u_0$ and $W_1=W$.
		
		Using Claim 1 we can write  $\ytrtf{(s_0,\bot)}{\ysi}{\yS}{(u_0,W_1W_2\cdots W_m)}
		=(s,WW_2\cdots W_m)$, and where $\ysi=a_1a_2\cdots a_n$.
		Let $\mu=h_{\ytau}(\ysi)$.
		We now have $\ytrut{(s_0,\bot)}{\mu}{\yS}{(s,WW_2\cdots W_m)}$.
		If $m>1$, and remembering that $t=(s,a,W,q)$, we see that condition (i) of Definition~\ref{def:vpts-reduced} is immediately satisfied.
		When $m=1$, condition \ref{prop2.37c1i2} in Claim 1 says that $W_m=\bot$, and we now have 
		$W_m=W_1=W=\bot$ and $\ytrut{(s_0,\bot)}{\mu}{\yS}{(s,\bot)}$.
		Since $t=(s,a,\bot,q)$, we see that condition (ii) of Definition~\ref{def:vpts-reduced} can also be satisfied.
	\end{description}
	
	%%%%%%%%%%%%%%%%% CLAIM 4
	\begin{description}
		\item[{\sf Claim 4:}] $tr(\yQ)=tr(\yS)$ and $otr(\yQ)=otr(\yS)$. 
		\item[{\sf Proof:}]
		We trivially obtain that  $tr(\yQ)\ysse tr(\yS)$ since $ R \subseteq  T$, $Q_{in}=S_{in}$, and $Q\ysse S$. 
		
		Now assume $\ysi\in tr(\yS)$ with  $\ysi=a_1\cdots a_n$, $n\geq 0$.
		Then we have 
		\begin{equation}\label{prop2.37c}
			\ytrtf{(s_0,\bot)}{\ysi}{\yS}{(f,W_1\cdots W_m\bot)}
		\end{equation}
		with $m\geq 0$, and $s_0\in S_{in}$. 
		Let $t_i=(p_i,a_i,Z_i,q_i)\in  T$, $1\leq i\leq n$, be the transitions used in this trace of $\yS$, and in this order.
		Using Claim 2, we get $r_i\in S$, $1\leq i\leq m$,  
		such that 
		\begin{equation}\label{prop2.37b}
			\ycfgtrtfl{[s_0,\bot,-]}{n}{G}{t_1t_2\cdots t_n[f,W_1,r_1][r_1,W_2,r_2]\cdots [r_{m-1},W_m,r_m][r_m,\bot,-]}.
		\end{equation}
		We want to show that $t_k$ is also a transition of $\yQ$, $1\leq k\leq n$.
		That is, we want to show that $t_k$ was added to $R$ according to rules (3) and (4) above.
		Fix some $k$, $1\leq k\leq n$, with $t_k=(p_k,a_k,Z_k,q_k)$.
		Since the derivation in Eq.~(\ref{prop2.37b}) is leftmost, and $G$ has exactly one terminal in the right-hand side of any production, we must have
		$$\ycfgtrtfl{[s_0,\bot,-]}{k-1}{G}{t_1\cdots t_{k-1}[u_1,X_1,u_2][u_2,X_2,u_3]\cdots [u_\ell,X_\ell,-]},$$
		and the next production used in Eq.~(\ref{prop2.37b}) was $\ycfgtrtf{[u_1,X_1,u_2]}{}{}{t_k\ybe}$, for some $\ybe\in V^\star$.
		From the construction of $G$ we always have $u_1=p_k$, so that  $[u_1,X_1,u_2]=[p_k,X_1,u_2]$ is leftmost in $G$.
		From Eq. (\ref{prop2.37n}) we get $[p_k,X_1,u_2]\in LN$.
		We now look at rules (3) and (4).
		If $a_k\in L_c\cup L_i\cup\{\ytau\}$, rule (3) gives $t_k\in  R$.
		Now $a_k\in L_r$ and recall that $\ycfgtrtf{[u_1,X_1,u_2]}{}{}{t_k\ybe}$ is a production of $G$.
		By rule (2c) we  have either: 
		\begin{itemize}
			\item[(i)]   $u_2=q_k$, $X_1=Z_k\neq \bot$ using rule (2c.i).
			Then, $[u_1,X_1,u_2]=[p_k,Z_k,q_k]\in LN$.
			Since $t_k=(p_k,a_k,Z_k,q_k)$ and  $a_k\in L_r$, rule (4.i) says that $t_k\in  R$. 
			\item[(ii)] $u_2=-$, $X_1=Z_k=\bot$ using rule (2c.ii).
			Now $[u_1,X_1,u_2]=[p_k,Z_k,-]\in LN$.
			Again, $t_k=(p_k,a_k,Z_k,q_k)$ and		$a_k\in L_r$ give $t_k\in R$ using rule (4.ii).
		\end{itemize}
		Now, we have  $t_k\in R$ for all $1\leq k\leq n$, $s_0\in S_{in}$ and $Q_{in}=S_{in}$.
		Then, as in Eq.~(\ref{prop2.37c}) we can now write
		$\ytrtf{(s_0,\bot)}{\ysi}{\yQ}{(f,W_1\cdots W_m\bot)}$.
		Thus $\ysi\in tr(\yQ)$, and we now have $tr(\yS)\ysse tr(\yQ)$. 
	\end{description}
	
	Now, if $tr(\yS)=tr(\yQ)$ then $otr(\yS)=otr(\yQ)$ follows easily from 
	Definitions~\ref{def:path} and \ref{def:trace}.
	
	Finally, assume that $\yS$ is deterministic. 
	Since $\yQ$ is constructed by removing transitions from $\yS$, it is clear from Definition~\ref{def:vpts-determinism} that $\yQ$ is also deterministic.
	This completes the proof.
\end{proof}

%%%%%%%%%%%%%%   VPA x VPTS %%%%%%%%%%

\subsection{Relating VPTS and VPA models}\label{subsec:vpts-vpa}

Now we show that any VPTS $\yS$  gives rise to an associated VPA $\yS_\yA$ in a natural way.
We convert any $\ytau$-transition of $\yS$  into a $\yeps$-transition of $\yS_\yA$.
The set of final states of $\yS_\yA$ is just the set of all locations of $\yS$.
Conversely, we can associate a VPA to any given VPTS, provided that all states in the given VPA are final states.
\begin{defi}\label{def:plts-pda}
	We have the following two associations:
	\begin{enumerate}
		\item Let $\yS=\yvptsS$ be a VPTS. 
		The VPA \emph{induced} by $\yS$ is  $\yA_\yS=\yvpa{S}{S_{in}}{L}{\yGa}{\rho}{S}$ where, for all $p$, $q\in S$,  $Z\in \yGa$,  $\ell\in L$, we have: 
		\begin{enumerate}
			\item $(p,\ell,Z,q)\in \rho$ if and only if $(p,\ell,Z,q)\in T$;
			\item $(p,\yeps,\yait,q)\in \rho$ if and only if $(p,\ytau,\yait,q)\in T$.
		\end{enumerate}
		\item Let $\yA=\yvpa{S}{S_{in}}{L}{\yGa}{\rho}{S}$ be a VPA.
		The VPTS \emph{induced} by $\yA$ is $\yS_\yA=\yvpts{S}{S_{in}}{L}{\yGa}{T}$ where:
		\begin{enumerate}
			\item $(p,\ell,Z,q)\in T$ if and only if $(p,\ell,Z,q)\in \rho$;
			\item $(p,\ytau,\yait,q)\in T$ if and only if $(p,\yeps,\yait,q)\in \rho$.
		\end{enumerate}
	\end{enumerate}
\end{defi}
The relationship between the associated models is given by the following result.
%Note that in both cases we need that all states in $\yA$ are final states.
\begin{prop}	\label{prop:lang-vpa-vlpts}
	$\yS=\yvptsS$ be a VPTS and  $\yA=\yvpa{S}{S_{in}}{L}{\yGa}{\rho}{S}$ a VPA.
	Assume that either $\yA$ is the VPA induced $\yS$, or $\yS$ is the VPTS induced by $\yA$.
	Then, the following are equivalent, where $\ysi\in L^\star$, $\mu\in (L_\ytau)^\star$,  $s,p\in S$, $\yal,\ybe \in\yGa^\star$, $n\geq 0$:
	\begin{enumerate}
		\item $\ytrut{(s,\yal\bot)}{\ysi}{\yS}{(p,\ybe\bot)}$
		\item  $\ytrtf{(s,\yal\bot)}{\mu}{\yS}{(p,\ybe\bot)}$, $h_\ytau(\mu)=\ysi$
		\item $\ypdatrtf{(s,\ysi,\yal\bot)}{n}{\yA}{(p,\yeps,\ybe\bot)}$,  $n=\vert\mu\vert$
	\end{enumerate}
\end{prop}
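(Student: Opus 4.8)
The plan is to prove the cyclic chain of implications $(1)\Rightarrow(2)\Rightarrow(3)\Rightarrow(1)$, which establishes the mutual equivalence. The whole argument rests on the fact that the induced-model construction (Definition~\ref{def:plts-pda}) sets up a bijection between transitions of $\yS$ and of $\yA$: an ordinary transition $(p,\ell,Z,q)$ of $\yS$ corresponds to the identical transition of $\yA$, and a $\ytau$-transition $(p,\ytau,\yait,q)$ of $\yS$ corresponds to the $\yeps$-transition $(p,\yeps,\yait,q)$ of $\yA$. Since the two hypotheses (``$\yA$ induced by $\yS$'' and ``$\yS$ induced by $\yA$'') yield exactly the same transition correspondence, I would handle both cases uniformly once this correspondence is fixed.

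For $(1)\Rightarrow(2)$: by Definition~\ref{def:path}, an observable trace $\ysi$ from $(s,\yal\bot)$ to $(p,\ybe\bot)$ is by definition $h_\ytau(\mu)$ for some (non-observable) trace $\mu$ from $(s,\yal\bot)$ to $(p,\ybe\bot)$; this is immediate and essentially a restatement of the definition. For $(2)\Rightarrow(3)$, I would induct on $n=\vert\mu\vert$. The base case $n=0$ forces $\mu=\yeps$, $s=p$, $\yal=\ybe$ (since an empty trace cannot change the configuration), $\ysi=h_\ytau(\yeps)=\yeps$, and $\ypdatrtf{(s,\yeps,\yal\bot)}{0}{\yA}{(s,\yeps,\yal\bot)}$ holds trivially. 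For the inductive step, write $\mu=\ell\mu'$ with the first elementary move $\ytr{(s,\yal\bot)}{\ell}{(r,\yde\bot)}$ using a transition $(s,\ell,Z,r)\in T$. Comparing Definition~\ref{def:simplemove} (the three cases for push/pop/simple-or-internal on the VPTS side) with Definition~\ref{def:fsa-move} (the three corresponding cases on the VPA side), the effect on the stack is identical in each case: push prepends $Z$, pop removes $Z$ (or leaves $\bot$ when $Z=\bot$), and simple/internal leaves the stack unchanged. If $\ell\neq\ytau$, the transition corresponds to $(s,\ell,Z,r)\in\rho$ and contributes one input-consuming step $\ypdatrtf{(s,\ell\ysi',\yal\bot)}{1}{\yA}{(r,\ysi',\yde\bot)}$ where $\ysi=\ell\ysi'$; if $\ell=\ytau$, it corresponds to the $\yeps$-move $(s,\yeps,\yait,r)\in\rho$, contributing $\ypdatrtf{(s,\ysi,\yal\bot)}{1}{\yA}{(r,\ysi,\yde\bot)}$ with no input consumed, and here $h_\ytau(\mu)=h_\ytau(\mu')=\ysi$. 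In either case the induction hypothesis applied to $\mu'$ (which is shorter, has $h_\ytau(\mu')$ equal to either $\ysi'$ or $\ysi$ respectively) gives the remaining run of $\yA$, and composing yields $\ypdatrtf{(s,\ysi,\yal\bot)}{n}{\yA}{(p,\yeps,\ybe\bot)}$ with $n=\vert\mu\vert$.

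For $(3)\Rightarrow(1)$, I would again induct on $n$, essentially reversing the previous argument. The base case $n=0$ gives $\ysi=\yeps$, $s=p$, $\yal=\ybe$, and $\ytrut{(s,\yal\bot)}{\yeps}{\yS}{(s,\yal\bot)}$ holds. For $n\geq 1$, split off the first step of the VPA run, $\ypdatrtf{(s,\ysi,\yal\bot)}{1}{\yA}{(r,\ysi',\yde\bot)}\ypdatrtf{}{n-1}{\yA}{(p,\yeps,\ybe\bot)}$, using a transition of $\rho$; translate it back to the corresponding transition of $T$ (an $\yeps$-move becomes a $\ytau$-transition, an ordinary move stays ordinary, with the same stack effect by matching Definitions~\ref{def:fsa-move} and~\ref{def:simplemove}); this yields an elementary move $\ytr{(s,\yal\bot)}{\ell}{(r,\yde\bot)}$ in $\yS$ with $h_\ytau(\ell)$ being either the consumed input symbol or $\yeps$. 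Apply the induction hypothesis to the remaining $(n-1)$-step run to get an observable trace of $\yS$ from $(r,\yde\bot)$ to $(p,\ybe\bot)$, then prepend the first move and take $h_\ytau$ of the resulting trace to get the observable trace $\ysi$.

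The main obstacle is bookkeeping rather than conceptual: one must be careful, in the pop case, to handle the $\bot$-versus-non-$\bot$ distinction consistently between Definition~\ref{def:simplemove}(2)(i)--(ii) and Definition~\ref{def:fsa-move}(2), and to verify that the configurations appearing in the VPTS, which have stacks of the form $\yal\bot$ with $\bot$ at the bottom, match the convention used in Definition~\ref{def:fsa-move} where pop-on-$\bot$ leaves $\bot$ in place. Also I should make sure the equivalence is stated and used as a genuine three-way equivalence: proving the cycle $(1)\Rightarrow(2)\Rightarrow(3)\Rightarrow(1)$ suffices, and the existential quantifier on $\mu$ in (2) and on $n$ in (3) (with the side condition $n=\vert\mu\vert$) is threaded correctly through the inductions.
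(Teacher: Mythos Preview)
Your proposal is correct and follows essentially the same approach as the paper: the paper observes that (1)$\Leftrightarrow$(2) is immediate from Definition~\ref{def:path}, and then establishes (2)$\Leftrightarrow$(3) by the very inductions on $\vert\mu\vert$ and on $n$ that you sketch (and, like you, it notes that the two directions of Definition~\ref{def:plts-pda} are handled uniformly). Your cyclic organization $(1)\Rightarrow(2)\Rightarrow(3)\Rightarrow(1)$ versus the paper's pairwise $(1)\Leftrightarrow(2)$, $(2)\Leftrightarrow(3)$ is an inconsequential difference, and your careful treatment of the pop-on-$\bot$ sub-case and of the existential $\mu$/$n$ bookkeeping actually spells out details the paper leaves as ``an easy induction.''
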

\begin{proof}
	From Definition~\ref{def:path} we see that (1) and (2) are equivalent, for any VPTS $\yS$.
	
	If $\yA$ is the VPA induced by $\yS$ then, using Definition~\ref{def:plts-pda}(1), an easy induction on $\vert\mu\vert\geq 0$ shows that if (2) holds then (3) also holds with $n=\vert\mu\vert$.
	Likewise, an easy induction on $n\geq 0$ shows that if (3) holds, then we get some $\mu$ satisfying (2) and with $\vert\mu\vert=n$.	
	If $\yS$ is the VPTS induced by the VPA $\yA$ then the reasoning is very similar, now using  Definition~\ref{def:plts-pda}(2). 
\end{proof}

The observable semantics of $\mathcal{S}$ is just the language accepted by $A_\mathcal{S}$.
We note this as the next proposition.
\begin{prop}	\label{prop:plts-pda}
	Let $\yS$ be a VPTS and  $\yA_{\yS}$ the VPA induced by $\yS$.
	Then $otr(\yS)=L(\yA_{\yS})$ and, further, if $\yS$ is deterministic and contracted
	then $\yA_\yS$ is also deterministic.
	Conversely, let $\yA$ be a VPA and $\yS_\yA$ the VPTS induced by $\yA$.
	Then $L(\yA)=otr(\yS_\yA)$
	and, also, 
	if $\yA$ is deterministic and has no $\yeps$-moves, then $\yS_\yA$ is deterministic.
\end{prop}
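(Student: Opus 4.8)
The statement has two symmetric halves, and each half splits into a language-equality claim and a determinism-preservation claim. I would handle the first half in detail and then remark that the second half is entirely analogous, now invoking Definition~\ref{def:plts-pda}(2) instead of (1).

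For the equality $otr(\yS)=L(\yA_\yS)$: by Definition~\ref{def:trace}, $\ysi\in otr(\yS)$ iff there is some $s_0\in S_{in}$ and some $\mu\in L_\ytau^\star$ with $h_\ytau(\mu)=\ysi$ and $\ytrtf{(s_0,\bot)}{\mu}{\yS}{(p,\ybe\bot)}$ for some $p\in S$, $\ybe\in\yGas$. Apply Proposition~\ref{prop:lang-vpa-vlpts} with $\yal=\yeps$: the trace condition (2) there is equivalent to condition (3), namely $\ypdatrtf{(s_0,\ysi,\bot)}{n}{\yA_\yS}{(p,\yeps,\ybe\bot)}$ for some $n\geq 0$. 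Since every state of $\yA_\yS$ is a final state and $S_{in}$ is the initial-state set of $\yA_\yS$, this last condition says exactly $\ysi\in L(\yA_\yS)$ by Definition~\ref{def:fsa-move}. Hence $otr(\yS)=L(\yA_\yS)$.

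For determinism: assume $\yS$ is deterministic and contracted, and suppose for contradiction that $\yA_\yS$ is not deterministic. First note $|S_{in}|\le 1$: Proposition~\ref{prop:vpts-deterministic} gives that $\yS$ has no $\ytau$-transitions, so $\yA_\yS$ has no $\yeps$-moves, hence $\yA_\yS$ cannot violate condition~(3) of Definition~\ref{def:vpa-determinism}, and by Definition~\ref{def:vpts-determinism} two distinct initial states would immediately give two reachable configurations with the empty observable trace. So a violation must come from two transitions $(p,x,Z_1,q_1),(p,x,Z_2,q_2)\in\rho$ with $x\ne\yeps$ that violate condition~(1) or~(2) of Definition~\ref{def:vpa-determinism}; by Definition~\ref{def:plts-pda}(1) these are transitions of $T$. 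Since $\yS$ is contracted, the source state $p$ is reachable, i.e. there are $s_0\in S_{in}$, $\ysi\in L^\star$, $\yal\in\yGas$ with $\ytrt{(s_0,\bot)}{\ysi}{(p,\yal\bot)}$; moreover, for the two offending transitions to be ``live'' in a genuinely nondeterministic way over the return case we use the contracted hypothesis to select a reachable stack content on which both pop transitions apply. Then extending by $x$ along each transition yields $\ytrt{(s_0,\bot)}{\ysi x}{(q_1,\ybe_1\bot)}$ and $\ytrt{(s_0,\bot)}{\ysi x}{(q_2,\ybe_2\bot)}$ with $(q_1,\ybe_1)\ne(q_2,\ybe_2)$ — here one checks in each of the push / simple / pop subcases that distinct $(Z_i,q_i)$, respectively distinct $q_i$, force distinct resulting configurations — contradicting Definition~\ref{def:vpts-determinism}. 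Hence $\yA_\yS$ is deterministic.

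The converse direction is symmetric. Given a VPA $\yA=\yvpa{S}{S_{in}}{L}{\yGa}{\rho}{S}$, Proposition~\ref{prop:lang-vpa-vlpts} (applied now with $\yS_\yA$ the VPTS induced by $\yA$) gives $L(\yA)=otr(\yS_\yA)$ by the same equivalence of conditions (1)–(3), since $S$ is the full final-state set of $\yA$. For determinism, if $\yA$ is deterministic with no $\yeps$-moves, then by Definition~\ref{def:plts-pda}(2) $\yS_\yA$ has no $\ytau$-transitions, so an observable trace is just an ordinary trace; a direct induction on trace length, using uniqueness of the applicable transition out of each state-on-a-stack (conditions (1),(2) of Definition~\ref{def:vpa-determinism}) and $|S_{in}|\le1$, shows that two traces with the same label word reach the same configuration, which is Definition~\ref{def:vpts-determinism}. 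The main obstacle is the careful bookkeeping in the contracted-to-deterministic argument for $\yA_\yS$: one must show the nondeterminism witnessed by two $T$-transitions out of $p$ genuinely transfers to two distinct \emph{reachable} configurations of $\yS$, which is exactly where the ``contracted'' hypothesis (guaranteeing a reachable stack content compatible with a pop transition) is indispensable.
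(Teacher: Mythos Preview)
Your plan is correct and matches the paper's argument closely: both obtain the language equalities directly from Proposition~\ref{prop:lang-vpa-vlpts}, and both establish determinism of $\yA_\yS$ by a case split on the three clauses of Definition~\ref{def:vpa-determinism}, reaching $p$ by an observable trace and then extending by one step along each offending transition to contradict Definition~\ref{def:vpts-determinism}, with the contracted hypothesis supplying a compatible top-of-stack in the pop case.

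Two small points are worth tightening. First, when you write ``Since $\yS$ is contracted, the source state $p$ is reachable'' for the push/simple cases, note that Definition~\ref{def:vpts-reduced} speaks only about pop transitions and does not by itself guarantee reachability of an arbitrary state; the paper instead relies on the standing postulate (stated just before Definition~\ref{def:vpts-determinism}) that every state of a VPTS is reachable from some initial configuration. Second, for the converse determinism direction the paper does not redo an induction: it translates the two observable traces over $\ysi$ into VPA runs of equal length $n=\vert\ysi\vert$ via Proposition~\ref{prop:lang-vpa-vlpts} (using that $\yS_\yA$ has no $\ytau$-moves) and then invokes Proposition~\ref{prop:vpa-determ} once to conclude $(s_1,\ybe_1)=(s_2,\ybe_2)$. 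Your direct induction would of course reprove exactly that proposition.
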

\begin{proof}
	Let $\yS=\yvptsS$ and  $\yA_\yS=\yvpa{S}{S_{in}}{L}{\yGa}{\rho}{S}$.
	Then, $otr(\yS)=L(\yA_\yS)$ follows directly from Proposition~\ref{prop:lang-vpa-vlpts}.
		
	Now assume that $\yS$ is deterministic and contracted. 
	We show that $\yA_\yS$  satisfies Definition~\ref{def:vpa-determinism}.
	First, take $p$, $q\in S_{in}$. 
	Then, we have $\ytrt{(p,\bot)}{\yeps}{(p,\bot)}$ and $\ytrt{(q,\bot)}{\yeps}{(q,\bot)}$ in $\yS$.
	Since $\yS$ is deterministic, Definition~\ref{def:vpts-determinism} gives $p=q$.
	This shows that $\vert S_{in}\vert\leq 1$, as required by Definition~\ref{def:vpa-determinism}.
	
	Next, we look at the three assertions at Definition~\ref{def:vpa-determinism}.
	First, let $(p,\ell,Z_i,q_i)\in\rho_c$,  $i=1,2$.
	Since $\ell\in L_c$, Definition~\ref{def:plts-pda} gives $(p,\ell,Z_i,q_i)\in T$ for $i=1,2$. 
	Since there is no self-loop with label $\ytau$ we obtain some $\ysi\in L^\star$ and $s_0\in S_{in}$ such that $\ytrut{(s_0,\bot)}{\ysi}{\yS}{(p,\yal\bot)}$ for some $\yal\in\yGa^\star$. 
	Hence,
	$\ytrut{(s_0,\bot)}{\ysi\ell}{\yS}{(q_i,Z_i\yal\bot)}$ for $i=1,2$. 
	Since $\yS$ is deterministic, Definition~\ref{def:vpts-determinism} gives $q_1=q_2$ and $Z_1=Z_2$.
	We conclude that $A_\yS$ satisfies condition 1 at Definition~\ref{def:vpa-determinism}. 	
	Now suppose we have $(p,\ell,Z,q_i)\in\rho_r\cup \rho_i$, $i=1,2$.
	Since $\ell\in L_r\cup L_i$, Definition~\ref{def:plts-pda} gives $(p,\ell,Z,q_i)\in T$, $i=1,2$. 
	When $\ell\in L_i$ we get $Z=\yait$, and proceed exactly as in the first case.
	%From Remark~\ref{rema:lte-finite} we obtain some $\ysi\in L^\star$ and $s_0\in S_{in}$ such that $\ytrut{(s_0,\bot)}{\ysi}{\yS}{(p,\yal\bot)}$ for %some $\yal\in\yGa^\star$.
 %Hence, $\ytrut{(s_0,\bot)}{\ysi\ell}{\yS}{(q_i,\yal\bot)}$, $i=1,2$. 
	%Since $\yS$ is deterministic, Definition~\ref{def:vpts-determinism} gives $q_1=q_2$.
	When $\ell\in L_r$, since $\yS$ is contracted, Proposition~\ref{prop:contracted-vpts}  gives  $s_0\in S_{in}$, $\ysi\in L^\star$ and $\yal\in\yGa^\star$ such that  $\ytrut{(s_0,\bot)}{\ysi}{\yS}{(p,\yal\bot)}$.
	Further, $\yal= Z\ybe$ when $Z\neq \bot$, or $\yal=\yeps$ when $Z=\bot$.
	In the first case, $\ytrut{(s_0,\bot)}{\ysi\ell}{\yS}{(q_i,\ybe\bot)}$ and, in the second case,  $\ytrut{(s_0,\bot)}{\ysi\ell}{\yS}{(q_i,\bot)}$, $i=1,2$.
	Thus, because $\yS$ is deterministic, Definition~\ref{def:vpts-determinism} forces
	gives $q_1=q_2$.
	We conclude that $A_\yS$ satisfies condition 2 at Definition~\ref{def:vpa-determinism}. 
	Finally suppose we have $(p,x,Z,q_1)\in\rho$ and $(p,\yeps,Z,q_2)\in\rho$ with $x\neq \yeps$.
	Then, Definition~\ref{def:plts-pda} gives $(p,\ytau,\yait,q_2)\in T$
	and $(p,x,Z,q_1)\in T$. 
	Since $\yS$ is deterministic, Proposition~\ref{prop:vpts-deterministic} says that it has no $\ytau$-transitions, and we reached a contradiction.
	Hence,  $A_\yS$ satisfies condition 3 at Definition~\ref{def:vpa-determinism}, and the proof is complete. %\yfim
	
	For the converse, let $\yA=\yvpa{S}{S_{in}}{L}{\yGa}{\rho}{S}$ and $\yS_\yA=\yvptsS$. 
	%That $otr(\yA)=otr(\yS)$ follows by similar inductions, as was done for the first assertion.
	Again, $otr(\yA)=otr(\yS_\yA)$ is immediate from Proposition~\ref{prop:lang-vpa-vlpts}.
	Now assume that $\yA$ is deterministic and has no $\yeps$-moves.
	Clearly, from the construction, we know that $\yS_\yA$ has no $\ytau$-moves. 
	Let $\ytrut{(s,\bot)}{\ysi}{\yS_\yA}{(s_1,\ybe_1\bot)}$ and 
	$\ytrut{(p,\bot)}{\ysi}{\yS_\yA}{(s_2,\ybe_2\bot)}$ with $p,s\in S_{in}$ and $\ysi\in L^\star$.
	Since $\yA$ is deterministic, we get $\vert S_{in}\vert\leq 1$, so that $p=s$.
	Using Definition~\ref{def:path}, and since $\yS_\yA$ has no $\ytau$-moves, we get 
	$\ytrtf{(s,\bot)}{\ysi}{\yS_\yA}{(s_1,\ybe_1\bot)}$ and 
	$\ytrtf{(s,\bot)}{\ysi}{\yS_\yA}{(s_2,\ybe_2\bot)}$.
	From Proposition~\ref{prop:lang-vpa-vlpts} we get 
	$\ypdatrtf{(s,\ysi,\bot)}{n}{\yA}{(s_1,\yeps,\ybe_1\bot)}$ and 
	$\ypdatrtf{(s,\ysi,\bot)}{n}{\yA}{(s_2,\yeps,\ybe_2\bot)}$, where $n=\vert \ysi\vert$.
	Now, since $\yA$ is deterministic, we can use Proposition~\ref{prop:vpa-determ} and conclude that $s_1=s_2$ and $\ybe_1=\ybe_2$.
	This shows that $\yS_\yA$ satisfies  Definition~\ref{def:vpts-determinism}, completing the proof.
\end{proof}
Lemma~\ref{prop:plts-pda} also says that $otr(\yS)$ is a visibly pushdown language, and that
for any given VPTS $\yS$, we can easily construct a VPA $\yA$ with $L(\yA)=otr(\yS)$.

\subsection{Input Output Pushdown Transition Systems}\label{subsec:ioplts}

The VPTS formalism can be used to model systems with a potentially infinite memory and with a capacity to interact asynchronously with an external environment.
In such situations, we may want to treat some action labels as symbols that the VPTS ``receives'' from the environment, and some other action labels as symbols that the VPTS ``sends back'' to the environment. 
The next VPTS variation differentiates between input action symbols and output action symbols.
\begin{defi}\label{def:iolts}
	An Input/Output Visibly Pushdown Transition System (IOVPTS) over an alphabet $L$ is a tuple $\yI=\yiovptsS$, where
	\begin{itemize}
		\item $L_I$ is a finite set of \emph{input actions}, or \emph{input labels}; 
		\item $L_U$ is a finite set of \emph{output actions}, or \emph{output labels}; 
		\item $L_I\cap L_U=\emptyset$, and $L=L_I\cup L_U $ is the set of \emph{actions} or \emph{labels}; and
		\item $\yvptsS$ is an  \emph{underlying VPTS} over $L$, which is associated to $\yI$.
	\end{itemize}
\end{defi}

We 
%indicate the underlying VPTS of an IOVPTS $\yI$ by $\yV_{\yI}$, and we 
denote the class of all IOVPTSs with input alphabet $L_I$ and output alphabet $L_U$ by $\yiovp{L_I}{L_U}$. 
Notice that in any reference to an IOVPTS model we can substitute it by its underlying VPTS. 
So if $\yS$ is an IOVPTS with $\yQ$ as its underlying VPTS, then the VPA induced by $\yS$ is the VPA induced by $\yQ$, according to Definition~\ref{def:plts-pda}. 
Likewise for any formal assertion involving IOVPTS models.

%\begin{rema}\label{rem:iovpts-vpts}
%	In order to keep the number of definitions under control, we agree that in any reference to a notion based on IOVPTSs, and  that has not been explicitly defined at some point, we substitute the IOVPTS model by its underlying VPTS.
%	As a case in point, if $\yS$ is any IOVPTS with $\yV$ as its underlying VPTS, then the VPA induced by $\yS$ is simply the VPA induced by $\yV$, according to Definition~\ref{def:plts-pda}. 
%	Likewise for any formal assertion involving IOVPTS models.
%	For example, using Proposition~\ref{prop:plts-pda} we can just say that $otr(\yS)=L(\yA_\yS)$ without any explicit mention to the underlying VPTS $\yV$.
%\end{rema}

Next we define the semantics of an IOVPTS as the set of its observable traces, that is, observable traces of its underlying VPTS.  
\begin{defi}\label{def:iolts-semantics}
	Let $\yI=\yiovptsS$ be an IOVPTS.
	The \emph{semantics} of $\yI$ is the set $otr(\yI)=otr(\yS_\yI)$,
	where $\yS_\yI$ is the underlying VPTS associated to $\yI$.
\end{defi}
Also, when referring an IOVPTS $\yI$, the notation $\underset{\yI}{\rightarrow}$ and $\underset{\yI}{\Rightarrow}$ are to be understood as $\underset{\yS}{\rightarrow}$ and $\underset{\yS}{\Rightarrow}$, respectively, where $\yS$ is the underlying VPTS associated to $\yI$. 

%\begin{exam}
%Figure~\ref{fig:iovpts} represents an IOVPTS. 
%\input{figs/iovpts.tex}
%The model represents the lightening of a bulb.
%It has the two input actions $s$ and $d$ representing a single or a double click on the lamp switch, respectively.
%It has three output labels, $dim$, $bri$, and \emph{off}, informing the environment that the illumination turned to dim, bright or off, respectively. 
%The initial state is $s_0$.
%Following the rightmost circuit, starting at $s_0$, when  the user --- that is, the environment --- hits a single click on the switch the system moves from state $s_0$ to state $sd$ on the input action $s?$. This is represented by the transition $\ytr{s_0}{s?}{sd}$. Then, following the transition $\ytr{sd}{dim!}{dm}$, the system reaches the state $dm$ and outputs the label $dim!$, informing the user that the illumination is now dimmed. At state $dm$ if the user double clicks at the switch, the system responds with $off!$ and moves back to state $s_0$. This corresponds to the transitions $\ytr{dm}{d?}{dd}$ and $\ytr{dd}{\text{\emph{off!}}}{s_0}$.
%But, if at state $dm$ the user clicks only once then the transitions $\ytr{dm}{s?}{ds}$ and $\ytr{ds}{bri!}{br}$ are traversed, and the system moves to state $br$ issuing the output $bri!$ to signal that  the lamp is now in the bright mode.\yfim
%\end{exam} 
\begin{exam}
	Recall Example~\ref{example-vpts1}. 
	Now, Figure~\ref{vpts1} represents an IOVPTS that describes a machine that dispenses drinks.
	Again we have $L_c=\{b\}$, $L_r=\{c,t\}$ and $L_i=\yemp$, with the initial state $s_0$, but here we also have $L_I=\{b\}$ and $L_U=\{c,t\}$. 
%	From the context we can see that $L_c=\{b\}$, $L_r=\{c,t\}$ and $L_i=\yemp$.
	%\input{figs/iovpts1.tex}
	%The start state is $s_0$.
	In this context, symbol $b$ stands for a button where an user can press when asking for a drink, a cup of coffee or a cup of tea, represented by the labels $c$ and $t$, respectively.
	The user can hit the $b$ button while the machine stays at state $s_0$.
	Each time the $b$ button is activated, the machine pushes a symbol $Z$ on the stack, so that the stack is used to count how many times the $b$ button was hit by the user.

	At any instant, after the user has activated the $b$ button at least once, the machine moves to state $s_1$ and starts dispensing either coffee or tea, indicated by the $c$ and $t$ labels.
	It decrements the stack each time a drink is dispensed, so that it will never deliver more drinks than the user asked for.
	
	A move back to state $s_0$, over the internal label $\ytau$, may interrupt the delivery of drinks, so that the user can, possibly, receive less drinks than originally asked for.
	In this case, when the next user will operate  the machine with  residual number of $Z$ symbols in the stack he could, eventually collect more drinks than asked for.
	But the machine will never dispense more drinks than the total number of solicitations. 
	An alternative could be to use one more state $s_2$ to interrupt the transition from $s_1$ to $s_0$ and install a self-loop at $s_2$ that empties the stack.  
	%A more complex and complete version of a drink dispensing machine is illustrated in Subsection~\ref{subsec:drink}. 
	\yfim
\end{exam}

% !TeX spellcheck=en_US
\section{Conformance Checking and Visibly Pushdown Languages}\label{sec:cfl-suites}

In this section we define a more general conformance relation based on  Visibly Pushdown Languages~\cite{alurm-visibly-2004}, a proper subset of the more general class of context-free languages~\cite{hopcu-introduction-1979}, but a proper superset of the regular languages.
Next we define fault models for VPTSs using this more general relation and 
study the notion of test suite completeness under this setting. 
In sequel we give a 
%and a polynomial time complexity 
method to check conformance between an IUT and its specification, both given by VPTS models, and using the more general conformance relation over VPLs.

% !TeX spellcheck=en_US
\subsection{A General Conformance Relation for VPTS models }\label{subsec:conf}

The more general conformance relation is defined on subset of words specified by a tester. 
Informally, consider a language $D$, the set of ``desirable'' behaviors, and a language $F$, the set of ``forbidden'' behaviors, of a system. 
If we have a specification VPTS $\yS$ and an implementation VPTS $\yI$ we want to say that $\yI$  \emph{conforms} to $\yS$ according to $(D,F)$ if no undesired behavior in $F$ that is observable in $\yI$  is specified in $\yS$, and all desired behaviors in $D$ that are observable in $\yI$ are specified in $\yS$.
This leads to the following definition.
\begin{defi}\label{def:conf}
Let $L$ be an alphabet, and let $D, F\ysse L^\star$.
Let $\yS$ and $\yI$ be VPTSs over $L$.
We say that \emph{$\yI$ $(D,F)$-visibly conforms to $\yS$}, written $\yI\yconf{D}{F} \yS$, if and only if
\begin{enumerate}
\item $\ysi\in otr(\yI)\cap F$, then $\ysi\not\in otr(\yS)$;
\item $\ysi\in otr(\yI)\cap D$, then $\ysi\in otr(\yS)$.
\end{enumerate}
\end{defi}

We note an equivalent way of expressing these conditions that may also be useful.
Recall that the complement of $otr(\yS)$ is $\ycomp{otr}(\yS)=L^\star-otr(\yS)$.
\begin{prop}\label{prop:equiv-conf}
Let $\yS$ and $\yI$ be VPTSs over $L$ and let $D, F\ysse L^\star$.
Then $\yI\yconf{D}{F} \yS$ if and only if
$otr(\yI)\cap \big[(D\cap\ycomp{otr}(\yS))\cup(F\cap otr(\yS))\big]=\yemp$.
\end{prop}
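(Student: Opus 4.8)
The plan is to unwind both definitions and show the two one-line conditions of Definition~\ref{def:conf} are jointly equivalent to the single set-theoretic identity. This is a purely propositional/set-theoretic manipulation, so the proof should be short; the only care needed is to handle the biconditional in both directions cleanly.

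First I would observe that condition (1) of Definition~\ref{def:conf}, namely ``$\ysi\in otr(\yI)\cap F$ implies $\ysi\not\in otr(\yS)$'', says exactly that no $\ysi$ lies simultaneously in $otr(\yI)$, in $F$, and in $otr(\yS)$; that is, $otr(\yI)\cap F\cap otr(\yS)=\yemp$, or equivalently $otr(\yI)\cap (F\cap otr(\yS))=\yemp$. Similarly, condition (2), ``$\ysi\in otr(\yI)\cap D$ implies $\ysi\in otr(\yS)$'', is the contrapositive-flavored statement that no $\ysi$ lies in $otr(\yI)$, in $D$, and outside $otr(\yS)$; since ``outside $otr(\yS)$'' means $\ysi\in\ycomp{otr}(\yS)=L^\star-otr(\yS)$, this is $otr(\yI)\cap D\cap\ycomp{otr}(\yS)=\yemp$, i.e. $otr(\yI)\cap(D\cap\ycomp{otr}(\yS))=\yemp$. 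Each of these equivalences is just the tautology ``($P$ and $Q$) implies $R$'' iff ``($P$ and $Q$ and $\neg R$) is impossible''.

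Then I would combine the two: $\yI\yconf{D}{F}\yS$ holds iff both $otr(\yI)\cap(F\cap otr(\yS))=\yemp$ and $otr(\yI)\cap(D\cap\ycomp{otr}(\yS))=\yemp$. Two sets each having empty intersection with $otr(\yI)$ is equivalent to their union having empty intersection with $otr(\yI)$, because $otr(\yI)\cap(A\cup B)=(otr(\yI)\cap A)\cup(otr(\yI)\cap B)$ and a union of two sets is empty iff both are empty. Hence the conjunction is equivalent to
\[
otr(\yI)\cap\big[(F\cap otr(\yS))\cup(D\cap\ycomp{otr}(\yS))\big]=\yemp,
\]
which, after reordering the union, is exactly the claimed identity $otr(\yI)\cap\big[(D\cap\ycomp{otr}(\yS))\cup(F\cap otr(\yS))\big]=\yemp$. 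This completes the argument.

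There is essentially no main obstacle here: the statement is a routine reformulation, and the only thing to be careful about is not conflating $\ycomp{otr}(\yS)$ with some other complement — one must use precisely the definition $\ycomp{otr}(\yS)=L^\star-otr(\yS)$ recalled just before the proposition, so that ``$\ysi\notin otr(\yS)$'' and ``$\ysi\in\ycomp{otr}(\yS)$'' are interchangeable for words $\ysi\in L^\star$ (and all observable traces are words over $L$). I would also note in passing that both implications of the biconditional are obtained simultaneously by this chain of ``iff'' steps, so no separate converse argument is needed.
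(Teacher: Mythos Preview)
Your proposal is correct and follows essentially the same approach as the paper's own proof: both rewrite each condition of Definition~\ref{def:conf} as an emptiness of a triple intersection, then use distributivity and the fact that a union is empty iff both parts are. The paper is simply terser, stating these equivalences in two lines without spelling out the underlying propositional tautologies.
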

\begin{proof}
From Definition~\ref{def:conf} we readily get $\yI\yconf{D}{F} \yS$ if and only if  $otr(\yI)\cap F\cap otr(\yS)=\yemp$ and 
$otr(\yI)\cap D\cap \ycomp{otr}(\yS)=\yemp$.
And this holds if and only if 
$$\yemp=\big[otr(\yI)\cap F\cap otr(\yS)\big]\cup\big[otr(\yI)\cap D\cap \ycomp{otr}(\yS)\big]=otr(\yI)\cap \big[(D\cap\ycomp{otr}(\yS))\cup(F\cap otr(\yS))\big],$$
as desired.
\end{proof}

\begin{exam}
\label{exemplo3}
Let $\yS$ be an IOVPTS specification depicted in Figure~\ref{iovpts2}, where $L_I=\{a,b\}$, $L_U=\{x\}$, $L_c=\{a\}$, $L_r=\{b,x\}$ and $L_i=\emptyset$.
Also, $S_{in}=\{s_0\}$ and $\yGa=\{A\}$. 
% !TeX spellcheck = en_US
\begin{figure}[tb]
\center

\begin{tikzpicture}[node distance=1cm, auto,scale=.6,inner sep=1pt]
  \node[ initial by arrow, initial text={}, punkt] (q0) {$s_0$};
  \node[punkt, inner sep=1pt,right=2.5cm of q0] (q1) {$s_1$};  
  \node[punkt, inner sep=1pt,below right=1.5cm of q0] (q2) {$s_2$};  

\path (q0)    edge [ pil, left=50]
                	node[pil,above]{$b/\ypop{A}$} (q1);

\path (q2)    edge [ pil, left=50]
                 	node[pil,right]{$b/\ypop{A}$} (q1);               
\path (q1)    edge [ pil, bend left=50]
                	node[pil]{$b/\ypop{A}$} (q2);
                	                	
\path (q0)    edge [loop above] node   {$a/\ypush{A}$} (q0);
\path (q1)    edge [loop above] node   {$a/\ypush{A}$} (q1);

\path (q0)    edge [ pil, bend right=50]
                	node[below left]{$x/\ypop{A}$} (q2);
\path (q2)    edge [ pil,  right=90]
                	node[pil,above right]{$a/\ypush{A}$} (q0);

%\path (q1)    edge [ pil, bend left=50]
%                	node[pil,below]{$\tau$} (q0);
%\path (cof)    edge [pil,bend right=30]   	node[anchor=north,right]{$\tau$} (s0);

\end{tikzpicture}
\caption{An  IOVPTS specification $\yS$ with $L_I=\{a,b\}$ and $L_U=\{x\}$.}

\label{iovpts2}
\end{figure}
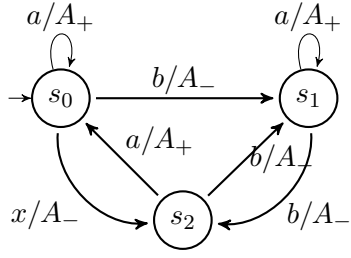

Take the languages $D=\{a^nb^nx: n\geq 1\}$ and $F=\{a^nb^{n+1}:n\geq 0\}$. 
This says that any behavior consisting of a block of $a$s followed by an equal length block of $b$s and terminating by an $x$, is a desirable behavior. 
Any block of $a$s followed by a lengthier block of $b$s is undesirable.
We  want to check whether the implementation $\yI$ conforms to the specification $\yS$ with respect to the  sets of behaviors described by $D$ and $F$.
That is, we want to check whether $\yI\yconf{D}{F}\yS$.  
% !TeX spellcheck = en_US
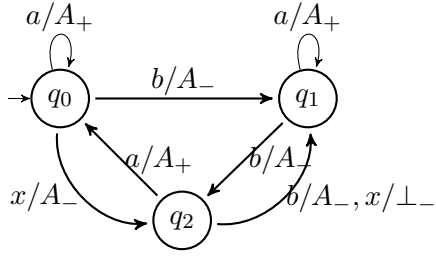
\begin{figure}[tb]
\center

\begin{tikzpicture}[node distance=1cm, auto,scale=.6,inner sep=1pt]
  \node[ initial by arrow, initial text={}, punkt] (q0) {$q_0$};
  \node[punkt, inner sep=1pt,right=2.5cm of q0] (q1) {$q_1$};  
  \node[punkt, inner sep=1pt,below right=1.5cm of q0] (q2) {$q_2$};  

\path (q0)    edge [ pil, left=50]
                	node[pil,above]{$b/\ypop{A}$} (q1);

\path (q2)    edge [ pil, bend right=50]
                 	node[pil,right]{$b/\ypop{A},x/\ypop{\bot}$} (q1);               
\path (q1)    edge [ pil, right=50]
                	node[pil]{$b/\ypop{A}$} (q2);
                	                	
\path (q0)    edge [loop above] node   {$a/\ypush{A}$} (q0);
\path (q1)    edge [loop above] node   {$a/\ypush{A}$} (q1);

\path (q0)    edge [ pil, bend right=50]
                	node[pil,left]{$x/\ypop{A}$} (q2);
\path (q2)    edge [ pil,  right=50]
                	node[pil,right]{$a/\ypush{A}$} (q0);

%\path (q1)    edge [ pil, bend left=50]
%                	node[pil,below]{$\tau$} (q0);
%\path (cof)    edge [pil,bend right=30]   	node[anchor=north,right]{$\tau$} (s0);

\end{tikzpicture}
\caption{An implementation IOVPTS $\yI$ with $L_I=\{a,b\}$ and $L_U=\{x\}$.}

\label{impl}
\end{figure}

First, we obtain the  VPA $\ycomp{\yS}$ depicted in Figure~\ref{compiovpts2}. 
Since $\yS$ is  deterministic and all its states are final, we just add a new state $err$ to $\ycomp{\yS}$, and for any missing transitions in $\yS$ we add corresponding transitions ending at $err$ in  $\ycomp{\yS}$. 
It is not hard to see that the language accepted by  $\ycomp{\yS}$ is $\ycomp{otr}(\yS)$. 
\begin{figure}[tb]
\center

\begin{tikzpicture}[node distance=1cm, auto,scale=.6,inner sep=1pt]
  \node[ initial by arrow, initial text={}, punkt] (q0) {$\bar{s_0}$};
  \node[punkt, inner sep=1pt,right=2.5cm of q0] (q1) {$\bar{s_1}$};  
  \node[punkt, inner sep=1pt,below right=1.5cm of q0] (q2) {$\bar{s_2}$};  
  \node[punkt, accepting, inner sep=1pt,below=3cm of q1] (q3) {$err$};  
  
\path (q0)    edge [ pil, bend left=30]
                	node[pil,above]{$b/\ypop{A}$} (q1);

\path (q2)    edge [ pil, bend left=30]
                 	node[pil,right]{$b/\ypop{A}$} (q1);               
\path (q1)    edge [ pil, bend left=30]
                	node[pil,right]{$b/\ypop{A}$} (q2);
                	                	
\path (q0)    edge [loop above] node   {$a/\ypush{A}$} (q0);
\path (q1)    edge [loop above] node   {$a/\ypush{A}$} (q1);

\path (q0)    edge [ pil, bend right=40]
                	node[pil,above right]{$x/\ypop{A}$} (q2);
\path (q2)    edge [ pil,  bend right=40]
                	node[pil,above right]{$a/\ypush{A}$} (q0);

\path (q2)    edge [ pil, right=30]
                 	node[pil,pos=.35,above right]{$x/\ypop{A}$} node[pil,below left]{$b/\ypop{\bot}$} (q3);  
\path (q2)    edge [ pil, right=30]
					node[pil,pos=.65,above right]{$x/\ypop{\bot}$} node[pil,below left]{$b/\ypop{\bot}$} (q3);

\path (q1)    edge [ pil, bend left=50]
                 	node[pil,pos=.3,right]{$x/\ypop{A}$} (q3);         
\path (q1)    edge [ pil, pos=.5,bend left=50]
					node[pil,right]{$x/\ypop{\bot}$} (q3);         
\path (q1)    edge [ pil, pos=.7,bend left=50]
					node[pil,right]{$b/\ypop{\bot}$} (q3);

\path (q0)    edge [ pil, bend right=50]
                 	node[pil,pos=.45, below  left]{$b/\ypop{\bot}$} (q3);         
\path (q0)    edge [ pil, bend right=50]
					node[pil,pos=.55,below left]{$x/\ypop{\bot}$} (q3);

  \path (q3)    edge [ loop below]
                   	node[pil]{$a/\ypush{A},b/\ypop{A},x/\ypop{A},b/\ypop{\bot},x/\ypop{\bot}$}  (q3);                        	
                	  
%\path (cof)    edge [pil,bend right=30]   	node[anchor=north,right]{$\tau$} (s0);

\end{tikzpicture}
\caption{The VPA accepting $\ycomp{otr}(\yS)$ for the IOVPTS $\yS$ of Figure~\ref{iovpts2}.}

\label{compiovpts2}
\end{figure}
Again, it is easy to see from Figure~\ref{iovpts2} that  $a^nb^{n+1}\not\in otr(\yS)$, for all $n\geq 0$.
So, $F\cap otr(\yS) = \emptyset$. 
Also we see that the VPA $\yD$, depicted at Figure~\ref{vpaD}, accepts the language $D$ and that $D \subseteq \ycomp{otr}(\yS)$. 
Then the VPA $\yD$ accepts the language $T=D\cap \ycomp{otr}(\yS)=(D\cap\ycomp{otr}(\yS))\cup(F\cap otr(\yS))$.
\begin{figure}[tb]
\center

\begin{tikzpicture}[node distance=1cm, auto,scale=.6,inner sep=1pt]
  \node[ initial by arrow, initial text={}, punkt] (q0) {$d_0$};
  \node[punkt, inner sep=1pt,right=2cm of q0] (q1) {$d_1$};  
    \node[punkt, accepting, inner sep=1pt,right=2cm of q1] (q2) {$d_2$};  
    
\path (q0)    edge [ pil, left=50]
                	node[pil,above]{$b/\ypop{A}$} (q1);
\path (q1)    edge [ pil, left=50]
                	node[pil,above]{$x/\ypop{\bot}$} (q2);
                	               	
\path (q0)    edge [loop above] node   {$a/\ypush{A}$} (q0);
\path (q1)    edge [loop above] node   {$b/\ypop{A}$} (q1);

%\path (cof)    edge [pil,bend right=30]   	node[anchor=north,right]{$\tau$} (s0);

\end{tikzpicture}
\caption{The VPA $\yD$ accepting $D=\{a^nb^nx | n\geq 1 \}$.}

\label{vpaD}
\end{figure}

Now  let  $\yI$ be the implementation  depicted in Figure~\ref{impl}. 
A simple inspection also shows that $aabbx$ is accepted by $\yI$, and we also have $aabbx\in D$.
Hence,  $otr(\yI) \cap D\cap\ycomp{otr}(\yS) =otr(\yI)\cap T\neq \yemp$, and Proposition~\ref{prop:equiv-conf}  implies that $\yI\yconf{D}{F} \yS$ does not hold.

On the other hand, if we assume an implementation $\yI$  that is isomorphic to $\yS$, $\yI$ would not have the transition $\ytr{q_2}{x/\bot}{q_1}$ and then $aabbx$ would not be an observable behavior of $\yI$. 
Actually, in this case,  $otr(\yI) \cap D\cap\ycomp{otr}(\yS) = \yemp$. 
So that now 
$otr(\yI)\cap \big[(D\cap\ycomp{otr}(\yS))\cup(F\cap otr(\yS))\big]=\yemp$, 
and therefore $\yI\yconf{D}{F} \yS$, as expected.  
\yfim
\end{exam} 

\subsection{Fault Model over Visibly Pushdown Languages}\label{subsec:fault-models} 
% and Test Suite Completeness}\label{subsec:fault-models}

When testing pushdown reactive systems we have notice that not only their natural reactive behavior with the environment must considered but also we need to deal with a potentially infinite memory. 
A fault model for systems of this nature must then be able to find faults in IUTs according to their corresponding specification modeled by VPTSs, \emph{i.e.},  
the a fault model must provide a fault detection in the testing process. 
A sequence of symbols that can detect faults in VPTS models is called a \emph{test case}, and a set of test cases gives the notion of a \emph{test suite}. 
%We first state the general definition of a test suite.
%\begin{defi}\label{def:testsuite}
%	Let $L$ be a set of symbols.
%	A test suite $T$ over $L$ is a language over $L$, \emph{i.e.} $T\ysse L^\star$. 
%	Each $\ysi\in T$ is called a \emph{test case}.
%\end{defi}
So, a test suite $T$ over an alphabet $L$ is a language over $L$, \emph{i.e.} $T\ysse L^\star$, should be engineered  to detect faulty observable behavior of any given IUT, when compared to what has been determined by a specification.
In this case, $T$ can be seen as specifying a fault model, in the sense that test cases in $T$ represent faulty observable behaviors. 
In particular, if $T$ is a VPL, then it can be specified by a VPA $\yA$.
Alternatively, we could specify $T$ by a finite set of VPAs, so that the union of all the undesirable behaviors specified by these VPAs comprise the fault model.

Next we say that an implementation $\yI$ passes a test suite $T$ if no observable behavior of $\yI$ is a harmful behavior present in $T$. 
\begin{defi}\label{def:adherence}
	Let $T$ be a test suite over an alphabet $L$.
	A VPTS $\yQ$ over $L$ \emph{passes} to $T$ if  $\ysi\not\in T$ for all $\ysi\in otr(\yQ)$.
	Further, an IOVPTS $\yI$ over $L$ passes to $T$ if its underlying VPTS passes to $T$.
\end{defi}
In a testing process we also desire test suites to be sound, \emph{i.e.}, when $\yI$ passes a test suite $T$ we always have that $\yI$-visibly conforms to $\yS$. 
The opposite direction is also desirable, that is, if $\yI$-visibly conforms to $\yS$ then $\yI$ passes the test suite $T$. 
\begin{defi}\label{def:complete}
	Let $L$ be an alphabet and let $T$ be a test suite over $L$.
	Also let $\yS$ be a specification VPTS over $L$,  and let $D, F\ysse L^\star$ be languages over $L$.
	% \avm{Precisa $L^+$, ou pode ser $L^\star$?}
	We say that:
	\begin{enumerate}
		\item  $T$ is \emph{sound} for $\yS$ and $(D,F)$  if 
		$\yI$ passes to $T$ implies $\yI \yconf{D}{F} \yS$, for all VPTS $\yI$ over $L$.
		\item $T$ is \emph{exhaustive} for $\yS$ and $(D,F)$ if $\yI \yconf{D}{F} \yS$ implies that $\yI$ passes to $T$, for all VPTS $\yI$ over $L$.
		\item $T$ is \emph{complete} for $\yS$ and $(D,F)$ if it is both sound and exhaustive 
		for $\yS$ and $(D,F)$.
	\end{enumerate}
\end{defi}
%Note that, for convenience, adherence is defined in the negative, that is, $\yI$ adheres to $T$ if we have $otr(\yI)\cap T=\yemp$. 
%Of course we could reverse the notions of soundness and exhaustiveness by substituting $L^\star-T$ for $T$ throughout.
Notice that an IUT $\yI$ passes to a test suite $T$ when $otr(\yI)\cap T=\yemp$. 

%Now we construct an IOVPFM which is complete 
%%relatively to 
%for a given specification $\yS$.
%
%\begin{lemm}\label{lemm:ioco-complete} 
%	Let $\yS\in \yiovp{L_I}{L_U}$ be deterministic with $n$ states. 
%	We can effectively construct an IOVPFM $\yT$  which is {\bf ioco-like} complete for $\yS$.
%	Moreover, $\yT$ is deterministic  and has $n+1$ states.
%\end{lemm}

%It comes as no surprise 
Now we can show that the Proposition~\ref{prop:equiv-conf} can construct a test suite which is always complete, and also unique, for a given specification $\yS$. 
%But, furthermore, we will also show that it is unique.
\begin{lemm}\label{lemm:always-complete}
	Let $\yS$ be a specification VPTS over $L$, and let $D$, $F\ysse L^\star$ be a pair of languages over $L$.
	Then, the set 
	$\big[(D\cap\ycomp{otr}(\yS))\cup(F\cap otr(\yS))\big]$ is the only complete test suite for $\yS$ and $(D,F)$. 
\end{lemm}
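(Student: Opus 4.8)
The statement asserts two things: that the set $M = (D\cap\ycomp{otr}(\yS))\cup(F\cap otr(\yS))$ is complete for $\yS$ and $(D,F)$, and that it is the \emph{only} such test suite. The plan is to handle completeness first by invoking Proposition~\ref{prop:equiv-conf}, and then to prove uniqueness by a direct argument: any test suite $T$ that is complete must satisfy $\ysi\in T$ exactly when there is some IUT witnessing a nonconformance via $\ysi$, and since single-word IUTs can realize each individual $\ysi\in M$, this forces $T = M$.

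\textbf{Completeness.} First I would take $T = M$ and show soundness and exhaustiveness together. By Definition~\ref{def:adherence}, a VPTS $\yI$ passes $T$ iff $otr(\yI)\cap T = \yemp$, i.e.\ iff $otr(\yI)\cap\big[(D\cap\ycomp{otr}(\yS))\cup(F\cap otr(\yS))\big] = \yemp$. By Proposition~\ref{prop:equiv-conf}, this last equation holds iff $\yI\yconf{D}{F}\yS$. Hence $\yI$ passes $T$ iff $\yI\yconf{D}{F}\yS$, which gives both directions of Definition~\ref{def:complete}: soundness ($\yI$ passes $T$ $\Rightarrow$ $\yI\yconf{D}{F}\yS$) and exhaustiveness ($\yI\yconf{D}{F}\yS$ $\Rightarrow$ $\yI$ passes $T$). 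So $T=M$ is complete.

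\textbf{Uniqueness.} Now suppose $T'$ is any complete test suite for $\yS$ and $(D,F)$. I want $T' = M$. For the inclusion $M\ysse T'$: fix $\ysi\in M$ and build a ``single-trace'' VPTS $\yI_\ysi$ whose observable semantics is exactly the set of prefixes of $\ysi$ (or, more simply, a VPTS with $otr(\yI_\ysi)=\{\ysi\}\cup\text{(prefixes)}$; one can read off push/pop/simple transitions from the call/return structure of $\ysi$ since $L$ is partitioned, laying $\ysi$ out as a linear chain of states). Since $\ysi\in M$, Proposition~\ref{prop:equiv-conf} shows $\yI_\ysi$ does \emph{not} $(D,F)$-visibly conform to $\yS$ — indeed $otr(\yI_\ysi)\cap M \ni \ysi$. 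Because $T'$ is sound, $\yI_\ysi$ cannot pass $T'$ (else it would conform), so $otr(\yI_\ysi)\cap T'\neq\yemp$; it then suffices to arrange the construction so that the only way this intersection is nonempty is via $\ysi$ itself — which one gets by noting that any proper prefix $\rho$ of $\ysi$ that lay in $T'$ would already force a non-pass, and by choosing $\yI_\ysi$ carefully (or by iterating on shortest such witnesses) one concludes $\ysi\in T'$. For the reverse inclusion $T'\ysse M$: take $\ysi\in T'$ and consider again $\yI_\ysi$; since $\ysi\in otr(\yI_\ysi)\cap T'$, the VPTS $\yI_\ysi$ does not pass $T'$, so by exhaustiveness $\yI_\ysi$ does not $(D,F)$-visibly conform to $\yS$, hence by Proposition~\ref{prop:equiv-conf} $otr(\yI_\ysi)\cap M\neq\yemp$, and a prefix-minimality argument on the witness again pins this down to $\ysi\in M$. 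Combining the two inclusions gives $T' = M$.

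\textbf{Main obstacle.} The completeness half is essentially immediate from Proposition~\ref{prop:equiv-conf}. The real work is the uniqueness direction, and specifically the need to exhibit, for each word $\ysi$, a VPTS whose observable behavior is controlled tightly enough that the only possible point of contact with a given test suite is $\ysi$ (and to make sure proper prefixes do not interfere). The cleanest route is probably to work with the minimal-length witness in $otr(\yI_\ysi)\cap T'$ (resp.\ $\cap M$) and argue that it must equal $\ysi$ because $otr(\yI_\ysi)$ is just the prefix-closure of $\{\ysi\}$ and any shorter element is a proper prefix that could be handled by shrinking $\ysi$; alternatively, if one is willing to allow test suites to be arbitrary subsets of $L^\star$ with no structural constraint, a slicker argument is: for each fixed $\ysi\in L^\star$, the one-word IUT $\yI$ with $otr(\yI)$ the prefix closure of $\{\ysi\}$ satisfies ``$\yI$ passes $T'$'' iff ``no prefix of $\ysi$ lies in $T'$'', and simultaneously ``$\yI\yconf{D}{F}\yS$'' iff ``no prefix of $\ysi$ lies in $M$''; completeness of $T'$ equates these for all $\ysi$, and taking $\ysi$ ranging over all words (and using that both $T'$ and $M$ are determined by their prefix-membership predicates) yields $T'=M$. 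I would present the uniqueness argument in this second, more economical form.
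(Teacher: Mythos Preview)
Your completeness argument is correct and matches the paper's: both simply invoke Proposition~\ref{prop:equiv-conf} together with Definition~\ref{def:adherence}.

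The uniqueness argument, however, has a genuine gap. From your ``slicker'' route you correctly derive, for every $\sigma\in L^\star$, the equivalence ``some prefix of $\sigma$ lies in $T'$'' $\Leftrightarrow$ ``some prefix of $\sigma$ lies in $M$'', but the final claim that this forces $T'=M$ is false: take $M=\{ab\}$ and $T'=\{ab,abc\}$; for every $\sigma$ the two predicates coincide (both say ``$ab$ is a prefix of $\sigma$''), yet $M\neq T'$. Your minimal-length variant hits the same obstruction, since learning that some \emph{proper} prefix of $\sigma$ lies in $T'$ tells you nothing about whether $\sigma$ itself does. In fact the gap cannot be repaired, because the uniqueness claim is not true as stated: observable trace sets of VPTSs are prefix-closed, so adjoining to any complete test suite $T$ a word that properly extends some word already in $T$ yields another complete test suite (any $\yI$ whose traces meet the new word already meet $T$). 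Concretely, with $D=\{ab\}$, $F=\yemp$, and any $\yS$ with $ab\notin otr(\yS)$, both $\{ab\}$ and $\{ab,abc\}$ are complete for $\yS$ and $(D,F)$. The paper's own argument shares the same gap: it establishes $otr(\yI)\cap Z=\yemp \Leftrightarrow otr(\yI)\cap T=\yemp$ for all $\yI$, picks $\sigma\in T\setminus Z$, builds $\yQ$ with $\sigma\in otr(\yQ)$, and declares a contradiction from $\sigma\in otr(\yQ)\cap T$ and $\sigma\notin otr(\yQ)\cap Z$ --- but that is no contradiction, since a proper prefix of $\sigma$ may well lie in $Z$, making $otr(\yQ)\cap Z$ nonempty too.
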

\begin{proof}
	Write $T=\big[(D\cap\ycomp{otr}(\yS))\cup(F\cap otr(\yS))\big]$,
	%Let $W\ysse L^\star$ be any test suite over $L$, 
	and let $\yI$ be any implementation VPTS over $L$.
	From Definition~\ref{def:adherence}, we know that $\yI$ adheres to $T$ if and only if $otr(\yI)\cap T=\yemp$.
	From Proposition~\ref{prop:equiv-conf} we get that $\yI\yconf{D}{F} \yS$ if and only if $otr(\yI)\cap T=\yemp$.
	%Hence, if $W=T$ 
	Hence, $\yI$ adheres to $T$ if and only if $\yI\yconf{D}{F} \yS$.
	Since $\yI$ was arbitrary, from Definition~\ref{def:complete} we conclude that $T$ is a complete test suite for $\yS$ and $(D,F)$.
	
	Now, take another test suite $Z\ysse L^\star$, with $Z\neq T$.
	For the sake of contradiction, assume that $Z$ is also complete for $\yS$ and $(D,F)$.
	Fix any implementation $\yI$.
	Since $Z$ is complete, Definition~\ref{def:complete} says that 
	$\yI$ adheres to $Z$ if and only if $\yI\yconf{D}{F} \yS$.
	Using Proposition~\ref{prop:equiv-conf} we know that $\yI\yconf{D}{F} \yS$ if and only if $otr(\yI)\cap T=\yemp$.
	Hence, $\yI$ adheres to $Z$ if and only if $otr(\yI)\cap T=\yemp$.
	From Definition~\ref{def:adherence} we know that $\yI$ adheres to $Z$  if and only if  $otr(\yI)\cap Z=\yemp$.
	We conclude that $otr(\yI)\cap Z=\yemp$ if and only if $otr(\yI)\cap T=\yemp$.
	But $Z\neq T$ gives some $\ysi\in L^\star$ such that $\ysi\in T$ and $\ysi\not\in Z$.
	The case $\ysi\not\in T$ and $\ysi\in Z$ is entirely analogous. 
	We now have $\ysi\in T\cap \ycomp{Z}$.
	If we can construct an implementation VPTS $\yQ$ over $L$  with $\ysi\in otr(\yQ)$, then we have reached a contradiction 
	because we would have $\ysi\in otr(\yQ)\cap T$ and $\ysi\not\in otr(\yQ)\cap Z$.
	But that is simple. 
	Let $\ysi=x_1x_2\ldots x_k$, with $k\geq 0$ and $x_i\in L$ ($1\leq i\leq k$). 
	Define $L_c=L_r=\yemp$ and $L_i=L$, and let $\yQ=\yvpts{Q}{\{q_0\}}{L}{\yemp}{R}$, where $Q=\{q_i\yst 0\leq i\leq k\}$, and $R=\{(q_{i-1},x_i,\yait,q_i)\yst 1\leq i\leq k\}$.
	Clearly, $\ysi\in otr(\yQ)$, concluding the proof. 
\end{proof}
Lemma~\ref{lemm:always-complete} says that the test suite $T=\big[(D\cap\ycomp{otr}(\yS))\cup(F\cap otr(\yS))\big]$ is complete for the specification $\yS$ and the pair of languages $(D,F)$.
So, given an implementation $\yI$, checking if it $(D,F)$-visibly conforms to $\yS$ is equivalent  to checking if $\yI$ adheres to $T$ and, by Definition~\ref{def:adherence},  the latter is equivalent to checking that we have $otr(\yI)\cap T=\yemp$.

%We also note that, in Lemma~\ref{lemm:always-complete}, in order to construct $\yQ$ with $\ysi\in otr(\yQ)$ it was crucial that we had no restrictions on the size of $\yQ$, since we have no control over the size of the witness $\ysi$. 
% !TeX spellcheck=en_US
\subsection{Checking Visual Conformance for VPTS models}\label{sec:suites-complexity}
%\subsection{On the Complexity of Test Suites}\label{sec:suites-complexity}

When checking conformance one important issue is the size of test suites, relatively to the size of the given specification.
Let $\yS=\yvptsS$  be a VPTS.
A reasonable measure of the size of $\yS$ would be the number of symbols required to write down a complete syntactic description of $\yS$.
Assume that $\yS$ has $m=\vert T\vert$ transitions, $n=\vert S\vert$ states, $\ell=\vert L\vert$ action symbols, and $p=\vert\yGa\vert$ stack symbols. 
Since any transition can be written using $\yoh{\ln (n\ell p)}$ symbols, 
the size of $\yS$ is  $\yoh{m\ln (n\ell p)}$.
%From Remark~\ref{rema:lte-finite}, 
We also see that $n$ is $\yoh{m}$ and, clearly, so are $\ell$ and $p$.
Thus, the size of $\yS$ is bounded by $\yoh{m\ln m}$.
%Clearly, $\yoh{n^2\ell p}$ is an upper bound on $t$, so that the size of $\yS$ is bounded by $\yoh{n^2\ell p\ln (n\ell p)}$.
If we fix the stack and action alphabets, then the size of the VPTS will be bounded by $\yoh{m}$.
In what follows, and with almost no prejudice, we will ignore the small logarithmic factor.%
\footnote{It is also customary to write $\yoh{m\ln m}$ as $\widetilde{\mathcal{O}}(m)$.
In the sequel, we can always replace $\yoh{\cdot}$ by $\widetilde{\mathcal{O}}(\cdot)$.} 

Given a specification $\yS$ and visibly pushdown languages, $D$ and $F$,  over $L$, Lemma~\ref{lemm:always-complete} says that the fault model $T$ is complete for $\yS$ and $(D,F)$, where $T=\big[(D\cap\ycomp{otr}(\yS))\cup(F\cap otr(\yS))\big]$.
Assume that $\yA_D$ and $\yA_F$ are deterministic VPAs with  $n_D$ and $n_F$ states, respectively, such that $L(\yA_D)=D$ and $L(\yA_F)=F$. 
Also, assume that $\yS$ is deterministic with $n_S$ states.
Proposition~\ref{prop:contracted-vpts} says that we may as well take $\yS$ as a contracted and deterministic VPTS. 
Then Proposition~\ref{prop:plts-pda} gives a deterministic VPA $\yA_1$  with $n_S$ states and such that $L(\yA_1)=otr(\yS)$. 
Using Proposition~\ref{prop:compl-vpa}, we can construct a deterministic VPA $\yB_1$ with $n_S +1$ states and such that $L(\yB_1)=\ycomp{L(\yA_1)}=\ycomp{otr}(\yS)$.
Using Proposition~\ref{prop:determ-complement} we can obtain a deterministic VPA $\yA_2$ with at most $n_S n_F$ states such that $L(\yA_2)=L(\yA_F)\cap L(\yA_1)=F\cap otr(\yS)$, and also can obtain a deterministic VPA $\yB_2$ with $(n_S +1) n_D$ states such that $L(\yB_2)=L(\yA_D)\cap L(\yB_1)=D\cap \ycomp{otr}(\yS)$. 
Proposition~\ref{prop:cup-vpa} then gives a deterministic VPA $\yT$ with $(n_Sn_F+1)(n_Sn_D+n_D+1)$  states and such that $L(\yT)=L(\yA_2)\cup L(\yB_2)=T$. 
Proposition~\ref{prop:cup-vpa} also says that $\yT$ is non-blocking and has no $\yeps$-moves. 
Further, Lemma~\ref{lemm:always-complete} says that $L(\yT)$ is a complete test suite for $\yS$ and $(D,F)$.
Next proposition establishes the construction of a complete test suite for a given specification $\yS$ relatively to a pair of visibly pushdown languages $(D,F)$.
\begin{prop}\label{prop:conf-poli}
	%Let $L$ be an alphabet with $|L|=n_L$.
	Let $\yS$ be a deterministic IOVPTS over $L$ with $n_S$ states. 
	%and $n_I$ states, respectively.
	Also let $\yA_D$ and $\yA_F$ be deterministic VPAs over $L$ with $n_D$ and $n_F$ states, respectively, such that $L(\yA_D)=D$ and  $L(\yA_F)=F$.
	Then, we can construct a deterministic, non-blocking VPA $\yT$ with at most $(n_Sn_F+1)(n_Sn_D+n_D+1)$ states and no $\yeps$-moves,  and such that $L(\yT)$ is a complete test suite for $\yS$ and $(D,F)$.
%	Moreover, there is an algorithm with polynomial time complexity $\yoh{(n_In_S^2n_Fn_D)^3}$ and that checks if $\yI \yconf{D}{F} \yS$. 
\end{prop}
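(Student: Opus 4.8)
The plan is to assemble the VPA $\yT$ by chaining together the closure constructions already proved in Section~\ref{sec:models}, exactly following the paragraph that precedes the proposition statement, and then invoke Lemma~\ref{lemm:always-complete} to conclude completeness. So first I would note that, by Proposition~\ref{prop:contracted-vpts}, we may assume the underlying VPTS of $\yS$ is contracted and deterministic without changing $otr(\yS)$ and without increasing the number of states, so it still has at most $n_S$ states. Then Proposition~\ref{prop:plts-pda} yields a deterministic VPA $\yA_1$ with $n_S$ states and $L(\yA_1)=otr(\yS)$.

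Next I would apply Proposition~\ref{prop:compl-vpa} to $\yA_1$, obtaining a deterministic, non-blocking VPA $\yB_1$ with $n_S+1$ states, no $\yeps$-moves, and $L(\yB_1)=\ycomp{otr}(\yS)$. Now apply Proposition~\ref{prop:cap-vpa} (=~Proposition~\ref{prop:determ-complement}) twice: to $\yA_F$ and $\yA_1$ to get a deterministic VPA $\yA_2$ with at most $n_S n_F$ states and $L(\yA_2)=F\cap otr(\yS)$; and to $\yA_D$ and $\yB_1$ to get a deterministic VPA $\yB_2$ with at most $n_D(n_S+1)=n_Sn_D+n_D$ states and $L(\yB_2)=D\cap\ycomp{otr}(\yS)$. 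Finally, Proposition~\ref{prop:cup-vpa} applied to $\yA_2$ and $\yB_2$ produces a deterministic VPA $\yT$ with at most $(n_Sn_F+1)(n_Sn_D+n_D+1)$ states, which is non-blocking and has no $\yeps$-moves, and satisfies $L(\yT)=L(\yA_2)\cup L(\yB_2)=(F\cap otr(\yS))\cup(D\cap\ycomp{otr}(\yS))$.

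It remains to observe that $L(\yT)$ is a complete test suite for $\yS$ and $(D,F)$. This is immediate from Lemma~\ref{lemm:always-complete}, which states that $\big[(D\cap\ycomp{otr}(\yS))\cup(F\cap otr(\yS))\big]$ is the (unique) complete test suite for $\yS$ and $(D,F)$; since $L(\yT)$ equals precisely this set, we are done. One small bookkeeping point: Proposition~\ref{prop:cup-vpa} requires the two input VPAs to be deterministic to guarantee the output is deterministic and $\yeps$-free, so I would make sure to track that $\yA_2$ and $\yB_2$ are indeed deterministic (which Proposition~\ref{prop:cap-vpa} guarantees, given $\yA_D,\yA_F,\yA_1,\yB_1$ are all deterministic).

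The main obstacle is not conceptual but arithmetic: verifying that the state counts compose to exactly the claimed bound $(n_Sn_F+1)(n_Sn_D+n_D+1)$. The ``$+1$'' terms come from the non-blocking completion inside Proposition~\ref{prop:cup-vpa} (which turns a $k$-state VPA into a $(k+1)$-state one before taking the product), so I would need to trace carefully whether the $+1$ is already absorbed — e.g. $\yB_1$ already has $n_S+1$ states from the complementation step, and the union construction adds another $+1$ to each factor, giving factors $n_Sn_F+1$ and $(n_Sn_D+n_D)+1$. I expect a little care is needed to confirm these match the statement and that no hidden blow-up occurs; everything else is a routine application of the already-established propositions.
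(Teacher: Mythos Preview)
Your proposal is correct and follows essentially the same approach as the paper: the paper's own proof simply refers back to the paragraph preceding the proposition, which is exactly the chain of constructions (Proposition~\ref{prop:contracted-vpts} $\to$ Proposition~\ref{prop:plts-pda} $\to$ Proposition~\ref{prop:compl-vpa} $\to$ Proposition~\ref{prop:determ-complement} twice $\to$ Proposition~\ref{prop:cup-vpa}, then Lemma~\ref{lemm:always-complete}) that you spell out. Your arithmetic tracking of the $+1$ terms is accurate and in fact more explicit than what the paper writes.
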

\begin{proof}
	The preceding discussion gives a  deterministic and non-blocking VPA $\yT$ with at most $n_T=(n_Sn_F+1)(n_Sn_D+n_D+1)$ states and no $\yeps$-moves, and such that $L(\yT)=T=\big[(D\cap\ycomp{otr}(\yS))\cup(F\cap otr(\yS))\big].$
%	Since $\yI$ is deterministic, using Propositions~\ref{prop:contracted-vpts} and~\ref{prop:plts-pda}, we can  get a  deterministic VPA $\yA$ with at most $n_I$ states, and  such that $otr(\yI)=L(\yA)$. 
%	From Proposition~\ref{prop:determ-complement}
%	we can construct a deterministic VPA $\yB$ with at most $n_In_T$ states, 
%	and such that $L(\yB)=L(\yA)\cap L(\yT)=otr(\yI)\cap T$.
	%
%	\alb{Usar o Algoritmo 1 e nao o emptyness problem.}
	%
%	The emptiness problem for a VPA is decidable in time $\yoh{n^3}$, where $n$ is the number of states in the VPA~\cite{alurm-visibly-2004,brevccr-multi-1996,lang-p-2011,caromp-2-2007}.
%	Hence, we can check whether  $\yI \yconf{D}{F} \yS$ in asymptotic time $\yoh{(n_In_S^2n_Fn_D)^3)}$. 
\end{proof}
%Note that, if the specification $\yS$ and the model languages $D$ and $F$ are fixed, so that $n_S$, $n_D$ and $n_F$ are constants, the algorithm runs in polynomial time $\yoh{n^3}$, where $n$ is the number of states in the implementation being tested for visual conformance.

%\section{Testing IUT Models for Visual  Conformance}\label{sec:testing}

Once we have a fault model at hand, which is complete for a given specification using visibly pushdown languages, then we can test whether IUTs conform to that specification under desirable and undesirable behaviors in a more general setting. 
In order to do so we will also need the notion of a \emph{balanced run}~\cite{bonifacio2022cleiej}. 
Given a VPTS $\yV$, and $p,q$ two states of $\yV$ we say that a string $\ysi\in L^\star$ induces a \emph{balanced run}~\cite{bonifacio2022cleiej} from $p$ to $q$ in $\yV$ if we have $\ytrtf{(p,\bot)}{\ysi}{\yV}{(q,\bot)}$. 

Therefore we can state the next theorem to test for visual conformance. 
%, following the checking method of previous work~\cite{bonifacio2021testing}. 
\begin{theo}(\cite{bonifacio2022cleiej})\label{lemm:vpts-test}
Let $\yS=\yiovpts{S_\yS}{\{s_0\}}{L_I}{L_U}{\yDe_\yS}{T_\yS}\in\yiovp{L_I}{L_U}$ be a deterministic specification, and let $\yI=\yiovpts{S_\yI}{I_{in}}{L_I}{L_U}{\yDe_\yI}{T_\yI}\in\yiovp{L_I}{L_U}$ be a deterministic IUT.
Also let $D,F$ be VPLs with their corresponding deterministic VPAs $\yA_\yD$ and $\yA_\yF$ such that $L(\yA_\yD)= D$ and $L(\yA_\yF)= F$. 
Then we can effectively decide whether $\yI \yconf{D}{F} S$ holds.
Further,  if  $\yI \yconf{D}{F} S$ does not hold then we can find 
%$\ysi\in otr(\yT)$ 
$\ysi\in L^\star$  that verify this condition (See Definition~\ref{def:conf}), \emph{i.e.}, $\ysi \in otr(\yI) \cap T$ showing that $otr(\yI) \cap T \neq \emptyset$. 
%(i) $\ysi\in otr(\yI)\cap F$, then $\ysi\not\in otr(\yS)$; and 
%(ii) $\ysi\in otr(\yI)\cap D$, then $\ysi\in otr(\yS)$.
%\\\alb{Ou colocar o modelo de falhas $T$ com $T\neq \emptyset$?}
%Further,  if  $\yI \yconf{D}{F} S$ does not hold, we can find $\ysi\in otr(\yS)$, $\ell\in L_U$ that verify this condition, \emph{i.e.}, $\ell\in\yout((q_0,\bot) \yafter \ysi)$  for some $q_0\in I_{in}$, and $\ell\not\in \yout((s_0,\bot) \yafter \ysi)$. 
\end{theo}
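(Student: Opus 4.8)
The plan is to reduce the decision problem to an emptiness test for a visibly pushdown language and, in the negative case, to read an explicit witness string off a non-empty such language. First I would invoke Lemma~\ref{lemm:always-complete}: it tells us that $T=\big[(D\cap\ycomp{otr}(\yS))\cup(F\cap otr(\yS))\big]$ is the (unique) complete test suite for $\yS$ and $(D,F)$. Hence, by Definitions~\ref{def:complete} and~\ref{def:adherence}, deciding $\yI\yconf{D}{F}\yS$ is the same as deciding whether $otr(\yI)\cap T=\yemp$; equivalently, by Proposition~\ref{prop:equiv-conf}, $\yI\yconf{D}{F}\yS$ fails exactly when some $\ysi$ lies in $otr(\yI)\cap T$, which is precisely the witness demanded by the theorem.

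Second, I would turn every language in sight into a VPA. By Proposition~\ref{prop:contracted-vpts} we may assume $\yS$ and $\yI$ are contracted and deterministic, and then Proposition~\ref{prop:plts-pda} yields a deterministic VPA $\yA_\yI$ with $L(\yA_\yI)=otr(\yI)$. Running through the chain of constructions described just before Proposition~\ref{prop:conf-poli} --- complement via Proposition~\ref{prop:compl-vpa}, intersections via Proposition~\ref{prop:determ-complement}, union via Proposition~\ref{prop:cup-vpa} --- Proposition~\ref{prop:conf-poli} gives a deterministic, non-blocking VPA $\yT$ with no $\yeps$-moves such that $L(\yT)=T$. Then Proposition~\ref{prop:cap-vpa} produces a VPA $\yP$ (the product $\yA_\yI\times\yT$) with $L(\yP)=otr(\yI)\cap T$, and if convenient Proposition~\ref{prop:no-eps} removes any residual $\yeps$-moves of $\yP$ that came from $\ytau$-transitions of $\yI$.

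Third, I would decide emptiness of $L(\yP)$. Since a VPA is in particular a pushdown automaton, $L(\yP)=\yemp$ is decidable by the standard context-free emptiness procedure: build the equivalent CFG and test productivity of its start symbol; moreover, when the start symbol is productive, a shortest terminal-generating leftmost derivation can be read off, yielding an explicit $\ysi\in L(\yP)=otr(\yI)\cap T\subseteq L^\star$. By the equivalences above, $L(\yP)=\yemp$ iff $\yI\yconf{D}{F}\yS$, so the decision is made; in the negative case, the extracted $\ysi$ is an observable trace of $\yI$ lying in $T$, hence a concrete violation of Definition~\ref{def:conf}. Here it helps to phrase membership of $\ysi$ in $otr(\yI)$ via a run of $\yA_\yI$ and membership in $otr(\yS)$ (inside $T$) via a balanced run of $\yS$, so that the witness directly exhibits which clause of Definition~\ref{def:conf} is broken.

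The main obstacle I expect is bookkeeping rather than a deep difficulty: one must make sure the product $\yA_\yI\times\yT$ is legitimate --- the two VPAs share the same partition $L=L_c\cup L_r\cup L_i$, so push/pop moves stay synchronized and the two stacks keep equal height (Propositions~\ref{prop:stack-size} and~\ref{prop:product-behavior}) --- and that the $\yeps$-moves induced by $\ytau$-transitions of $\yI$ neither affect the emptiness test nor pollute the extracted input word, which must be a genuine $\ysi\in L^\star$. Once the constructions are correctly chained, decidability and witness extraction are the usual facts about context-free languages.
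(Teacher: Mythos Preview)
Your argument is correct and covers everything the theorem asserts. The overall architecture matches the paper: reduce $\yI\yconf{D}{F}\yS$ to emptiness of $otr(\yI)\cap T$ via Lemma~\ref{lemm:always-complete} and Proposition~\ref{prop:equiv-conf}, realize $T$ as a deterministic VPA $\yT$ via Proposition~\ref{prop:conf-poli}, realize $otr(\yI)$ as a VPA via Propositions~\ref{prop:contracted-vpts} and~\ref{prop:plts-pda}, and intersect via Proposition~\ref{prop:cap-vpa}.

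The genuine divergence is in the last step. You test $L(\yP)=\yemp$ by the generic context-free route (convert to a CFG, check productivity of the start symbol, and read a witness off a terminal-generating derivation). The paper instead post-processes the product automaton $\yB$ so that acceptance forces an empty stack and no pop-on-$\bot$ moves remain, and then invokes the balanced-run algorithm of~\cite{bonifacio2022cleiej}: the question becomes whether some $\ysi$ induces a balanced run $\ytrut{(b_0,\bot)}{\ysi}{\yB}{(f,\bot)}$ from the initial state to a final state. What this buys the paper is a direct hook into the complexity bound $\yoh{n^3+m^2}$ quoted in the conclusion, since that bound comes from the cited balanced-run procedure rather than from a generic CFG emptiness test. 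Your approach is more self-contained and avoids the external reference, but if you were asked to match the paper's complexity claims you would need to argue separately that your CFG has size polynomial in $\yB$ and that the productivity test runs within the stated bound.
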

\begin{proof}
First let $L=L_I\cup L_U$. 
The proof of Lemma~\ref{lemm:always-complete} shows that $T$ is the only complete test suite for $\yS$ and $(D,F)$. 
From Proposition~\ref{prop:conf-poli} we can obtain a deterministic non-blocking fault model $\yT=\yiovpts{S_\yT}{T_{in}}{L_U}{L_I}{\yDe_\yT}{T_\yT}$ with at most $(n_Sn_F+1)(n_Sn_D+n_D+1)$ states and no $\yeps$-moves, 
and such that $L(\yT)$. 

Since $\yI$ is deterministic, and using Propositions~\ref{prop:contracted-vpts} and~\ref{prop:plts-pda}, we can obtain a  deterministic VPA $\yA$ with at most $n_I$ states, and such that $otr(\yI)=L(\yA)$. 
Then we can construct a deterministic VPA $\yB$ with at most $n_In_T$ states, 
and such that $L(\yB)=L(\yA)\cap L(\yT)=otr(\yI)\cap T$ using Proposition~\ref{prop:determ-complement}. 
From Definition~\ref{def:adherence} we know that to check whether  $\yI \yconf{D}{F} \yS$ is equivalent to checking if $L(\yB)=\emptyset$. 
That is, $\yI \yconf{D}{F} \yS$ does not hold if and only if $\ytrut{(b_0,\bot)}{\ysi}{\yB}{(f,\yal\bot)}$ for some $\ysi\in L^\star$, 
where $b_0$ is an initial state and $f$ is a final state of $\yB$, \emph{i.e.}, $\ysi \in L(\yB)$. 
Then we also see that in order to check whether $\yI \yconf{D}{F} \yS$ does not hold, %\textbf{vconf}  
it suffices to check whether a configuration $(f,\yal\bot)$ is reachable from some initial configuration $(b_0,\bot)$ of $\yB$. 

Now we can apply Algorithm~1 of~\cite{bonifacio2022cleiej}, following the same steps to modify $\yB$ in order to 
%reason this reachability problem. 
%Following such steps we can 
guarantee that the stack is empty after reaching a final state $f$ in $\yB$, and also that any pop move on an empty stack is eliminated from $\yB$. 
To easy the notation assume $\yB$ is the already VPA with these modifications. 
% that allow us to empty the stack after reaching a final state $f$ and also to avoid pop moves on an empty stack. 
We see that the problem has been reduced to find a balanced run $\ysi$ from $b_0$ to $f$ in $\yB$ if and only if $\ytrut{(b_0,\bot)}{\ysi}{\yB}{(f,\bot)}$. 
That is, we have reduced the \textbf{vconf} test to the problem of finding a string $\ysi$ that induces a balanced run from $b_0$ to $f$, where $b_0$ is the initial state of the VPA $\yB$ and $f$ is a final state of it, or indicate that such a string does not exist. \yfim
%
%Next we obtain a VPTS associated to VPA $\yT$ using  Proposition~\ref{prop:plts-pda} and apply the Algorithm~1 of~\cite{bonifacio2021testing}. 
%
%So, let $\yT=\yvpts{T}{I_{in}}{L}{\yGa}{\rho}$ be a VPTS we assume that $\yT$ has no pop transitions on the empty stack, that is, of the form $(p,x,\bot,q)$ where $x$ is a pop symbol.
\end{proof}

\begin{exam}\label{exemplo4}
	Recall Example~\ref{exemplo3}. 
	Again, let $\yS$ be a specification depicted in Figure~\ref{iovpts2}, and $D=\{a^nb^nx: n\geq 1\}$ be the desirable language, where the VPA $\yD$, depicted at Figure~\ref{vpaD}, accepts the language $D$. 
	We also recall that the VPA $\ycomp{\yS}$ depicted in Figure~\ref{compiovpts2} accepts the language $\ycomp{otr}(\yS)$. 
	Since $D \subseteq \ycomp{otr}(\yS)$ then the VPA $\yD$ accepts the language $D\cap \ycomp{otr}(\yS)$.
	
	Now assume a faulty language $F\neq \emptyset$ such that $F=\{a^+x\}$, in this case. 
	This says that any behavior consisting of a block of $a$s (at least one) terminating by an $x$, is an undesirable behavior. 
	So the VPA $\yF$, depicted at Figure~\ref{vpaFnew}, accepts the language $F$ and that $F \subseteq otr(\yS)$. 
	Then the VPA $\yF$ accepts the language $F\cap otr(\yS)$.
	Since the VPA $\yD$ accepts the language $D\cap \ycomp{otr}(\yS)$ and the VPA $\yF$ accepts the language  $F\cap otr(\yS)$, then the language $T=(D\cap\ycomp{otr}(\yS))\cup(F\cap otr(\yS)) = L(\yD) \cup L(\yF)$.
	\begin{figure}[tb]
\center

\begin{tikzpicture}[node distance=1cm, auto,scale=.6,inner sep=1pt]
  \node[ initial by arrow, initial text={}, punkt] (q0) {$f_0$};
  \node[punkt, accepting, inner sep=1pt,right=2cm of q0] (q1) {$f_1$};  
        
\path (q0)    edge [ pil, left=50]
                	node[pil,above]{$x/\ypop{A}$} (q1);
                	               	
\path (q0)    edge [loop above] node   {$a/\ypush{A}$} (q0);  

%\path (cof)    edge [pil,bend right=30]   	node[anchor=north,right]{$\tau$} (s0);

\end{tikzpicture}
\caption{The VPA $\yF$ accepting $F=\{a^+x\}$.}

\label{vpaFnew}
\end{figure}
	
	Similarly, we want to check whether an implementation $\yI$ conforms to the specification $\yS$ with respect to the sets of behaviors described by $D$ and $F$.
	That is, we want to check whether $\yI\yconf{D}{F}\yS$.  

	Here also assume an implementation $\yI$  that is isomorphic to $\yS$. 
	So $\yI$ does not have the transition $\ytr{q_2}{x/\bot}{q_1}$ and then the word $aabbx$ would not be an observable behavior of $\yI$. 
	In this scenario, $aabbx \notin otr(\yI)$ and so $otr(\yI)\cap \big[(D\cap\ycomp{otr}(\yS))\big]=\yemp$.  
	%$otr(\yI)\cap \big[(D\cap\ycomp{otr}(\yS))\cup(F\cap otr(\yS))\big]=\yemp$, 
	%and therefore $\yI\yconf{D}{F} \yS$, as expected.  
	The verdict means that any desirable behavior of $D$ that is present in $\yS$ must be a behavior of $\yI$. 
	In other way around if the desirable behavior of $D$ is not in $otr(\yS)$ so such desirable behavior must not be in $otr(\yI)$. 
		
	Now take the word $aax$. 
	We see that $aax$ is an observable behavior of $\yS$ and, clearly, it is also an observable behavior of $\yI$. 
	It is easy to see that $aax\notin D$, \emph{i.e.}, it is not a desirable behavior.  However, $aax\in F$ which means that this word represents an undesirable behavior. 
	Hence, in this case, $otr(\yI)\cap \big[(D\cap\ycomp{otr}(\yS))\cup(F\cap otr(\yS))\big] = otr(\yI) \cap (D \cup F) = otr(\yI)\cap T \neq\yemp$, and Proposition~\ref{prop:equiv-conf} implies that $\yI\yconf{D}{F} \yS$ does not hold, as expected.  
	Noted that, in this case, the implementation does not conform to the specification, even if they are isomorphic, since an observable fault behavior is present in the specification, and consequently, in the implementation $\yI$. \yfim
\end{exam}

% !TeX spellcheck=en_US
\section{Conclusion}\label{sec:conclusion}

Pushdown reactive systems are more complex because their behaviors are given by phrase structure rules in addition to their asynchronous interactions with the environment. 
Therefore testing activities are also more intricate when applied to systems of this nature. 
Several methods have been designed to treat the problem of conformance checking and test suite generation for regular reactive systems that can be specified by memoryless formal models. 
Some other previous approaches have addressed these same problems for pushdown reactive systems that have access to an infinite stack memory. 

Here we also study the latter class of reactive systems that can make use of a potentially infinite memory, but we deviated from previous works in the sense that we allow a more wide range of possibilities to define a fault model. 
In this case we make use of visibly pushdown languages to define desirable and undesirable behaviors to be checked over an implementation against to its corresponding specification. 
That is, we treated the problem of conformance checking for asynchronous system with a stack memory using the formalism of IOVPTS, and defined a more general conformance relation for these systems.
So we gave a method to generate test suites that can verify whether this more general conformance relation holds between a specification and a given IUT. 
Also we have shown that these test suites are complete, always giving a conclusive verdict, and also can be generated in polynomial time complexity. 

Our method was developed, and proved correct, using a reduction from the previous work based on {\bf ioco-like} theory, where a balanced run property is applied to obtain verdicts of conformance. 
Therefore our approach exhibits an asymptotic worst case time complexity that can be bounded by $\yoh{n^3+m^2}$, where $n$ and $m$ are proportional to the number of states and transitions from the product between implementations and specification models, respectively.

\end{document}